\newtheorem{theorem}{Theorem}
\newtheorem{cor}{Corollary}
\newtheorem{lemma}{Lemma}
\newtheorem{prop}{Proposition}
\theoremstyle{definition}
\newtheorem{definition}{Definition}
\newtheorem{remark}{Remark}
\newtheorem{example}{Example}
\pgfplotsset{compat=1.12}
\newcommand{\bE}{\bm{E}}
\newcommand{\bO}{\bm{O}}
\newcommand{\bW}{\bm{W}}
\newcommand{\bX}{\bm{X}}
\newcommand{\bY}{\bm{Y}}
\newcommand{\bZ}{\bm{Z}}
\newcommand{\bu}{\bm{u}}
\newcommand{\bx}{\bm{x}}
\newcommand{\by}{\bm{y}}
\newcommand{\bz}{\bm{z}}
\newcommand{\cA}{\mathcal{A}}
\newcommand{\cC}{\mathcal{C}}
\newcommand{\cD}{\mathcal{D}}
\newcommand{\cG}{\mathcal{G}}
\newcommand{\cH}{\mathcal{H}}
\newcommand{\cI}{\mathcal{I}}
\newcommand{\cM}{\mathcal{M}}
\newcommand{\cP}{\mathcal{P}}
\newcommand{\cQ}{\mathcal{Q}}
\newcommand{\cR}{\mathcal{R}}
\newcommand{\cT}{\mathcal{T}}
\newcommand{\cU}{\mathcal{U}}
\newcommand{\cV}{\mathcal{V}}
\newcommand{\cZ}{\mathcal{Z}}
\newcommand{\dd}{\mathsf{d}}
\newcommand{\psd}{\mathbb{S}_+}
\newcommand{\sym}{\mathbb{S}}
\newcommand{\bEx}{\ensuremath{\mathbb{E}}}
\newcommand{\ex}[1]{\ensuremath{\mathbb{E}\left[ #1\right]}}
\newcommand{\Vertiii}[1]{{\left\vert\kern-0.25ex\left\vert\kern-0.25ex\left\vert #1 
    \right\vert\kern-0.25ex\right\vert\kern-0.25ex\right\vert}}
\newcommand{\vertiii}[1]{{\vert\kern-0.25ex \vert\kern-0.25ex\vert #1 
    \vert\kern-0.25ex\vert\kern-0.25ex\vert}}
\DeclareMathOperator{\cov}{\mathsf{Cov}}
\DeclareMathOperator{\mmse}{\sf mmse}
\DeclareMathOperator{\var}{\sf Var}
\DeclareMathOperator{\diag}{\mathrm diag}
\newcommand{\fro}{\ensuremath{F}}
\DeclareMathOperator{\gtr}{tr}
\DeclareMathOperator{\gvec}{\mathsf{vec}}
\DeclareMathOperator{\vech}{\mathsf{vech}}
\DeclareMathOperator{\MMSE}{\mathsf{MMSE}}
\DeclareMathOperator{\FP}{\mathsf{FP}}
\newcommand{\reals}{\mathbb{R}}
\newcommand{\integers}{\mathbb{N}}
\newcommand{\eps}{\epsilon}
\newcommand{\normal}{\mathcal{N}}
\newcommand{\opnorm}{\@ifstar\@opnorms\@opnorm}
\newcommand{\@opnorms}[1]{%
  \left|\mkern-1.5mu\left|\mkern-1.5mu\left|
   #1
  \right|\mkern-1.5mu\right|\mkern-1.5mu\right|
}
\newcommand{\@opnorm}[2][]{%
  \mathopen{#1|\mkern-1.5mu#1|\mkern-1.5mu#1|}
  #2
  \mathclose{#1|\mkern-1.5mu#1|\mkern-1.5mu#1|}
}
\let\originalleft\left
\let\originalright\right
\renewcommand{\left}{\mathopen{}\mathclose\bgroup\originalleft}
\renewcommand{\right}{\aftergroup\egroup\originalright}
\title{Information-Theoretic Limits for the\\ Matrix Tensor Product} 
\author{Galen Reeves
\thanks{G. Reeves is with the Department of Electrical and Computer Engineering and the Department of Statistical Science, Duke University, Durham, NC 27708 USA. This research was supported in part by the National Science Foundation under Grant 1750362.}
}
\begin{document}


\maketitle

\begin{abstract}
This paper studies a high-dimensional inference problem involving the matrix tensor product of  random matrices. This problem generalizes a number of contemporary data science problems including the spiked matrix models used in sparse principal component analysis and covariance estimation and the stochastic block model used in network analysis. The main results are single-letter formulas  (i.e., analytical expressions that can be approximated numerically)  for the mutual information and the minimum mean-squared error (MMSE) in the Bayes optimal setting where the distributions of all random quantities are known. We provide non-asymptotic bounds and show that our formulas describe exactly the leading order terms in the mutual information and MMSE in the high-dimensional regime where the number of rows $n$ and number of columns $d$ scale with $d = O(n^\alpha)$ for some $\alpha < 1/20$. 

On the technical side, this paper introduces some new techniques for the analysis of  high-dimensional matrix-valued signals. Specific contributions include a novel extension of the adaptive interpolation method that uses order-preserving positive semidefinite interpolation paths, and a variance inequality between the overlap and the free energy that is based on continuous-time I-MMSE relations.
\end{abstract}


\tableofcontents

\section{Introduction}









%
%
%
%

%
  
Inference problems involving the estimation and factorization of large structured matrices play a central role in the data sciences. The last decade has witnessed significant progress on theory and algorithms for a variety of models involving low-rank structure, such as the spiked matrix models used in low-rank covariance estimation\cite{johnstone:2001a,johnstone:2009},  sparse principal component analysis (PCA) \cite{zou:2006}, and clustering \cite{lesieur:2016,banks:2018}, as well as related models involving sparse graphical structures, such as the stochastic block model (SBM) for community detection \cite{decelle:2011a,abbe:2018}. 

To understand the fundamental (or information-theoretic) limits of inference a recent line of work \cite{korada:2010, lesieur:2015,reeves:2016a, barbier:2016, reeves:2019c,  barbier:2019,deshpande:2017, krzakala:2016,barbier:2016a, lelarge:2018, miolane:2017a,reeves:2017e, alaoui:2018, mourrat:2018,mourrat:2020,reeves:2019d,reeves:2019ab,reeves:2019a,mayya:2019,barbier:2019c,aubin:2019,barbier:2019e} has focused on exact characterizations of  the asymptotic mutual information and minimum-mean squared error (MMSE). This body of work focuses on problem settings where the probability distributions  of all unknown quantities are known and there are no constraints on computational complexity. The limits are described by single-letter formulas (i.e., analytical expressions that can be approximated numerically) that characterize the leading order terms in the mutual information and MMSE. 

The value of exact formulas is that they provide detailed information about the role of different model parameters (e.g., the amount or quality of data). Moreover, they can describe phase transitions that delineate between problem regimes with drastically different behaviors with respect to the quality of inference and computational complexity.  The development of this type of theory often follows a two-part sequence where the formulas are first conjectured using heuristic approaches, such as the replica method, and then proven rigorously using very different techniques \cite{mezard:2009,talagrand:2011,talagrand:2011a,lesieur:2015b,lesieur:2017}.  While the emphasis on single-letter formulas is standard in areas such as information theory and statistical physics, it differs from some of the other approaches used in the data sciences, which focus instead on order-optimal bounds or rates of convergence. 

In this context, the contribution of this paper is a rigorous analysis of the fundamental limits for a broad class of problems related to the matrix tensor product  (or Kronecker product) of large random matrices with low-dimensional structure. Specifically we introduce a model that generalizes existing models in the literature and provide single-letter formulas for the mutual information and MMSE. These formulas recover a number of existing results  as special cases and also provide new results for settings that could not be analyzed previously. 

On the technical side, a further contribution of this paper is the development of techniques for the analysis of high-dimensional matrix-valued signals. Our formulas leverage functional properties of entropy under Gaussian convolution, using ideas introduced in analysis of multi-layer networks~\cite{reeves:2017e} and, more recently, for matrix factorization problems motivated by community detection~\cite{reeves:2019ab,reeves:2019a,mayya:2019}. The proof builds upon the adaptive interpolation method  \cite{barbier:2018,barbier:2019f}, which represents an evolution of earlier interpolation techniques in the statistical physics literature \cite{guerra2002,guerra:2003a}. In this direction, one of the key steps in this paper is the introduction of matrix-valued interpolation paths that satisfy a certain order-preserving property that is easier to work with than the regularity conditions used in previous work. Further innovations include  a new method for controlling overlap concentration using perturbations parameterized by positive semidefinite matrices, and a variance inequality between overlap and the free energy that is based on continuous-time I-MMSE relationships~\cite{weissman:2010, venkat:2012}.

\subsection{The matrix tensor product and related models}

This paper studies a model consisting of the matrix tensor product (or Kronecker product)  observed in additive Gaussian noise with an arbitrary coupling matrix. Throughout this paper, we use $\bX$ to denote a real $n \times d$ random signal matrix with known distribution. 

\begin{definition} 
The matrix tensor product model with coupling matrix $ B \in \reals^{d_1 d_2 \times m}$ is given by 
\begin{align}
\bY  = (\bX_1 \otimes \bX_2) B + \bW  \label{eq:mtp_model}
\end{align}
where $(\bX_1, \bX_2) \in \reals^{n_1 \times d_2} \times \reals^{n_2 \times d_2}$ is a pair of jointly random signal matrices, $ (\bX_1 \otimes \bX_2)$ denotes the $n_1 n_2 \times d_1 d_2$ matrix obtained by the Kronecker product,  and $\bW \in \reals^{n_1n_2 \times m}$ is a noise matrix that has i.i.d.\ standard Gaussian entries. Equivalently, the rows of $\bY$ are in one-to-one correspondence with the set of $m$-dimensional observations $\{Y_{ij} \, : \, 1 \le i \le n_1, 1 \le j \le n_2 \}$ given by
\begin{align}
Y_{ij}  =   ( X_{1i} \otimes X_{2j} ) B   + W_{ij}, 
\end{align}
where $X_{1i}$ and $X_{2j}$ denote the $i$-th and $j$-th rows of $\bX_1$ and $\bX_2$ and each $W_{ij}$ is an independent $m$-dimensional standard Gaussian vector. 
\end{definition}

The matrix tensor model defined above is general in the sense that it allows for any linear combination of the rows of the tensor product. The spiked matrix models used for sparse PCA and covariance estimation can be represented by this model with specific choice of the coupling matrix $B$. To see this correspondence, suppose that $d_1 = d_2 = d$ and let  $B$ be the $d^2 \times 1$ vector obtained by stacking the columns of $(s/n) I_d$ with $n = \max(n_1,n_2)$ and $s> 0$. Then, \eqref{eq:mtp_model} is equivalent to the asymmetric low-rank matrix estimation model (or spiked Wishart model) given by
\begin{align}
\bY = \sqrt{\frac{s}{n}} \bX_1 \bX_2^\top  + \bW, \label{eq:spiked_wishart}
\end{align}
where $s$ parameterizes the signal-to-noise ratio and $\bW$ is an $n_1 \times n_2$ noise matrix with i.i.d.\ standard Gaussian entries. The case where $\bX_2= \bX_1 =\bX$ recovers the symmetric spiked-matrix model (or spiked Wigner model), which is given by
\begin{align}
\bY = \sqrt{\frac{s}{n}} \bX \bX^\top  + \bW. \label{eq:spiked_wigner}
\end{align}

A generalization of the usual spiked matrix model was introduced to study the  $k$-community degree-balanced SBM \cite{reeves:2019ab,reeves:2019a}. In the setting of diverging average degree, the community detection problem can be modeled using
\begin{align}
\bY =  \frac{1}{\sqrt{n}}  \bX  B \bX^\top  + \bW, \label{eq:spiked_matrix}
\end{align}
where $B$ is a $d \times d$ coupling matrix $(d = k-1)$ that models the interactions between different communities. The case where  $B$ is positive semidefinite models networks with assortative behaviors (i.e., individuals in the same community are more likely to form an edge) and the case where $B$ is negative semidefinite models networks with dissassortative behaviors (i.e., individuals in different communities are more likely to form an edge). The ability to model networks with both assortative and dissassortative requires a coupling matrix with positive and negative values. 

A further generalization of \eqref{eq:spiked_matrix} is the multiview spiked matrix model used to study community detection with multiple correlated networks \cite{mayya:2019}. This model is given by 
\begin{align}
\bY_l =  \frac{1}{\sqrt{n}}  \bX  B_l \bX^\top  + \bW_l, \quad l=1,\dots ,L, \label{eq:multiview}
\end{align}
where each $B_\ell$ is associated with a different network. By vectorization, this model is an instance of \eqref{eq:mtp_model} with $\bX_1 = \bX_2 = \bX$ and $B = \frac{1}{\sqrt{n}} [\gvec(B_1), \dots, \gvec(B_L)].$

The main result of this paper is an approximation formula for the mutual information and MMSE associated with the matrix tensor product model. Rather than focusing on \eqref{eq:mtp_model} directly we consider a symmetric form that is easier to work with. 

\begin{definition}\label{def:sym_mtp_model} The symmetric matrix tensor product model with  $d^2 \times d^2$ positive semidefinite matrix $S$ is given by
\begin{align}
\bY = \frac{1}{\sqrt{n} }  \bX^{\otimes 2} S^{1/2}  + \bW, \label{eq:sym_mtp_model}
\end{align}
where $\bX^{\otimes 2} = \bX \otimes \bX$ is $n^2 \times d^2$ and  $\bW$ is a noise matrix with i.i.d.\ standard Gaussian entries. 
\end{definition} 

In fact, there is no loss in generality in restricting our attention to the symmetric case.

\begin{prop}\label{lem:model_equiv}
The matrix tensor product model \eqref{eq:mtp_model} and the symmetric form \eqref{eq:sym_mtp_model} are equivalent in the sense that either one can be used to represent the other. 
\end{prop}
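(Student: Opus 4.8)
The plan is to establish the two representability directions separately, the first being essentially a definitional check and the second carrying the real content. For the easy direction, I would observe that the symmetric model \eqref{eq:sym_mtp_model} is literally the special case of \eqref{eq:mtp_model} obtained by setting $\bX_1 = \bX_2 = \bX$, $n_1 = n_2 = n$, $d_1 = d_2 = d$, $m = d^2$, and coupling matrix $B = \tfrac1{\sqrt n} S^{1/2}$: then $(\bX_1\otimes\bX_2)B = \tfrac1{\sqrt n}\bX^{\otimes 2}S^{1/2}$ and the two noise matrices have the same law. Hence every symmetric matrix tensor product model is an instance of the general model.

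The substantive direction is to show that an arbitrary instance of \eqref{eq:mtp_model}, with signals $(\bX_1,\bX_2)$ of sizes $n_1\times d_1$ and $n_2\times d_2$ and coupling matrix $B\in\reals^{d_1 d_2\times m}$, is represented by a symmetric model. Here I would set $n = n_1+n_2$, $d = d_1+d_2$, and pass to the block-diagonal signal $\bX = \bigl(\begin{smallmatrix}\bX_1 & 0\\ 0 & \bX_2\end{smallmatrix}\bigr)\in\reals^{n\times d}$, whose distribution is well defined as a deterministic function of $(\bX_1,\bX_2)$. Writing $\bX^{\otimes 2}=\bX\otimes\bX$ out in blocks with the standard Kronecker index ordering, the submatrix of $\bX^{\otimes 2}$ obtained by keeping the rows indexed by pairs $(a_1,a_2)$ with $a_1\le n_1 < a_2$ and the columns indexed by pairs $(b_1,b_2)$ with $b_1\le d_1<b_2$ is exactly $\bX_1\otimes\bX_2$; moreover any such column forces the first row-index into the first block and the second into the second block, so all other rows of $\bX^{\otimes 2}$ vanish on these columns. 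Letting $P\in\reals^{d^2\times d_1 d_2}$ be the $0/1$ matrix selecting those columns and $E\in\reals^{n^2\times n_1 n_2}$ the $0/1$ matrix selecting those rows (so $E^\top E = I_{n_1 n_2}$), this says $\bX^{\otimes 2}P = E\,(\bX_1\otimes\bX_2)$.

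I would then instantiate \eqref{eq:sym_mtp_model} with $S = n\,PBB^\top P^\top\succeq 0$ and the square root $S^{1/2} := \sqrt n\,PB$, which satisfies $S^{1/2}(S^{1/2})^\top = S$. Then $\tfrac1{\sqrt n}\bX^{\otimes 2}S^{1/2} = \bX^{\otimes 2}PB = E\,(\bX_1\otimes\bX_2)B$, so the symmetric observation is $\bY = E M + \bW$ with $M = (\bX_1\otimes\bX_2)B$ and $\bW$ i.i.d.\ standard Gaussian. Completing $E$ to an orthogonal matrix $[\,E\mid E_\perp\,]$ and applying the corresponding invertible transformation to $\bY$ turns it into the pair $\bigl(M + E^\top\bW,\ E_\perp^\top\bW\bigr)$ of independent components, the first of which is exactly the observation of the original model \eqref{eq:mtp_model} and the second of which is pure noise independent of the signal; since a reversible transformation preserves mutual information and the appended noise block is uninformative, the two models carry the same mutual information and MMSE, which is the sense of equivalence in the statement. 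I would also note that the specific square root used is immaterial: any two square roots of $S$ differ by right multiplication by a partial isometry, which can be absorbed into the i.i.d.\ Gaussian noise, so the argument is unchanged if $S^{1/2}$ is required to denote the canonical positive semidefinite square root.

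The only real obstacle here is careful bookkeeping rather than any conceptual difficulty: keeping the Kronecker-product row and column orderings consistent, verifying that the selected sub-block is \emph{exactly} $\bX_1\otimes\bX_2$ with no spurious contributions, and pinning down precisely what ``equivalent''/``represent'' means (equality of the mutual information between signal and observation, and of the MMSE) so that the extra pure-noise rows forced by the $n^2\times m$ shape of the symmetric observation, and the choice of square root of $S$, can legitimately be discarded.
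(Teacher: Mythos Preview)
Your proof is correct and follows essentially the same approach as the paper's: the block-diagonal embedding $\bX = \bX_1 \oplus \bX_2$ to extract $\bX_1 \otimes \bX_2$ as a sub-block of $\bX^{\otimes 2}$, together with Gaussian orthogonal invariance to reduce to a positive semidefinite coupling. Your version is more explicit about the selection matrices $P,E$, the handling of the extra pure-noise rows, and the irrelevance of the particular square root of $S$, whereas the paper compresses all of this into the one-line remark that $\bY B^\top$ is a sufficient statistic; but the underlying ideas are the same.
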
 
\begin{proof}
The mapping from \eqref{eq:sym_mtp_model}  to \eqref{eq:mtp_model} is obvious. The reverse direction is based on two steps. First, any pair $(\bX_1, \bX_2)$ can be embedded into a single $(n_1 + n_2) \times (d_1 + d_2)$  block-diagonal matrix $\bX$ via the matrix direct product $\bX_1 \oplus \bX_2 = \diag(\bX_1, \bX_2)$. The tensor product $\bX \otimes \bX$ then includes $\bX_1 \otimes \bX_2$ as  subset of the columns and rows. Second, by orthogonal invariance of the standard Gaussian distribution, the projected observation $\bY B^{\top}$ is a sufficient statistic for inference about $\bX$, and thus there is no loss in generality in considering only positive semidefinite coupling matrices. 
\end{proof}

\subsection{Overview of contributions}

This paper is the first to consider the general setting of the matrix tensor product \eqref{eq:mtp_model}. Some special cases of our general result that could not be addressed previously are the following: 
\begin{itemize}
\item The spiked Wishart model \eqref{eq:spiked_wishart} where there is dependence between the factors $\bX_1$ and $\bX_2$. 

\item The spiked matrix model \eqref{eq:spiked_matrix} and the multiview version \eqref{eq:multiview} with arbitrary coupling matrices. 

\item The stochastic block model with an indefinite coupling matrix $B$ that models both assortative and dissasortative behaviors in the same network. 

\end{itemize}

The main results are theoretical guarantees for optimal estimation in the matrix tensor product model. We derive a single-letter approximation formula for the mutual information and provide a rigorous non-asymptotic bound on the approximation error (Theorem~\ref{thm:general_bound}). Similar in spirit to recent work focusing on spiked matrix models with generative priors~\cite{aubin:2019}, this result is stated under general conditions that depend only on the variance of the free energy. Combining this result with a more common (and more restrictive) independence assumption leads to a bound (Theorem~\ref{thm:MI_bound}) in which the error is proportional to $d^4\,  n^{-1/5}$, and thus converges to zero whenever $d = O(n^\alpha)$ for some $\alpha < 1/20$ . Although it is unlikely that this is the optimal rate of convergence it is, to the best of our knowledge, the strongest rate that has been proven in the literature. Some implications of these results for the MMSE are given in Theorem~\ref{thm:MMSE_limit}.

Our proof builds upon the method of adaptive interpolation used in previous work on spiked matrix models~\cite{barbier:2019c, barbier:2018,barbier:2019f,barbier:2019e}  and shares the same basic outline of  1) specifying an interpolation path via a first-order ordinary differential equation and 2) introducing a perturbation to the problem to guarantee  overlap concentration (almost everywhere) and thus cancel a remainder term. The main novelty in our approach is that we are able to perform these steps in a more general setting that allows us to address the matrix tensor product model. 

Two new ideas are the following: 
\begin{itemize}

\item \textbf{Order-preserving  interpolation path:} The interpolation paths are parameterized by positive semidefinite matrices and satisfy an order-preserving property. This property can be verified thanks to a comparison inequality for  differential equations (Lemma~\ref{lem:ODE_comp}).  As a consequence, matrix-valued perturbations  can be introduced directly along the interpolation path.

\item \textbf{Continuous time variance inequality:}  Inspired by pointwise I-MMSE relationships \cite{weissman:2010, venkat:2012} we introduce a continuous time coupling of the noise in the perturbation model. Leveraging the power of Ito  calculus, we derive a variance inequality (Theorem~\ref{thm:var_inq}) that provides a direct link between the variance of the free energy and the variation in the overlap matrix. This approach simplifies certain aspects of the analysis and allows for concentration under weaker assumptions on the scaling of the problem dimension. 
\end{itemize}

We expect that many of the techniques introduced in this paper can also be applied to higher-order tensor products.

\subsection{Related work}

A large body of work has focused on the information-theoretic \cite{barbier:2019,deshpande:2017, krzakala:2016,barbier:2016a, lelarge:2018, miolane:2017a,reeves:2017e, alaoui:2018, mourrat:2018,mourrat:2020,reeves:2019d,reeves:2019ab,reeves:2019a,mayya:2019,barbier:2019c,aubin:2019,barbier:2019e}  and algorithmic \cite{baik:2005,amini:2009,fletcher:2012b, fletcher:2018,deshpande:2014, deshpande:2014b,perry:2018} limits of recovery in spiked matrix models. Two special cases are now well understood:  the spiked Wigner model \eqref{eq:spiked_wigner} with i.i.d.\ rows and the spiked Wishart model \eqref{eq:spiked_wishart} in the setting where $\bX_1$ and $\bX_2$ are independent matrices with i.i.d.\ rows. Using the replica method, Lesieur et al. \cite{lesieur:2015,lesieur:2016, lesieur:2017} derived conjectured formulas for the asymptotic mutual information and MMSE. The case of the rank-one $(d = 1)$  spiked Wigner model was proven rigorously by  Krzakala et al.~\cite{krzakala:2016} and Barbier et al.~\cite{barbier:2016a}. The  case of low-rank matrices $(d \ge 1)$ was proven for the spiked Wigner model by Lelarge and  Miolane~\cite{lelarge:2018}  and for the spiked  Wishart model by Miolane~\cite{miolane:2017a}.

The role of the coupling matrix in the symmetric model \eqref{eq:spiked_matrix}  was investigated by Reeves et al.~\cite{reeves:2019ab,reeves:2019a}, who provided one-sided bounds on the formulas for the mutual information and MMSE. Subsequent work by Mayya and Reeves~\cite{mayya:2019} extended this approach to the multiview version \eqref{eq:multiview}. Concurrent to the work in the present paper, Barbier and Reeves~\cite{barbier:2020a}  provide an exact characterization for a special case of the multiview model \eqref{eq:multiview} that satisfies an additional convexity property. Specifically it is assumed that $B_1, \dots, B_L$ are such that
\begin{align}
\sum_{\ell=1}^L \left\{ B_l \otimes B_l)  + (B_l \otimes B_l)^\top \right\} \succeq 0 .
\end{align}
This assumption is restrictive in the sense that  it precludes the model equivalence in Lemma~\ref{lem:model_equiv}; see Section~\ref{sec:alternate_formulas} for more discussion about the role of this assumption.

The approach in this paper builds upon the method of adaptive interpolation, which has been applied primarily in the case of rank-one estimation problems~\cite{barbier:2019c, barbier:2018,barbier:2019f} and more recently to the setting of low-rank even order tensors \cite{barbier:2019e}. Some different approaches include the cavity method \cite{lelarge:2018, miolane:2017a}, large deviations techniques \cite{alaoui:2018,aubin:2019}, and the analysis of Hamilton-{J}acobi equations \cite{mourrat:2018,mourrat:2020}.

\subsection{Notation} 
We use  $\mathbb{S}^d$, $\psd^d$ and $\mathbb{S}_{++}^d$ to denote the space of $d \times d$ symmetric, positive semidefinite, and positive definite matrices, respectively.  For matrices  $A \in \reals^{m \times n}$ and $B \in \reals^{m \times n}$ the trace inner product is given by $\langle A , B \rangle : = \gtr(A^\top B)$. Unless specified otherwise, $\|A\| = \sigma_\text{max}(A)$ denotes is spectral norm and $\|A\|_\fro = \sqrt{ \langle A, A \rangle}$ denotes the Frobenius norm. For  $A \in \reals^{m \times n}$, we use $\gvec(A)$ to denote the $mn \times 1$ vector obtained by stacking the columns in $A$.  The Kronecker product (or tensor product) is denoted by $\otimes$ and the Kronecker power is denoted by $A^{\otimes k}  = \otimes_{\ell = 1}^k A $.    

We use upper case letters such as $X$ and $Y$ to denote random variables and random vectors and upper case bold letters such as $\bX$ and $\bY$ to denote random matrices. The probability measure of jointly random variables $(X, Y)$ is denoted by $P_{X,Y}$,   the marginals are denoted by $P_{X}$ and $P_{Y}$, and the product measure of the marginals is $P_X \otimes P_Y$. The Gaussian distribution of mean $\mu$ and covariance $\Sigma$ is denoted by $\normal(\mu,\Sigma)$ and a random matrix is a called a standard Gaussian matrix if its entries are i.i.d.\  $\normal(0, 1)$ variables.

\section{Statement of main results} 

We assume throughout that $\bX$ is an $n \times d$ random matrix with finite fourth moments.  For a pair of positive semidefinite matrices $(R, S) \in \psd^d \times \psd^{d^2}$, observations  $\bY_{R,S}$ are generated according to the model
\begin{align}
\bY_{R,S}  = \begin{dcases} 
\bX R^{1/2} + \bW\\
\frac{1}{\sqrt{n}} \bX^{\otimes 2}  S^{1/2} + \bW'
\end{dcases} \label{eq:Y_RS}
\end{align}
where $\bW$ and $\bW'$ are independent standard Gaussian matrices. The special case  $\bY_{R,0}$ is the linear model with matrix-valued input  and the special case $\bY_{0, S}$ is the matrix tensor product model given in Definition~\ref{def:sym_mtp_model}.  We also allow for the additional observations called ``side-information'' which are represented by a random variable $\bZ$ taking values in a set $\cZ$.  It is assumed throughout that $\bY_{R,S}$ and $\bZ$ are conditionally independent given $\bX$ and that the joint distribution of all random variables is known. 

%
%

Our  main results describe the fundamental limits of estimating $\bX$ from the observations  $(\bY_{R,S}, \bZ)$. Specifically, we describe exactly the leading order terms in the mutual information and as well as the MMSE matrix.

\subsection{Mutual information} \label{sec:mutual_information}

 The primary object of interest is the normalized mutual information function $\cI_n \colon \psd^d \times \psd^{d^2} \to \reals$ defined by
\begin{align}
\cI_n(R,S) 
& \coloneqq  \frac{1}{n} I(\bX ; \bY_{R, S} \mid \bZ) , \label{eq:I_RS}
\end{align}
where $I(\bX; \bY_{R,S} \mid \bZ)$ is the conditional mutual information between $\bX$ and $\bY_{R,S}$ given $\bZ$. 
This function has a number of important properties. It is concave \cite{lamarca:2009} and non-decreasing with respect to the Loewner order \cite{reeves:2018a}. Furthermore, it is infinitely differentiable on the interior of its domain and its gradients are related to the mean-squared error in estimating both $\bX$ and $\bX^{\otimes 2}$ from $(\bY_{R,S}, \bZ)$  \cite{guo:2005a, palomar:2006, payaro:2009,guo:2011}.  These relationships are described in more detail in Section~\ref{sec:MMSE_matrix}.

For the special case of the linear model $(S = 0)$ we use the short-hand notation $\cI_n(R) \coloneqq \cI_n(R,0)$. The linear model has been studied extensively in the context of the linear Gaussian channel in information theory. In many cases the leading order terms in mutual information can be described exactly in terms of so-called ``single-letter" or ``multi-letter'' formulas,  which are analytic expressions involving low-dimensional integrals. This is particularly true when the rows of $\bX$ are independent and thus the estimation problem decouples into $n$ independent problems, each of dimension $d$.

\begin{example}
Suppose that $\bX = (X_1, \dots, X_n)^\top$ and $\bZ = (Z_1, \dots, Z_n)^\top $ where $(X_i, Z_i)_i$ are i.i.d.\ according to a distribution $P_0$.  Then, the  mutual information in the linear model is given by the single-letter formula 
 \begin{align}
 \cI_{n}(R) & \coloneqq I(X_0 ; R^{1/2} X_0 + W_0 \mid Z_0 ), 
 \end{align}
 where $W_0 \sim \normal(0, I_d)$ and $(X_0, Z_0 ) \sim P_0$ are independent. Under  regularity conditions on $P_0$, this expression can be approximated numerically with complexity that scales exponentially in $d$. 
 \end{example}

In comparison to the linear model, the general setting of nonzero $S$ is much more challenging. Part of the difficulty is that the tensor product introduces non-trivial dependencies  between the rows of the observations. Another issue is that  there can exist non-identifiabilities when the distribution of $\bX$ has certain invariances (e.g., invariance to a change in sign). 

The high-level idea behind our approach is that the mutual information associated with an arbitrary pair $(R,S)$ can be related to the linear model evaluated at $(R +\tilde{R} , 0)$ where  $\tilde{R}$  is a positive semidefinite matrix that depends on $(R,S)$ as well as the distribution of $(\bX, \bZ)$. Based on this comparison, the complexity of analyzing the tensor product model is then reduced to the complexity of the linear model. The nature of this comparison is made precise in the following definition.

\begin{definition}\label{def:Ih} The ``approximation formula'' for the normalized mutual information $\cI_{n}(R,S)$ is given by 
\begin{align}
\label{eq:Ih_RS}
\hat{\cI}_n(R, S)  
&\coloneqq    \adjustlimits   \min_{Q \in \cQ} \sup_{\tilde{R} \in \psd^d} \left\{ \cI_n(R+ \tilde{R})  + \frac{1}{2}  \langle S, \ex{ (  \tfrac{1}{n} \bX^\top \bX)^{\otimes 2}}  - Q^{\otimes 2} \rangle -  \frac{1}{2} \langle \tilde{R},  \ex{   \tfrac{1}{n} \bX^\top \bX}  -  Q \rangle   \right\}, 
\end{align}
where $\cQ \coloneqq \{ Q \in \psd^d \, : \, Q \preceq \ex{\frac{1}{n}  \bX^\top \bX}\}$. 
This function depends only on the mutual information associated with the linear model. 
\end{definition}

\begin{remark}
If $S= 0$, then it follows from the  concavity of the mapping $\tilde{R} \mapsto \cI_n(R+ \tilde{R})$ that the approximation formula is exact, i.e.,   $\hat{\cI}_n(R) = \cI_n(R)$ for all $R$.
\end{remark}

\begin{remark}
The approximation formula $\hat{\cI}_n$ has many of the same function properties as $\cI_n$, including concavity and monotonicity with respect to Loewner order. One important difference however, is that the gradients can have discontinuities at points where the extremizers are non-unique. These discontinuities correspond to phase transitions that occur in the limit of high dimension. 
\end{remark}

 Our first main result is a non-asymptotic bound on the difference between the mutual information and the approximation formula.  The proof, which is given in Section~\ref{proof:thm:MI_bound}, follows from a more general result given in  Section~\ref{sec:general_bounds}.

\begin{theorem}\label{thm:MI_bound} Suppose that  $\bX = (X_1, \dots, X_n)^\top$  and $\bZ = (Z_1, \dots, Z_n)$  where  $ (X_i, Z_i)_i$ are independent and the rows satisfy the bound $\|X_i\| \le \sqrt{d}\, \rho$ almost surely for some $\rho \ge 1$. There exists a universal constant $C$ such that
\begin{align}
 | \cI_n(R,S)   - \hat{\cI}_n(R,S)   |  \le   \frac{  C \left( d^2  + d^4 \rho^4 \|S\|  \right) }{ n^{1/5} }.
\end{align}
\end{theorem}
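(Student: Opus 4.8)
The plan is to prove Theorem~\ref{thm:MI_bound} by deriving it as a specialization of the more general, assumption-free bound (Theorem~\ref{thm:general_bound}), exactly as the excerpt indicates. The general result is stated to depend only on the variance of the free energy, so the first step is to recall that general bound and identify what must be controlled: a concentration statement for the free energy of the matrix tensor product model (equivalently, concentration of $\frac{1}{n}\log$ of the partition function around its mean), quantified by a variance that enters the error term. Under the independence-of-rows hypothesis $\bX=(X_1,\dots,X_n)^\top$ with the $X_i$ independent, and the almost-sure norm bound $\|X_i\|\le\sqrt{d}\,\rho$, the free energy becomes a function of $n$ independent blocks of randomness (the rows of $\bX$, the side information $Z_i$, and the corresponding noise), so one expects a bounded-differences / Gaussian-Poincar\'e argument to give a variance of order $\mathrm{poly}(d,\rho,\|S\|)/n$. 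The norm bound is what converts the abstract variance hypothesis of Theorem~\ref{thm:general_bound} into the explicit dependence $d^2 + d^4\rho^4\|S\|$.

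The core of the argument is the adaptive interpolation together with the two new technical ingredients advertised in the introduction. Concretely, I would: (i) set up the order-preserving positive semidefinite interpolation path connecting the true model $(R,S)$ at one endpoint to a pure linear model $(R+\tilde R,0)$ at the other, using the ODE comparison inequality (Lemma~\ref{lem:ODE_comp}) to guarantee the path stays monotone in the Loewner order and hence that matrix-valued perturbations can be threaded along it; (ii) differentiate the interpolating mutual information along the path, producing the expected ``energy'' term — matching the inner products against $S$, $\tilde R$, and $Q$ appearing in $\hat{\cI}_n$ — plus a remainder governed by the fluctuation of the overlap matrix $\frac{1}{n}\bX^\top\langle \bx\rangle$ around its mean; (iii) invoke the continuous-time variance inequality (Theorem~\ref{thm:var_inq}), obtained via Ito calculus on a time-coupled perturbation of the noise, to bound the overlap fluctuation by the variance of the free energy, which is small by the concentration step above; (iv) choose $\tilde R$ and the perturbation optimally (the $\sup_{\tilde R}$ and $\min_{Q}$ in $\hat{\cI}_n$ arise from saddle-point / Lagrangian duality in this optimization, with $\cQ$ the natural feasible set forced by $Q\preceq \ex{\frac1n\bX^\top\bX}$), and integrate the differentiated identity from one endpoint to the other. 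Collecting the endpoint value $\cI_n(R+\tilde R)$, the energy terms, and the integrated remainder gives the claimed two-sided bound. The exponent $1/5$ comes from balancing: the perturbation must be large enough to force overlap concentration almost everywhere in the perturbation parameter, but small enough that it does not distort the mutual information by more than $O(n^{-1/5})$; optimizing over the perturbation scale against the $\mathrm{poly}(d,\rho)\,n^{-1}$ variance yields this rate after accounting for the dimension-dependent prefactors, which is where the $d^2$ and $d^4\rho^4\|S\|$ terms accumulate (the $d^4$ tracking the second-moment object $(\frac1n\bX^\top\bX)^{\otimes 2}$ against $S$).

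I expect the main obstacle to be exactly the step that the paper highlights as novel: establishing overlap concentration (for almost every value of the perturbation parameter) with a strong enough quantitative rate, and in a form that survives the matrix-valued generalization. In the rank-one and low-rank spiked-matrix literature this is typically the hardest and most delicate part — one needs both the ``thermal'' fluctuation (conditional on the signal) and the ``disorder'' fluctuation (over the signal and noise) to be small, and the standard route goes through Nishimori identities plus concentration of the free energy. Here the tensor-product structure introduces genuine dependencies between rows of $\bY_{0,S}$, so a naive bounded-differences argument over the $n$ rows is not obviously valid for the $S$-part; the resolution is presumably the continuous-time coupling of Theorem~\ref{thm:var_inq}, which trades the combinatorial dependency structure for an Ito-calculus identity that directly relates $\var(\text{free energy})$ to $\int \var(\text{overlap})$, so that a single scalar concentration estimate on the free energy suffices. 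Verifying that the order-preserving property of the interpolation path (via Lemma~\ref{lem:ODE_comp}) is compatible with inserting the matrix perturbation — i.e., that the perturbed path remains a legitimate instance of \eqref{eq:Y_RS} with positive semidefinite parameters throughout — is the other place where care is needed, and is where the new interpolation construction does its work.
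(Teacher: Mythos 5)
Your proposal follows essentially the same route as the paper: Theorem~\ref{thm:MI_bound} is obtained by specializing Theorem~\ref{thm:general_bound}, bounding the relative entropy variance under the row-independence and boundedness assumptions via Gaussian--Poincar\'e and Efron--Stein arguments (the paper's Proposition~\ref{prop:relative_entropy_variance}, applied after absorbing $\bY_R$ into the side information), and then choosing the perturbation scales (the paper takes $\beta_n \propto n^{-1/5}$ and $\delta_n \propto n^{-2/5}$) to balance the perturbation cost against the overlap-concentration error from the adaptive interpolation, which is exactly where the $n^{-1/5}$ rate and the $d^2 + d^4\rho^4\|S\|$ prefactor arise. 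The only small mis-attribution is in your discussion of obstacles: the row-dependencies induced by the tensor product in the $S$-part are handled directly by the Efron--Stein step (perturbing rows of $\bX$, not of $\bY$), while the continuous-time variance inequality of Theorem~\ref{thm:var_inq} serves the separate purpose of converting free-energy concentration into overlap concentration.
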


Theorem~\ref{thm:MI_bound} shows that the approximation formula describes exactly the leading order terms in the mutual information in the high-dimensional setting where  dimensions $n$ and $d$ scale with  $d = O(n^{\alpha})$ for some $\alpha <1/20$. We remark that the non-asymptotic nature of this bound is significantly different from typical results in the literature, which are usually asymptotic or have constants that depend implicitly on the underlying parameters.

Focusing on the setting where $d$ is fixed and the rows and side information are identically distributed provides an exact characterization of the limit.

\begin{cor}\label{cor:MI_limit} 
Suppose that  $\bX = (X_1, \dots, X_n)^\top$  and $\bZ = (Z_1, \dots, Z_n)$  where  $(X_i, Z_i)_i$  are i.i.d.\ according to a distribution $P_0$ whose $X$-marginal has finite fourth moments. Then,  $\cI_n$ converges pointwise to the limit $\hat{\cI}_0$ given by 
\begin{align}
\hat{\cI}_0(R, S) 
&\coloneqq    \adjustlimits   \min_{Q \in  \cQ} \sup_{S \in \psd^d} \left\{ \cI_0(R + \tilde{R})   + \frac{1}{2}  \langle S, \ex{X_0 X_0^\top}^{\otimes 2}  - Q^{\otimes 2} \rangle -  \frac{1}{2} \langle \tilde{R}, \ex{ X_0 X_0^\top} -  Q \rangle   \right\}, \label{eq:I0} 
\end{align}
where $\cQ = \{ Q \in \psd^d \, : \, Q \preceq \ex{X_0 X_0^\top}\}$ and  $ \cI_{0}(R)  \coloneqq I(X_0 ; R^{1/2} X_0 + W_0 \mid Z_0 )$ with  $W_0 \sim \normal(0, I_d)$ independent of  $(X_0, Z_0) \sim P_0$.   
\end{cor}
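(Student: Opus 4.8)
The plan is to derive Corollary~\ref{cor:MI_limit} as a direct consequence of Theorem~\ref{thm:MI_bound} together with an i.i.d.\ single-letterization of the linear-model mutual information. First I would note that under the i.i.d.\ assumption with finite fourth moments, for each fixed truncation level $\rho$ one can split $X_i = X_i \ind\{\|X_i\| \le \sqrt{d}\rho\} + X_i \ind\{\|X_i\| > \sqrt{d}\rho\}$ and control the contribution of the tail using standard perturbation bounds on mutual information under additive Gaussian noise (e.g.\ that replacing the signal by a close-in-$L^2$ signal changes $\cI_n$ by at most a term controlled by the squared $L^2$ distance and the noise level). Since $d$ is fixed, sending $\rho \to \infty$ after $n \to \infty$ kills this truncation error, so it suffices to prove the claim under the almost-sure bound $\|X_i\| \le \sqrt{d}\rho$.

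For the truncated model, Theorem~\ref{thm:MI_bound} gives $|\cI_n(R,S) - \hat{\cI}_n(R,S)| \le C(d^2 + d^4\rho^4\|S\|)\, n^{-1/5} \to 0$ as $n \to \infty$ with $d$ and $\rho$ held fixed. So the limit of $\cI_n(R,S)$, if it exists, equals the limit of $\hat{\cI}_n(R,S)$. The next step is to identify $\lim_n \hat{\cI}_n(R,S)$ with $\hat{\cI}_0(R,S)$. By the Example in the excerpt, for i.i.d.\ rows $\cI_n(R+\tilde R) = I(X_0; (R+\tilde R)^{1/2}X_0 + W_0 \mid Z_0) = \cI_0(R+\tilde R)$ exactly (no limit needed — the linear model tensorizes). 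Moreover $\ex{\frac1n \bX^\top\bX} = \ex{X_0 X_0^\top}$ exactly, and by i.i.d.-ness one computes $\ex{(\frac1n\bX^\top\bX)^{\otimes 2}} = \ex{X_0X_0^\top}^{\otimes 2} + O(1/n)$: writing $\frac1n\bX^\top\bX = \frac1n\sum_i X_iX_i^\top$ and expanding the Kronecker square, the diagonal terms $i=j$ contribute $\frac1n \ex{(X_iX_i^\top)^{\otimes 2}}$, which is $O(1/n)$ since the fourth moment is finite under the truncation, while the $n(n-1)$ off-diagonal terms give $\frac{n-1}{n}\ex{X_0X_0^\top}^{\otimes 2}$. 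Hence the objective in \eqref{eq:Ih_RS} converges uniformly (over the compact sets that matter) to the objective in \eqref{eq:I0}, and since both are min-sup of a jointly continuous function over the fixed compact set $\cQ$ and (after a standard reduction) a bounded range of $\tilde R$, the optimal values converge. This yields $\cI_n(R,S) \to \hat{\cI}_0(R,S)$ pointwise.

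A couple of technical points need care. One is that the supremum over $\tilde R \in \psd^d$ in the approximation formula is a priori over a noncompact set; I would argue that the relevant $\tilde R$ can be restricted to a compact set independent of $n$ — since $\cI_n(R+\tilde R)$ grows at most linearly in $\|\tilde R\|$ (indeed $\cI_n(R+\tilde R) \le \frac12 \langle \tilde R, \ex{\frac1n\bX^\top\bX}\rangle + \cI_n(R)$ by the Gaussian channel capacity bound) while the term $-\frac12\langle \tilde R, \ex{\frac1n\bX^\top\bX} - Q\rangle$ becomes strongly negative as $\|\tilde R\| \to \infty$ whenever $Q$ is bounded away from $\ex{\frac1n\bX^\top\bX}$, and the boundary case $Q = \ex{X_0X_0^\top}$ needs a separate continuity/lower-semicontinuity argument. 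This is exactly the step I expect to be the main obstacle: making the min-sup interchange-of-limits rigorous near the boundary of $\cQ$, where the inner supremum can fail to be attained. I would handle it either by establishing joint continuity of $(Q,S) \mapsto \sup_{\tilde R}\{\cdots\}$ as a concave-convex saddle value (appealing to the fact established earlier that $\hat{\cI}_n$ is concave and Loewner-monotone, hence continuous on the interior, plus a monotone-limit argument at the boundary), or by noting that the min over $Q$ and sup over $\tilde R$ can be swapped by a minimax theorem so that the limit is taken in whichever order is convenient. The remaining ingredients — the truncation estimate and the $O(1/n)$ moment computation — are routine.
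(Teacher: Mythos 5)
Your proposal is correct and takes essentially the same route as the paper: truncate to bounded support, apply Theorem~\ref{thm:MI_bound} there, and transfer back to general $P_0$ via the $L^2$/Wasserstein continuity of $P_{\bX,\bX^{\otimes 2}}\mapsto I(\bX;\bY_{R,S}\mid\bZ)$ (the paper invokes exactly this Lipschitz property). The one step you flag as the main obstacle --- identifying $\lim_n \hat{\cI}_n$ with $\hat{\cI}_0$ --- is in fact immediate in the i.i.d.\ case: since $\cI_n(R+\tilde{R})=\cI_0(R+\tilde{R})$ and $\ex{\tfrac{1}{n}\bX^\top\bX}=\ex{X_0X_0^\top}$ hold exactly, the two min--sup objectives differ only by the additive constant $\tfrac{1}{2}\langle S, \ex{(\tfrac{1}{n}\bX^\top\bX)^{\otimes 2}}-\ex{X_0X_0^\top}^{\otimes 2}\rangle = O(1/n)$, which is independent of $(Q,\tilde{R})$, so no uniform convergence, compactness reduction, or minimax interchange is needed.
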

\begin{proof}
If the support of $X_0$ is bounded then this follows immediately from Theorem~\ref{thm:MI_bound}. Otherwise, one can first establish the limit for an approximation to $P_0$ under which $X_0$ has bounded support and then use the Lipschitz continuity of the mapping $P_{\bX, \bX^{\otimes2} } \mapsto I(\bX ; \bY_{R,S} \mid \bZ)$ with respect to the quadratic Wasserstein distance \cite{polyanskiy:2016} to show the same limit holds for $P_0$. 
\end{proof}

Corollary~\ref{cor:MI_limit} applies generally for an arbitrary pair $(R,S)$ and any source of side information satisfying the independence constraints. 
Applied to the special case of the low-rank spiked matrix models where $S$ is rank one,  $R = 0$ and there is no side information, Corollary~\ref{cor:MI_limit} recovers the results given in~\cite{krzakala:2016, barbier:2016a, lelarge:2018, miolane:2017a}. Some other special cases of Corollary~\ref{cor:MI_limit}  have also been established in recent work focusing on a multiview observation model where the coupling matrices satisfy a coupling constraint~\cite{reeves:2019ab,reeves:2019a,barbier:2020a}.

\subsection{MMSE matrix} \label{sec:MMSE_matrix}

Having established the exact behavior of the leading order terms in the mutual information we now turn our attention to estimation-theoretic quantities. Consider the problem of estimating a matrix-valued random variable $\bX$ based on observations $\bZ$.  One natural measure of performance is given by the MMSE:
\begin{align}
\mmse(\bX \mid \bZ) &\coloneqq \ex{ \| \bX - \ex{ \bX \mid \bZ} \|_\fro^2}.
 \label{eq:MMSE_scalar}
\end{align} 
Note the expectation is taken with respect to the joint distribution of $(\bX, \bZ)$ and so this is a non-random quantity. Because the Frobenius norm is an entrywise norm,  the MMSE does not change if we replace $\bX$ by its transpose or by the $nd \times 1$ vector $\gvec(\bX)$ obtained by stacking the columns. 

This paper focuses on the MMSE matrix introduced in \cite{reeves:2018a, reeves:2019a,reeves:2019ab}, which is given by the $d \times d$ positive semidefinite matrix
\begin{align}
\MMSE( \bX \mid \bZ) \coloneqq \ex{ (\bX - \ex{\bX\mid \bZ} )^\top (\bX - \ex{ \bX \mid \bZ} ) }.
\end{align}
This matrix provides a multivariate generalization of \eqref{eq:MMSE_scalar}.  The diagonal entries correspond to the MMSE of the columns of $\bX$ and  thus $\gtr(\MMSE( \bX \mid \bZ) )$ is equal to \eqref{eq:MMSE_scalar}. The off-diagonal entries provide information about the correlation between the rows of $\bX$.  Note that the dimensions of the matrix are treated differently. For example,  $\MMSE( \bX^\top \mid \bZ)$ is $n \times n$ and $\MMSE( \gvec(\bX)^\top \mid \bZ)$ is $nd \times nd$.

%

The I-MMSE relationship \cite{guo:2005a} and its multivariate extensions \cite{palomar:2006, lamarca:2009,payaro:2009,guo:2011,reeves:2018a} provide a link between the gradient of the mutual information in a Gaussian linear model and the MMSE. As a consequence of the matrix version of the I-MMSE relation, 
\begin{subequations}
\label{eq:I_RS_gradient}
\begin{align}
\nabla_{R} \cI_n(R,S) &= \frac{1}{2n} \MMSE(\bX \mid \bY_{R, S} , \bZ ) \label{eq:I_R_gradient} \\
\nabla_{S} \cI_n(R,S) & = \frac{1}{2n^2} \MMSE(\bX^{\otimes 2}  \mid \bY_{R, S}, \bZ ) . \label{eq:I_S_gradient} 
\end{align}
\end{subequations}
Because $\cI_n$ is defined on the space of positive definite matrices, these gradients should be regarded  as linear mappings from the space of symmetric matrices to the reals. Also, these gradients are defined on the interior of the domain of  $\cI_{n}$, which is the space of positive definite matrices. For matrices that lie on the boundary, the gradient can be defined unambiguously via the limit of a sequence of positive definite matrices $\{(R_k, S_k)\}_{k \in \integers}$  that converges to $(R,S)$. For example, the MMSE matrix in the matrix tensor product model defined in Definition~\ref{def:sym_mtp_model}  is given by the formula:
\begin{align}
\frac{1}{n^2} \MMSE(\bX^{\otimes 2}  \mid \bY_{0, S} ,  \bZ )  & =  2 \nabla_{R} \cI_n(0,S)  \notag \\
& = \lim_{\eps \downarrow 0}  2 \nabla_{R} \cI_n(\eps I,S)  .  
\label{eq:MMSE_R0}
\end{align}

In view of the I-MMSE relations in \eqref{eq:I_RS_gradient}, the basic idea of our approach is to approximate the MMSE matrix  using the gradient of the  approximation formula $\hat{\cI}_n$ given in \eqref{eq:Ih_RS}. There are however some technical details that need to be addressed. For one,  the approximation formula is not continuously differentiable. A generalization of the derivative,  called the superdifferential,  is given by the set-valued mappings 
\begin{subequations}
\label{eq:Ih_RS_gradient}
\begin{align}
\partial_{R} \hat{\cI}_n(R,S) &= \left\{\frac{1}{2}  U \in \sym_+^d \, : \,   \hat{\cI}_n( \tilde{R},S)  - \hat{\cI}_n(R,S) \le \frac{1}{2} \langle \tilde{R} - R  , U \rangle ,  \text{ for all $\tilde{R} \in \psd^{d}$}  \right \} \\
\partial_{S} \hat{\cI}_n(R,S) & =\left\{\frac{1}{2}  V \in \sym_+^{d^2} \, : \,   \hat{\cI}_n(R, \tilde{S} )  - \hat{\cI}_n(R,S) \le \frac{1}{2} \langle \tilde{S} - S, V \rangle ,  \text{ for all $\tilde{S} \in \psd^{d^2}$} \right \}.
\end{align}
\end{subequations}
These are closed convex sets that are single-valued at every point where  $\hat{\cI}_n$ is differentiable.  The fact that  the superdifferentials are defined with respect to the space of positive semidefinite matrices, as opposed to the set of symmetric matrices, is justified by the fact that $\hat{\cI}_n$ is non-decreasing with respect to the Loewner order.

With these definitions in hand,  we can can  now provide non-asymptotic bounds on the MMSE matrix. 

\begin{prop}\label{prop:MMSE_bound}
Consider any pair $(R, S) \in \psd^d \times \psd^{d^2}$. 
For all $\tilde{R} \in \sym^d$ such that $R + \tilde{R} \succeq 0$, 
\begin{align*}
 \frac{1}{n}   \langle   \tilde{R} , \MMSE( \bX \mid \bY_{R,S} ,  \bZ) \rangle &\ge \max_{U \in 2\partial \hat{\cI}_n(R  +  \tilde{R} ,S)} \langle \tilde{R} , U\rangle -  \delta_n,
\end{align*}
where $ \delta_n =  2 \sum_{ \xi \in \{0,1 \}}  | \cI_n(R +\xi \tilde{R} , S) - \hat{\cI}_n(R + \xi \tilde{R} , S) |$. 
Furthermore, for all $\tilde{S} \in \sym^{d^2}$ such that $S + \tilde{S} \succeq 0$, 
\begin{align*}
\frac{1}{n^2}   \langle   \tilde{S} , \MMSE( \bX^{\otimes 2}  \mid \bY_{R,S}  , \bZ) \rangle  & \ge \min_{  V \in 2 \partial_S  \hat{\cI}_n(R, S+ \tilde{S}) } \langle  \tilde{S}, V  \rangle  -  \delta'_n,
\end{align*}
where $\delta'_n = 2 \sum_{\xi \in \{0,1 \}}  | \cI_n(R, S  + \xi \tilde{S}  ) - \hat{\cI}(R, S+\xi  \tilde{S})|$. 
\end{prop}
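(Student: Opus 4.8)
The plan is to exploit the fact that $\hat{\cI}_n$ is concave (so it has a well-defined superdifferential) together with the I-MMSE identities \eqref{eq:I_RS_gradient}, which say that the \emph{true} gradients of $\cI_n$ are exactly the rescaled MMSE matrices, and then pay for the gap between $\cI_n$ and $\hat{\cI}_n$ at the two endpoints $(R,S)$ and $(R+\tilde R,S)$ (resp.\ $(R,S)$ and $(R,S+\tilde S)$). The first claim and the second are entirely parallel, so I would write out the argument for the $R$-direction and note that the $S$-direction is identical with the obvious substitutions ($n\mapsto n^2$, $\bX\mapsto\bX^{\otimes 2}$, $\psd^d\mapsto\psd^{d^2}$, and $\min$ replacing $\max$ because \eqref{eq:I_S_gradient} gives a gradient bound in the other orientation).

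First I would fix $U\in 2\,\partial_R\hat{\cI}_n(R+\tilde R,S)$ achieving the maximum; by definition of the superdifferential applied at the base point $R+\tilde R$ and the comparison point $R$ (note $R = (R+\tilde R) + (-\tilde R) \succeq 0$ by hypothesis), we get
\begin{align*}
\hat{\cI}_n(R,S) - \hat{\cI}_n(R+\tilde R,S) \le \tfrac{1}{2}\langle (-\tilde R), U\rangle = -\tfrac{1}{2}\langle \tilde R, U\rangle,
\end{align*}
i.e.\ $\tfrac12\langle\tilde R,U\rangle \le \hat{\cI}_n(R+\tilde R,S) - \hat{\cI}_n(R,S)$. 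Next, I would replace each $\hat{\cI}_n$ on the right-hand side by the true $\cI_n$ at the same point, incurring an additive error of at most $|\cI_n(R+\tilde R,S)-\hat{\cI}_n(R+\tilde R,S)| + |\cI_n(R,S)-\hat{\cI}_n(R,S)| = \tfrac12\delta_n$, so that $\tfrac12\langle\tilde R,U\rangle \le \cI_n(R+\tilde R,S) - \cI_n(R,S) + \tfrac12\delta_n$. Then I would bound $\cI_n(R+\tilde R,S) - \cI_n(R,S)$ from above using concavity of $R'\mapsto\cI_n(R',S)$: the increment along the segment from $R$ to $R+\tilde R$ is at most the gradient at the \emph{initial} endpoint contracted with the displacement, i.e.\ $\cI_n(R+\tilde R,S)-\cI_n(R,S) \le \langle \tilde R, \nabla_R\cI_n(R,S)\rangle = \tfrac{1}{2n}\langle\tilde R,\MMSE(\bX\mid\bY_{R,S},\bZ)\rangle$ by \eqref{eq:I_R_gradient}. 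Wait — I want a \emph{lower} bound on $\tfrac1n\langle\tilde R,\MMSE\rangle$, so I need the increment bounded \emph{below} by a gradient, which by concavity is the gradient at the \emph{final} endpoint; but $\nabla_R\cI_n(R+\tilde R,S)$ is not what appears in the statement. The correct move is instead: concavity gives $\cI_n(R+\tilde R,S)-\cI_n(R,S)\ge\langle\tilde R,\nabla_R\cI_n(R+\tilde R,S)\rangle$, which is the wrong direction; so in fact one should combine the superdifferential inequality at base point $R$ rather than $R+\tilde R$. I would re-run the argument anchoring the superdifferential at $(R+\tilde R,S)$ but taking the comparison point to be $R+\tilde R$ perturbed the \emph{other} way is not available. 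The clean route: use that $\hat{\cI}_n\le$ something and $\cI_n$'s I-MMSE bound directly — by \eqref{eq:I_R_gradient} and concavity, $\cI_n(R+\tilde R,S) - \cI_n(R,S) \le \tfrac{1}{2n}\langle\tilde R,\MMSE(\bX\mid\bY_{R,S},\bZ)\rangle$, hence $\tfrac{1}{n}\langle\tilde R,\MMSE(\bX\mid\bY_{R,S},\bZ)\rangle \ge 2(\cI_n(R+\tilde R,S)-\cI_n(R,S)) \ge 2(\hat{\cI}_n(R+\tilde R,S)-\hat{\cI}_n(R,S)) - \delta_n \ge \langle\tilde R,U\rangle - \delta_n$, where the last step uses the superdifferential inequality at $(R+\tilde R,S)$ with comparison point $R$ as above, plus the approximation-error bound; maximizing over $U$ gives the claim.

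The main obstacle — really the only subtle point — is getting the directions of the two one-sided inequalities (concavity increment bound vs.\ superdifferential bound) to point the same way so they chain correctly; once the endpoints are chosen so that both the true function's increment and the approximation's increment are compared against the \emph{same} linear functional $\langle\tilde R,\cdot\rangle$, the rest is bookkeeping of the $\delta_n$ terms. A minor care point is that $(R,S)$ may lie on the boundary of $\psd^d\times\psd^{d^2}$, where $\nabla_R\cI_n$ and the superdifferential are defined via limits of interior points (as discussed around \eqref{eq:MMSE_R0}); since all the inequalities used are closed (weak inequalities preserved under limits), the boundary case follows by approximating $(R,S)$ from the interior and passing to the limit, using continuity of $\cI_n$ and of $\MMSE(\,\cdot\mid\bY_{R,S},\bZ)$ in $(R,S)$.
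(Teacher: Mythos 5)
Your final chain of inequalities --- concavity of $\cI_n$ plus the I-MMSE relation \eqref{eq:I_R_gradient} to bound the increment $\cI_n(R+\tilde R,S)-\cI_n(R,S)$ above by $\tfrac{1}{2n}\langle\tilde R,\MMSE(\bX\mid\bY_{R,S},\bZ)\rangle$, the superdifferential inequality for $\hat{\cI}_n$ anchored at $R+\tilde R$ to bound $\tfrac12\langle\tilde R,U\rangle$ by the increment of $\hat{\cI}_n$, and the two endpoint approximation errors absorbed into $\delta_n$ --- is exactly the paper's argument, and your bookkeeping of the factor of $2$ in $\delta_n$ is right. The writeup contains a false start before you reorient the inequalities, but the argument you settle on is correct and essentially identical to the paper's proof (which also notes, as you do, that the $S$-direction is the same argument verbatim; the $\min$ in the second claim is simply a weaker form of the ``for all $V$'' bound you establish).
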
 
\begin{proof}
 By concavity of  $\cI_n$ and the I-MMSE relations in \eqref{eq:I_R_gradient},
\begin{align*}
\langle  \tilde{R} , \MMSE(\bX \mid \bY_{R+ \tilde{R} , S} )  \rangle   
 &  \le  2n  \cI_n(R +\tilde{R}, S) -  2n \cI_n(R,S)  & \le  \langle  \tilde{R} ,\MMSE(\bX \mid \bY_{R}, S )  \rangle.
\end{align*}
Similarly, by concavity of $\hat{\cI}_n$ it follows that for all $U \in  2 \partial_R \hat{\cI}_n(R, S) $ and $\tilde{U} \in  2 \partial_R  \hat{\cI}_n(R+ \tilde{R} , S) $, 
\begin{align*}
 \langle  \tilde{R} , \tilde{U}  \rangle \le  2 \hat{\cI}_n(R +\tilde{R}, S) - 2 \hat{\cI}_n(R,S)  \le   \langle  \tilde{R} , U \rangle.
\end{align*}
Comparing the above displays leads to $ \langle  \tilde{R} ,\MMSE(\bX \mid \bY_{R}, S )  \rangle \ge \langle  \tilde{R} , \tilde{U}  \rangle  - \delta_n$ and $\langle  \tilde{R} ,\MMSE(\bX \mid \bY_{R+ \tilde{R} }, S )  \rangle   \le \langle  \tilde{R} , U  \rangle  +  \delta_n$. 
Optimizing over $U$ and $\tilde{U}$ establishes part (i). The proof of part (ii) follows similarly. 
\end{proof}

Proposition~\ref{prop:MMSE_bound} shows that the MMSE matrix can be bounded from above and below by the supergradients of the approximation formula. Note that the non-asymptotic bounds on the difference  $|\cI_n - \hat{\cI}_n|$ given in Theorem~\ref{thm:MI_bound}  provide conditions under which the terms $\delta_n, \delta_n'$ are small.

To provide a more explicit characterization of the  MMSE matrix it is useful to consider the limiting behavior associated with the large-$n$ limit.  A basic result in convex analysis, which is known as Griffiths' lemma in the statistical physics literature,  says that if a sequence of differentiable convex functions $f_n$ converges  pointwise  to limit $f$, then the limit is convex, and hence differentiable almost everywhere, and the gradients  $\nabla f_n$ converge to $\nabla f$ at all points where $f$ is differentiable. A multivariate version of this result is provided in Lemma~\ref{lem:convex_gradients} in the Appendix. For the MMSE matrix, this means that pointwise convergence of the normalized mutual information  implies convergence of the normalized MMSE matrix almost everywhere. 

Before we state our next result, there is an additional technical detail that needs to be addressed, which is that the superdifferentials  of $\hat{\cI}_n$ may be multivalued on the boundary.  We define $\cP_R \colon \psd^d \to \psd^d$ to be the projection operator in Frobenius norm generated by the trace inner product onto the cone of positive definite matrices given by $\{ \tilde{R} \in \psd^d \, : \, \tilde{R} \preceq   t R, \, t \ge 0  \}$. This projection is the identity operator whenever $R$ is full rank. A simple upper bound on the MMSE matrix provided by the mean-squared error of the maximum likelihood estimator reveals that
\begin{subequations}
\begin{align}
\cP_R( \MMSE(\bX \mid \bY_{R,S}, \bZ )) \preceq R^\dagger\\
\cP_S( \MMSE(\bX^{\otimes 2}  \mid \bY_{R,S}, \bZ)) \preceq S^\dagger,
\end{align}
\end{subequations}
where $(\cdot)^\dagger$ denotes the Moore--Penrose inverse of a matrix $M$ and  $\cP_S \colon \psd^{d^2} \to \psd^{d^2}$ is defined analogously to $\cP_R$. The next result shows that the existence of a limit for  $\hat{\cI}_n$ leads to tight bounds on the projected MMSE matrix at all points where the projection of the superdifferential  of $\hat{\cI}_n$ is single-valued.

\begin{prop}\label{prop:MMSE_lim}
Let $\cI_n$ be the mutual information function associated with a sequence of problems where the number of columns $d$ is fixed as the number of rows $n$ increases. If $\cI_n$ converges pointwise to a limit $\cI$, then $\cI$ is concave and thus differentiable almost everywhere. Furthermore:
\begin{enumerate}[(i)]

\item For all $R, \tilde{R} \in \psd^d$ and $S , \tilde{S} \in \psd^{d^2}$ 
\begin{align*}
\liminf_{n \to \infty} \frac{1}{n}   \langle   \tilde{R} , \MMSE( \bX \mid \bY_{R,S}, \bZ ) \rangle  &\ge \min_{U \in 2 \partial_R  \cI(R, S) }  \langle  \tilde{R}, U  \rangle \\
\liminf_{n \to \infty} \frac{1}{n}  \langle   \tilde{S} , \MMSE( \bX^{\otimes 2}  \mid \bY_{R,S}, \bZ ) \rangle  & \ge \min_{V \in 2  \partial_{S} \cI(R, S) }  \langle \tilde{S} , V  \rangle .
\end{align*}

\item For all $R, \tilde{R} \in \psd^d$ and $S , \tilde{S} \in \psd^{d^2}$ such that $\tilde{R}  \preceq \eps R $ and $\tilde{S} \preceq \eps  S$ for some $\eps > 0$, 
\begin{align*}
\limsup_{n \to \infty} \frac{1}{n}   \langle   \tilde{R} , \MMSE( \bX \mid \bY_{R,S}, \bZ ) \rangle  &\le \max_{U \in 2 \partial_R  \cI(R, S) }  \langle  \tilde{R}, U  \rangle \\
\limsup_{n \to \infty}  \frac{1}{n} \langle   \tilde{S} , \MMSE( \bX^{\otimes 2}  \mid \bY_{R,S}, \bZ ) \rangle  & \le \max_{V \in 2  \partial_{S} \cI(R, S) }  \langle \tilde{S} , V  \rangle .
\end{align*}

\item  If $\cP_{R} ( \partial_R \cI(R,S) ) = \{\frac{1}{2} U\}$ is single-valued then 
\begin{align*}
\lim_{n \to \infty}   \frac{1}{n}  \cP_{R} \left(  \MMSE( \bX \mid \bY_{R,S}, \bZ)  \right)   = U.
\end{align*}

\item If $\cP_{S} ( \partial_S \cI(R,S) ) = \{\frac{1}{2} V\}$ is single-valued then 
\begin{align*}
\lim_{n \to \infty} \frac{1}{n^2}   \cP_{S} \left(  \MMSE( \bX^{\otimes 2}  \mid \bY_{R,S}, \bZ)  \right)   = V.
\end{align*}
\end{enumerate}
\end{prop}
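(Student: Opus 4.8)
The plan is to reduce every assertion to a statement about one-sided directional derivatives of the concave function $\cI_n$ and then pass to the $n\to\infty$ limit, using the matrix I-MMSE identity \eqref{eq:I_RS_gradient} as the dictionary between MMSE matrices and these derivatives: $\tfrac1n\langle\tilde R,\MMSE(\bX\mid\bY_{R,S},\bZ)\rangle$ is exactly twice the directional derivative of $R'\mapsto\cI_n(R',S)$ at $R$ in the direction $\tilde R$, and analogously for $\bX^{\otimes2}$ when $S$ is perturbed (with the normalizations dictated by \eqref{eq:I_RS_gradient}). Since $d$ is held fixed, each $\cI_n$ is a finite concave function on the fixed domain $\psd^d\times\psd^{d^2}$, so a pointwise limit of concave functions is concave and hence differentiable almost everywhere, and a multivariate version of Griffiths' lemma (Lemma~\ref{lem:convex_gradients}) gives that the partial superdifferentials converge in the appropriate set-valued sense to those of $\cI$.

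For \emph{part (i)} I would argue by concavity: for any $\tilde R\in\psd^d$ and $t>0$ the matrix $R+t\tilde R$ lies in $\psd^d$, and the forward difference quotient $t^{-1}\bigl(\cI_n(R+t\tilde R,S)-\cI_n(R,S)\bigr)$ is a lower bound on the directional derivative; taking $\liminf_{n\to\infty}$ and using pointwise convergence replaces $\cI_n$ by $\cI$, and letting $t\downarrow 0$ recovers the one-sided directional derivative of $\cI$, which for a concave function equals $\min\{\langle\tilde R,U\rangle:\tfrac12 U\in\partial_R\cI(R,S)\}$. The $\bX^{\otimes2}$ inequality is identical with $S$ perturbed and \eqref{eq:I_S_gradient} in place of \eqref{eq:I_R_gradient}. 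For \emph{part (ii)}, the hypothesis $\tilde R\preceq\eps R$ is exactly what keeps $R-t\tilde R\succeq(1-t\eps)R\succeq0$ for $t\le1/\eps$, so the \emph{backward} difference quotient is available and instead \emph{upper}-bounds the directional derivative; the same limiting procedure together with the dual identity (directional derivative $=\max$ over the superdifferential) yields the $\limsup$ bounds.

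For \emph{parts (iii)--(iv)} I would combine (i) and (ii) along the directions $\tilde R$ with $0\preceq\tilde R\preceq\eps R$. Such $\tilde R$ lie in the cone onto which $\cP_R$ projects, and since $\cP_R$ is the self-adjoint orthogonal projection onto that cone, $\langle\tilde R,U'\rangle=\langle\tilde R,\cP_R U'\rangle$ for every $U'$; hence the $\min$ in (i) and the $\max$ in (ii) both collapse to $\langle\tilde R,U\rangle$ once $\cP_R(2\partial_R\cI(R,S))=\{U\}$ is a singleton, giving $\tfrac1n\langle\tilde R,\MMSE(\bX\mid\bY_{R,S},\bZ)\rangle\to\langle\tilde R,U\rangle$ for every such $\tilde R$. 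The maximum-likelihood bound of the display preceding the proposition shows that the matrices $\tfrac1n\cP_R(\MMSE(\bX\mid\bY_{R,S},\bZ))$ are uniformly bounded and lie in the closed range of $\cP_R$; since these directions span that range, any subsequential limit agrees with $U$ against a spanning set and must therefore equal $U$, which forces convergence of the whole sequence. Part (iv) is verbatim with $(R,\tilde R,n,R^\dagger)$ replaced by $(S,\tilde S,n^2,S^\dagger)$ and with \eqref{eq:I_S_gradient} in place of \eqref{eq:I_R_gradient}.

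The step I expect to be the main obstacle is the convex-analytic identification of the one-sided directional derivatives of $\cI$ with $\min$/$\max$ over the superdifferentials at \emph{boundary} points, i.e.\ when $R$ or $S$ is rank deficient: the usual max-formula of convex analysis needs a relative-interior hypothesis, so I would have to exploit that the superdifferentials in \eqref{eq:Ih_RS_gradient} are taken over the PSD cone and that $\cI_n$ (hence $\cI$) is monotone with respect to the Loewner order, and restrict the probing directions to the cone $\{\tilde R\preceq\eps R\}$---which is precisely why $\cP_R$ enters and why the clean limit statements (iii)--(iv) are phrased for the projected MMSE rather than the full matrix. A secondary point is to check carefully that $\tfrac1n\cP_R(\MMSE(\bX\mid\bY_{R,S},\bZ))$ is genuinely uniformly bounded and range-valued, so that the subsequential-limit argument closes.
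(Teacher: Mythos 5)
Your proposal is correct and follows essentially the same route as the paper: parts (i)--(ii) are the multivariate Griffiths-type argument (Lemma~\ref{lem:convex_gradients} applied to the convex functions $-\cI_n$, unwound into forward/backward difference quotients and combined with the I-MMSE identity \eqref{eq:I_RS_gradient}), and parts (iii)--(iv) follow by testing against directions $0 \preceq \tilde{R} \preceq \eps R$ that span the range of $\cP_R$. The boundary subtlety you flag (the max-formula for the superdifferential when $R$ or $S$ is rank deficient) is real but is handled exactly as you propose --- by restricting to directions in the cone and passing to the projected MMSE --- which is also why the paper states (iii)--(iv) only for $\cP_R$ and $\cP_S$.
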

\begin{proof} Parts (i) and (ii) follow from Lemma~\ref{lem:convex_gradients} by noting that $-\cI_n$ and $- \hat{\cI}_n$ are convex. Parts (ii) and (iv)  follow straightforwardly from parts (i) and (ii). 
\end{proof}

Combining Corollary~\ref{cor:MI_limit}  with Proposition~\ref{prop:MMSE_lim}  leads to  the following result. 

\begin{theorem}\label{thm:MMSE_limit} 
Consider the assumptions of Corollary~\ref{cor:MI_limit}. If for a given pair $(R, S)$ the minimum in \eqref{eq:I0} is attained at a unique point $Q^\star \in \cQ$,   then the MMSE matrix satisfies 
\begin{align*}
\lim_{n \to \infty} \frac{1}{n} \cP_{R} \left(  \MMSE(\bX \mid \bY_{R,S}, \bZ )  \right)  &=  \cP_R\left(M_1 \right) \\
\lim_{n \to \infty} \frac{1}{n^2} \cP_{S} \left( \MMSE(\bX^{\otimes 2}  \mid \bY_{R,S} , \bZ) \right)  & =  \cP_{S}\left( M_2 \right ),
\end{align*}
where 
\[M_1 = \ex{ X_0 X_0^\top} - Q^\star , \quad M_2 =  \ex{X_0 X_0^\top}^{\otimes 2}  - (Q^\star)^{\otimes 2}.\] 
\end{theorem}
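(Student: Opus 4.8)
The plan is to combine Corollary~\ref{cor:MI_limit}, which identifies the pointwise limit $\hat{\cI}_0$ of the sequence $\cI_n$, with Proposition~\ref{prop:MMSE_lim}, which translates pointwise convergence of the mutual information into convergence of the (projected) MMSE matrices at points of differentiability of the limit. So the work reduces to computing the superdifferentials $\partial_R \hat{\cI}_0(R,S)$ and $\partial_S \hat{\cI}_0(R,S)$ under the assumption that the inner minimization in \eqref{eq:I0} has a unique minimizer $Q^\star$. First I would write $\hat{\cI}_0(R,S) = \min_{Q \in \cQ} g(R,S,Q)$ where
\begin{align*}
g(R,S,Q) = \sup_{\tilde{R} \in \psd^d}\Bigl\{ \cI_0(R+\tilde{R}) + \tfrac12 \langle S, \ex{X_0X_0^\top}^{\otimes 2} - Q^{\otimes 2}\rangle - \tfrac12\langle \tilde{R}, \ex{X_0X_0^\top} - Q\rangle \Bigr\}.
\end{align*}
The term involving $S$ is linear in $S$ and independent of $R$ and $\tilde{R}$; the term involving $R$ enters only through $\cI_0(R+\tilde{R})$.

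Next I would apply a Danskin-type / envelope argument to differentiate $g$, being careful that $\cQ$ is the feasible set and $Q^\star$ may lie on its boundary (which is exactly why the statement is phrased with the projections $\cP_R, \cP_S$). For the $S$-derivative: since the $S$-dependence of $g(R,S,Q^\star)$ is the explicit linear form $\tfrac12\langle S, \ex{X_0X_0^\top}^{\otimes2} - (Q^\star)^{\otimes 2}\rangle$, and since $Q^\star$ is the unique minimizer, the envelope theorem gives $2\partial_S \hat{\cI}_0(R,S) \ni \ex{X_0X_0^\top}^{\otimes 2} - (Q^\star)^{\otimes 2} = M_2$; uniqueness of $Q^\star$ forces this superdifferential to be single-valued after projection by $\cP_S$, and Proposition~\ref{prop:MMSE_lim}(iv) then yields the second limit $\cP_S(M_2)$. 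For the $R$-derivative: one first shows the inner supremum over $\tilde{R}$ is attained — at an optimal $\tilde{R}^\star = \tilde{R}^\star(R,S,Q^\star)$ the stationarity condition (using \eqref{eq:I_R_gradient}, i.e.\ $2\nabla \cI_0(R+\tilde{R}) = \MMSE(X_0 \mid (R+\tilde{R})^{1/2}X_0 + W_0, Z_0)$) reads $\MMSE(X_0 \mid \cdots) = \ex{X_0X_0^\top} - Q^\star$, which by the single-letter I-MMSE fixed-point structure is precisely the assertion that the linear-model MMSE at noise level $R+\tilde{R}^\star$ equals $M_1$. Then the envelope theorem in $R$ gives $2\partial_R \hat{\cI}_0(R,S) \ni 2\nabla\cI_0(R+\tilde{R}^\star) = M_1$, and, via Proposition~\ref{prop:MMSE_lim}(iii), the projected limit $\cP_R(M_1)$.

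There is one subtlety I would handle explicitly: the inner problem is a $\min$–$\sup$, so to invoke an envelope/Danskin argument cleanly one wants to know that the $\sup$ over $\tilde{R}$ and $\min$ over $Q$ interact well — concretely, that the pair $(Q^\star, \tilde{R}^\star)$ is a saddle point, or at least that $g(R,S,\cdot)$ is continuous near $Q^\star$ and $\tilde{R}^\star$ varies continuously. Since $\tilde{R} \mapsto \cI_0(R+\tilde{R})$ is concave and the $Q$-dependent terms are smooth (indeed affine in the relevant variables once $Q$ is fixed), concavity/convexity in the right variables should give this; I would also use the a priori bound $\cP_R(\MMSE) \preceq R^\dagger$ (and its $S$-analogue) noted before Proposition~\ref{prop:MMSE_lim} to justify that the relevant extremizers stay in a compact region. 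Finally, I would note that $\cP_R$ and $\cP_S$ are continuous (being Euclidean projections onto closed convex cones), so taking $\cP_R, \cP_S$ of both sides of the single-valuedness statements is legitimate.

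The main obstacle I expect is not any single step but the bookkeeping around the boundary of $\cQ$: the minimizer $Q^\star$ can satisfy $Q^\star \preceq \ex{X_0X_0^\top}$ with equality in some directions, so $\partial_S \hat{\cI}_0$ and $\partial_R \hat{\cI}_0$ need not be globally single-valued — they are only single-valued \emph{after} composing with the projections $\cP_S, \cP_R$, and verifying that this projected single-valuedness follows from uniqueness of $Q^\star$ (rather than from differentiability of $\hat{\cI}_0$ at $(R,S)$, which may fail) is the delicate point. Everything else — the two envelope/Danskin computations and the appeal to Proposition~\ref{prop:MMSE_lim} — is routine once that is in place.
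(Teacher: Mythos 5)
Your proposal follows exactly the route the paper takes (which it states in a single sentence): combine the pointwise convergence $\cI_n \to \hat{\cI}_0$ from Corollary~\ref{cor:MI_limit} with parts (iii)--(iv) of Proposition~\ref{prop:MMSE_lim}, and then verify via a Danskin/envelope computation that uniqueness of $Q^\star$ makes the projected superdifferentials single-valued with values $\frac{1}{2}\cP_R(M_1)$ and $\frac{1}{2}\cP_S(M_2)$. Your identification of the boundary-of-$\cQ$ / projected-single-valuedness issue as the only delicate point is accurate, and your treatment of it is consistent with (indeed more explicit than) what the paper provides.
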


Let us comment briefly on the  restriction to the projection of the MMSE. First, note that if $R$ or $S$ are positive definite, then there is no restriction and Theorem~\ref{thm:MMSE_limit} establishes the exact limit of the MMSE matrix. However,  settings where $R$ and $S$ are degenerate are also important to consider because they correspond to previously studied problem settings such as the  spiked Wigner and spiked Wishart models.  The following examples shed some light on what can and cannot be said in these settings. 

\begin{example}[Sign Invariance] Suppose that there is no side information and the distribution of $\bX$  is invariant to a sign change,  that is  $\bX$ is equal to $-\bX$ in distribution. If $R =0$,  the conditional distribution of $\bX$ given $\bY_{0,S}$ is also invariant to a change in sign. Consequently, the conditional expectation is zero almost surely and $\MMSE(\bX \mid \bY_{0, S}) = \ex{ \bX^\top \bX}$ for all $S$. In other words,  the MMSE is constant and does not decrease with an increase in the signal strength. This surprising behavior shows that the MMSE evaluated at a single point does not necessarily provide the whole story.  
 
In the context of our problem formulation, the linear model parameterized by $R$ provides a way to address the non-identifiability issue. Specifically, an arbitrarily small but nonzero value of $R$ is sufficient to resolve the sign ambiguity with high probability such that the MMSE convergences to zero as $\lambda_\mathrm{min}(S)$ diverges.  This discrepancy between $R=0$ and the small $R$ limit shows that the technical details involving the upper bounds on the MMSE matrix in part (ii) of Proposition~\ref{prop:MMSE_lim} are fundamental to the problem and are not an artifact of our proof technique. 
\end{example}

\begin{example}[Spiked Wigner Model]
The spiked Wigner model is a special case of $\eqref{eq:Y_RS}$ where $R$ is zero and $S$ is given by the rank-one matrix $s \gvec(I_d)\gvec(I_d)$ for some $s > 0$. In this case, the projection of the MMSE matrix onto the cone generated by $S$ satisfies
\begin{align}
 \gtr( \cP_S( \MMSE(\bX^{\otimes 2} \mid \bY_{0, S}))) = \ex{ \left\| \bX^\top \bX - \ex{ \bX^\top \bX \mid \bY_{0, S} }\right \|_\fro^2}.
\end{align}
This is the same performance metric that appears in previous work focusing on the spiked Wigner model~\cite{deshpande:2017, lesieur:2015,lesieur:2016, krzakala:2016, barbier:2016a,lelarge:2018,  lesieur:2017}. 
\end{example}

\subsection{Overlap concentration}\label{sec:overlap}

The mutual information and MMSE terms studied in the previous sections provide measures of the average difference between the distribution of a random matrix and the  conditional distribution given observations.  In this section, we focus on measures that provide further information about the variability in the conditional distribution itself. The results in this section are not specific to the matrix tensor product model and can be applied generally to any inference problem with matrix-structured variables.  To lighten the notation, we use $\bZ$ to denote a generic set of observations, which may or may  not include the observations from the joint model \eqref{eq:Y_RS}.   
 
 The overlap matrix $\bO$  associated with observations $\bZ$   is the $d \times d$ random matrix defined by
\begin{align}
\bO \coloneqq (\bX')^\top \bX'',
\end{align}
where $\bX'$ and $\bX''$ are conditionally independent draws from the conditional distribution of $\bX$ given $\bZ$. Note that the conditional expectation of the overlap matrix  is a $d \times d$ positive semidefinite matrix given by $\ex{ \bO \mid \bZ} = \ex{ \bX \mid \bZ}^\top \ex{ \bX \mid \bZ}$. Accordingly, there is a one-to-one correspondence between the expectation of the overlap matrix and the MMSE matrix given by 
\begin{align}
\ex{ \bO} + \MMSE(\bX \mid \bZ) = \ex{ \bX^\top \bX}. \label{eq:OtoMMSE}
\end{align}

A fundamental question of interest is whether the overlap matrix concentrates about its expectation. When such concentration occurs, it means that the rows of the matrix are weakly correlated with high probability and also, that the magnitude of the squared error does not fluctuate significantly with the randomness in the observations. 

The main results in this section provide bounds on the variation in the overlap matrix. These bounds depend on the variation in a related random quantity, which is often called the ``free energy" in the statistical physics literature. Specifically, consider the random variable $F$ given by 
\begin{align}
F \coloneqq  \log \frac{ \dd P}{ \dd Q}( Y), \quad Y \sim P ,
\end{align} 
where $P$ and $Q$ are probability measures defined on the same space such that $P$ is absolutely continuous with respect to $Q$. The mean and variance of $F$ correspond to the relative entropy  and the relative entropy variance, respectively, which are given by 
\begin{align}
D( P \, \| Q) & \coloneqq  \bEx_{P}\left[ \log \frac{ \dd P}{ \dd Q} \right]\\
V(P \, \| \, Q)  & \coloneqq \bEx_{P}\left[ \left(\log \frac{ \dd P}{ \dd Q} - D(P \, \| \, Q) \right)^2\right].
\end{align} 

The first main result  in this section establishes an inequality between the squared deviation of the conditional expectation of the overlap matrix and the variance of the free energy. The link between these quantities is provided by the linear Gaussian model
 \begin{align}
\bY_{R} = \bX R^{1/2} + \bW, \label{eq:Y_R}
\end{align}
where $ R\in \psd^d$ and  $\bW$ is a standard Gaussian matrix that is independent of $(\bX, \bZ)$. Note that this is equivalent to the special case $(R,0)$ of the joint model in \eqref{eq:Y_RS}. For $R \in \psd^d$, we use $\bO_R$ to denote the overlap matrix associated with the augmented observations $(\bY_R, \bZ)$.

Furthermore, let $B_1, \dots, B_{d^2}$ be the collection of $d \times d$  positive semidefinite matrices defined by
\begin{align} \label{eq:def_basis}
B_{\pi(a,b) } &\coloneqq \begin{dcases}  e_a e_b^\top , & a = b\\
\frac{1}{2}  (e_a  + e_b) (e_a  + e_b)^\top   & a  <  b \\
\frac{1}{2}  (e_a  - e_b) (e_a - e_b)^\top     & a  >  b 
  \end{dcases},
\end{align}
for $a,b =  1, \dots d,$ where $\pi(a,b) = a + (d-1) b$ and $e_a$ denotes the $a$-th standard basis vector in $d$-dimensions. Note that each $B_k$ has rank one and unit norm. 

The following result is proved in Section~\ref{proof:thm:var_inq},  based on pointwise I-MMSE relations \cite{weissman:2010, venkat:2012} associated with a continuous-time version of the Gaussian model.

\begin{theorem}[Variance Inequality]\label{thm:var_inq}
Let $\bO$ be the overlap matrix associated with observations $\bZ$ and, for each $R \in \psd^d$, let  $\bO_R$ be the overlap matrix associated with the observation pair $(\bY_R, \bZ)$.  For all  $\delta > 0$, 
\begin{align}
\ex{  \| \ex{ \bO \mid  \bZ } - \bEx[\bO]  \|_\fro^2}  \le  2 \sum_{k=1}^{d^2}   \langle B_k ,  \ex{  \bO }   \rangle  \langle B_k , \ex{\bO_{ \delta B_k} } -  \ex{\bO  }    \rangle
+ \frac{4}{ \delta^2} \sum_{k=1}^{d^2} V\left( P_{\bY_{\delta B_k}, \bZ}  \,  \| \, P_{\bW} \otimes P_{\bZ}   \right), 
  \label{eq:var_inq}
\end{align} 
where $B_1, \dots, B_{d^2}$ are defined as in \eqref{eq:def_basis}.
\end{theorem}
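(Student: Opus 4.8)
The plan is to reduce the matrix statement to $d^2$ scalar estimates, one for each rank-one matrix $B_k = u_k u_k^\top$ ($\|u_k\|=1$), and to prove each scalar estimate by embedding the observation $\bY_{\delta B_k}$ into a continuous-time Gaussian filtering model, using the innovations representation of the free energy together with a submartingale monotonicity property of the filter.

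\textbf{Step 1 (reduction to rank-one directions).} First I would record an elementary norm inequality for the basis $\{B_k\}$: for every symmetric $N \in \sym^d$,
\[
 \|N\|_\fro^2 \;\le\; \sum_{k=1}^{d^2} \langle B_k, N\rangle^2 .
\]
Indeed, the diagonal indices give $\langle B_{\pi(a,a)}, N\rangle = N_{aa}$, reproducing the diagonal part of $\|N\|_\fro^2$, while for each unordered pair $\{a,b\}$ the ``sum'' and ``difference'' basis vectors give $\langle B_{\pi(a,b)},N\rangle = \tfrac12(N_{aa}+N_{bb})+N_{ab}$ and $\langle B_{\pi(b,a)},N\rangle = \tfrac12(N_{aa}+N_{bb})-N_{ab}$, whose squares sum to at least $2N_{ab}^2$, covering the off-diagonal part. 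Applying this with $N = \ex{\bO\mid\bZ} - \bEx[\bO]$, which is symmetric because $\ex{\bO\mid\bZ} = \bEx[\bX\mid\bZ]^\top\bEx[\bX\mid\bZ]$, and taking expectations gives
\[
 \ex{\|\ex{\bO\mid\bZ} - \bEx[\bO]\|_\fro^2} \;\le\; \sum_{k=1}^{d^2} \var\big(\langle B_k, \ex{\bO\mid\bZ}\rangle\big),
\]
and $\langle B_k, \ex{\bO\mid\bZ}\rangle = \|\bEx[\bX u_k \mid \bZ]\|^2$. So it suffices to prove, for each fixed $k$, with $V := \bX u_k$, $\phi := \|\bEx[V\mid\bZ]\|^2$ and $\epsilon_k := \langle B_k, \ex{\bO_{\delta B_k}} - \ex{\bO}\rangle$, the scalar bound
\[
 \var(\phi) \;\le\; 2\,\langle B_k, \ex{\bO}\rangle\,\epsilon_k \;+\; \frac{4}{\delta^2}\, V\big(P_{\bY_{\delta B_k},\bZ}\,\|\,P_\bW\otimes P_\bZ\big).
\]

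\textbf{Step 2 (continuous-time embedding).} Fix $k$, introduce an $\reals^n$-valued Brownian motion $(\bW_t)$ independent of $(\bX,\bZ)$, and the continuous-time channel $d\bY_t = V\,dt + d\bW_t$, $t\in[0,\delta]$. Since $V$ is constant in $t$, the endpoint $\bY_\delta$ is sufficient for the path, and after rescaling has the conditional law of $\sqrt\delta\,\bX u_k + (\text{unit Gaussian})$; combined with the fact that $dP_{\bY_{\delta B_k}\mid\bX}/dP_\bW$ depends on $\bY_{\delta B_k}$ only through $\bY_{\delta B_k}u_k$, this makes $(\bY_{[0,\delta]},\bZ)$ informationally equivalent to $(\bY_{\delta B_k},\bZ)$ for inference about $\bX$. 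Let $\cF_t = \sigma(\bZ,\bY_{[0,t]})$, $\hat V_t = \bEx[V\mid\cF_t]$, and $F_t = \log\frac{dP_{\bY_{[0,t]},\bZ}}{dP_{\bW_{[0,t]}}\otimes P_\bZ}$ the free energy; by the above, $\var(F_\delta) = V(P_{\bY_{\delta B_k},\bZ}\,\|\,P_\bW\otimes P_\bZ)$. The Kallianpur--Striebel / innovations representation — the pointwise continuous-time I-MMSE identity of \cite{weissman:2010,venkat:2012} — gives
\[
 F_t \;=\; \tfrac12\int_0^t \|\hat V_s\|^2\,ds \;+\; \int_0^t \hat V_s^\top\,d\tilde\bW_s ,
\]
where $\tilde\bW_t = \bY_t - \int_0^t\hat V_s\,ds$ is a $\cF_t$-Brownian motion; in particular $\hat V_0 = \bEx[\bX u_k\mid\bZ]$, so $\phi = \|\hat V_0\|^2$, with $\bEx[\|\hat V_0\|^2] = \langle B_k, \ex{\bO}\rangle$ and $\bEx[\|\hat V_\delta\|^2] = \langle B_k, \ex{\bO_{\delta B_k}}\rangle$.

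\textbf{Step 3 (the variance estimate).} Conditioning on $\bZ$ kills the stochastic integral, so $\bEx[F_\delta\mid\bZ] = \tfrac12\int_0^\delta \bEx[\|\hat V_t\|^2\mid\bZ]\,dt$. Since $(\hat V_t)$ is a martingale, $\|\hat V_t\|^2$ is a submartingale, so $t\mapsto \bEx[\|\hat V_t\|^2\mid\bZ]$ is nondecreasing and $\ge \phi$, and $t\mapsto g(t):=\bEx[\|\hat V_t\|^2]$ is nondecreasing. Writing $\tfrac2\delta\bEx[F_\delta\mid\bZ] = \phi + \bar R$, the remainder $\bar R\ge 0$ pointwise and $\bEx[\bar R] = \tfrac1\delta\int_0^\delta(g(t)-g(0))\,dt \le g(\delta)-g(0) = \epsilon_k$. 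Then
\[
 \var(\phi) = \var(\phi+\bar R) - 2\cov(\phi,\bar R) - \var(\bar R) \le \var(\phi+\bar R) + 2\,\bEx[\phi]\,\bEx[\bar R],
\]
using $\var(\bar R)\ge 0$ and $\cov(\phi,\bar R) = \bEx[\phi\bar R]-\bEx[\phi]\bEx[\bar R]\ge -\bEx[\phi]\bEx[\bar R]$ as $\phi,\bar R\ge 0$. Finally $\var(\phi+\bar R) = \var(\bEx[\tfrac2\delta F_\delta\mid\bZ]) \le \tfrac4{\delta^2}\var(F_\delta)$ by Jensen, and $\bEx[\phi]\,\bEx[\bar R]\le \langle B_k,\ex{\bO}\rangle\,\epsilon_k$. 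This is the scalar bound of Step 1; summing over $k$ yields \eqref{eq:var_inq}.

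\textbf{Main obstacle.} The algebra of Step 3 is short and Step 1 is elementary; the substantive work is Step 2 — justifying the innovations representation, the true-martingale property of $\int_0^\delta \hat V_s^\top d\tilde\bW_s$, and the attendant Fubini exchanges (all fine under the standing finite-fourth-moment hypothesis, since $\bEx[\|\hat V_t\|^2]\le\bEx[\|\bX u_k\|^2]<\infty$; and if the relative-entropy variance is infinite the claim is vacuous), together with the two sufficiency reductions that identify the continuous-time endpoint with $\bY_{\delta B_k}$ and show this does not change the free-energy variance. Conceptually, the crux is that one has only the \emph{one-sided} relation $\phi \le \tfrac2\delta\bEx[F_\delta\mid\bZ]$ — monotonicity of the causal filtering error — whose overshoot has the small mean $\epsilon_k$; it is precisely this asymmetry, rather than a two-sided identity, that produces the asymmetric first term on the right-hand side of \eqref{eq:var_inq}.
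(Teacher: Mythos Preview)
Your proof is correct and, for the scalar variance estimate (Step~3), takes a genuinely different and more elementary route than the paper. Both proofs share the same frame decomposition (your Step~1 is the paper's Lemma~\ref{lem:frame}) and the same continuous-time embedding with the innovations representation of the free energy (your Step~2 is the paper's \eqref{eq:Ft_decomp}). The divergence is in how the scalar bound is extracted. The paper proves a version valid at every time $t$ (Proposition~\ref{prop:var_inq}): it writes $\var(\hat{\bO}_t)$ as a sum of two covariances with $F_{t+\delta}-F_t$, bounds the first by Cauchy--Schwarz, and controls the second by invoking a further It\^o decomposition of $\hat{\bO}_s-\hat{\bO}_t$ (Lemma~\ref{lem:rep_Oh}, a Kushner--Stratonovich-type identity) together with the nonnegativity trick $\cov(A,B)\ge -\bEx[A]\bEx[B]$ for $A,B\ge 0$, finishing with the quadratic inequality $x\le\sqrt{bx}+a\Rightarrow x\le b+2a$. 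You exploit the special feature of $t=0$, namely that $\phi=\hat{\bO}_0$ is $\cF_0=\sigma(\bZ)$-measurable: conditioning on $\bZ$ kills the stochastic integral directly, giving $\tfrac{2}{\delta}\bEx[F_\delta\mid\bZ]=\phi+\bar R$ with $\bar R\ge 0$ by the submartingale property of $\|\hat V_t\|^2$, after which a one-line variance identity and the law of total variance finish the job. Your argument thus bypasses Lemma~\ref{lem:rep_Oh}, the Cauchy--Schwarz step, and the quadratic trick entirely; the cost is that it does not yield the more general Proposition~\ref{prop:var_inq} at arbitrary $t$, but only the $t=0$ case actually needed for Theorem~\ref{thm:var_inq}.
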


By \eqref{eq:OtoMMSE} and the data processing inequality for the MMSE matrix (see Proposition~\eqref{prop:DPI} ahead), the mapping $R \mapsto \ex{ \bO_{R}}$ is order-preserving and so the second term in \eqref{eq:var_inq} is non-negative and non-decreasing in $\delta$.

The next result provides an upper bound on the squared deviation averaged over a matrix-valued perturbation. For $\beta > 0$, let $\mu_\beta$ be the probability measure of the $d \times d$ positive semidefinite random matrix
\begin{align}
\frac{\beta }{2} U_0  I_d + \frac{\beta}{2d} \sum_{k=1}^{d^2} U_k B_k,  \label{eq:eps_random}
\end{align}
where $B_1, \dots, B_{d^2}$ are defined as in \eqref{eq:def_basis}  and $U_0, \dots, U_{d^2}$ are i.i.d.\ according to the uniform distribution on $[0,1]$. 
Note that $\sum_{k=0}^{d^2} B_k =d \,  I_d$ and thus $\mu_\beta$ is supported on the bounded set $\{ \eps \in \psd^d \, : \,   \eps \preceq  \beta I_d\}$ . 

The following result is proved in Section~\ref{proof:thm:overlap_concentration}.

\begin{theorem} \label{thm:overlap_concentration} Assume that $(\ex{ \|\bX\|^4})^{1/4}  \le \sqrt{n} \, \rho$ and, for each $R \in \psd^d$, let  $\bO_R$ be the overlap matrix associated with the observation pair $(\bY_R, \bZ)$. Further, suppose that $\psi \colon \psd^d \to \psd^d$ is a function that is order-preserving with respect to the Loewner order, i.e., $\eps_1 \preceq \eps_2 \implies \psi(\eps_1)\preceq \psi(\eps_2)$, and bounded according to $\|\psi(\eps) \|\le \gamma$ for all $\eps \in \psd^d$. Then, for all $\beta, \delta > 0$, 
\begin{align}
\frac{1}{n^2}  \int  \ex{ \|  \bO_{ \eps + \psi(\eps)    }   - \ex{ \bO_{ \eps + \psi(\eps)  } } \|_\fro^2} \, \mu_{\beta}(\eps)  
  \le  2 \sqrt{ \frac{2   d^3 \rho^6 }{ \beta n}   }  + \frac{4 \delta d^2 \rho^4 }{\beta } + \frac{ 4 d^2 }{ \delta^2 n^2}  \overline{V},  \label{eq:overlap_concentration}
\end{align}
where
\begin{align}
\overline{V} &\coloneqq    \max_{\eps, \tilde{\eps}   \in \psd^d \, : \, \|\eps\| \le \beta  + \gamma, \|\tilde{\eps} \| \le \delta }   V\left( P_{\tilde{\bY}_{\tilde{\eps}}, \bY_{\eps   }, \bZ}  \,  \| \,  P_{\tilde{\bW}, \bY_{\eps  } , \bZ}  \right).
\end{align}
In this expression,  $\tilde{\bY}_{\tilde{\eps}} \coloneqq \bX \tilde{\eps}^{1/2} + \tilde{\bW}$ is an additional observation where  $\tilde{\bW}$ is a standard Gaussian matrix that is independent of everything else.
\end{theorem}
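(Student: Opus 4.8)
Fix $\eps$ and write $R=\eps+\psi(\eps)$. The plan is to split
\[
\ex{\|\bO_R-\ex{\bO_R}\|_\fro^2}=\ex{\|\ex{\bO_R\mid\bY_R,\bZ}-\ex{\bO_R}\|_\fro^2}+\ex{\|\bO_R-\ex{\bO_R\mid\bY_R,\bZ}\|_\fro^2}
\]
(the cross term vanishes by the tower rule), to bound the first (``disorder'') term with the variance inequality of Theorem~\ref{thm:var_inq}, to bound the second (``thermal'') term by a direct second-moment computation together with a pointwise I-MMSE estimate, and finally to integrate both bounds against $\mu_\beta$ and divide by $n^2$, exploiting the order-preserving property of $\psi$ so that the $\eps$-integral of the relevant increments telescopes.

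\emph{Disorder term.} Theorem~\ref{thm:var_inq}, applied with its generic observation replaced by the pair $(\bY_R,\bZ)$ and its incremental channels taken to be the $\bY_{\delta B_k}$ (noting that $\bY_R$ together with an independent $\bY_{\delta B_k}$ is, as a channel for $\bX$, equivalent to $\bY_{R+\delta B_k}$), gives
\[
\ex{\|\ex{\bO_R\mid\bY_R,\bZ}-\ex{\bO_R}\|_\fro^2}\le 2\sum_{k}\langle B_k,\ex{\bO_R}\rangle\,\langle B_k,\ex{\bO_{R+\delta B_k}}-\ex{\bO_R}\rangle+\frac{4}{\delta^2}\sum_k V\!\big(P_{\bY_{\delta B_k},\bY_R,\bZ}\,\|\,P_\bW\otimes P_{\bY_R,\bZ}\big).
\]
On the support of $\mu_\beta$ one has $\|\eps\|\le\beta$, hence $\|R\|\le\beta+\gamma$, and $\|\delta B_k\|=\delta$; so each relative-entropy-variance term is at most $\overline V$, which contributes the last term of \eqref{eq:overlap_concentration} after division by $n^2$. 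For the first sum I would use $\langle B_k,\ex{\bO_R}\rangle\le\ex{\|\bX v_k\|^2}\le(\ex{\|\bX\|^4})^{1/2}\le n\rho^2$ (writing $B_k=v_kv_k^\top$, $\|v_k\|=1$) and observe that $\ex{\bO_{R+\delta B_k}}-\ex{\bO_R}=\MMSE(\bX\mid\bY_R,\bZ)-\MMSE(\bX\mid\bY_{R+\delta B_k},\bZ)\succeq 0$ by the data-processing inequality for the MMSE matrix; then, averaging over $\eps\sim\mu_\beta$, the order-preservation of $\eps\mapsto\eps+\psi(\eps)$ and of $R'\mapsto\ex{\bO_{R'}}$ lets the $\mu_\beta$-average of the $k$-th increment telescope along the length-$\tfrac{\beta}{2d}$ interval over which $U_k$ advances the $B_k$-component of $R$, giving $\int\langle B_k,\ex{\bO_{R+\delta B_k}}-\ex{\bO_R}\rangle\,\mu_\beta(\eps)\le\tfrac{2d\delta}{\beta}\langle B_k,\ex{\bX^\top\bX}\rangle$. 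Summing over the $d^2$ indices via $\sum_k B_k=d\,I_d$ and $\gtr\ex{\bX^\top\bX}\le dn\rho^2$ reproduces the second term on the right of \eqref{eq:overlap_concentration}.

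\emph{Thermal term.} Since $\bO_R-\ex{\bO_R\mid\bY_R,\bZ}$ has conditional mean zero, $\ex{\|\bO_R-\ex{\bO_R\mid\bY_R,\bZ}\|_\fro^2}$ is the sum over all entries of their conditional variances, and expanding $\bO_R=\bX'^\top\bX''$ over the two conditionally i.i.d.\ replicas gives, with $P:=\ex{\bX\mid\bY_R,\bZ}$ and $\hat C:=\ex{\bX\bX^\top\mid\bY_R,\bZ}-PP^\top\in\psd^n$,
\[
\ex{\|\bO_R-\ex{\bO_R\mid\bY_R,\bZ}\|_\fro^2}=\ex{\|\hat C\|_\fro^2}+2\,\ex{\gtr(P^\top\hat C P)}
\]
(this handles the non-symmetry of the overlap automatically, being a sum over all entries). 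The second term is at most $2\,\ex{\|\hat C\|_\fro\|P\|_\fro^2}\le 2(\ex{\|\bX\|_\fro^4})^{1/2}(\ex{\|\hat C\|_\fro^2})^{1/2}\le 2dn\rho^2(\ex{\|\hat C\|_\fro^2})^{1/2}$ by Cauchy--Schwarz; and $\ex{\|\hat C\|_\fro^2}$ is controlled, via the pointwise continuous-time I-MMSE relation in the $I_d$ direction (observing $\bX$ through an extra independent SNR-$t$ Gaussian channel), by $-\tfrac{d}{dt}\gtr\MMSE(\bX\mid\bY_{R+tI_d},\bZ)\big|_{t=0}$ up to factors of $d$. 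Averaging over $\eps\sim\mu_\beta$, the $\tfrac{\beta}{2}U_0I_d$ component advances the $I_d$-direction of $R$ over an interval of length $\tfrac{\beta}{2}$, and $\gtr\MMSE$ is monotone along the resulting increasing path with total variation at most $\gtr\ex{\bX^\top\bX}\le dn\rho^2$, so $\int\ex{\|\hat C\|_\fro^2}\,\mu_\beta(\eps)$ is of order $\tfrac{dn\rho^2}{\beta}$ (up to $d$-factors); one Jensen inequality on the square-root term then makes $\tfrac{1}{n^2}\int\ex{\|\bO_R-\ex{\bO_R\mid\bY_R,\bZ}\|_\fro^2}\,\mu_\beta(\eps)$ of the order of the first term in \eqref{eq:overlap_concentration}. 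Adding the disorder and thermal bounds finishes the proof.

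The step I expect to be the main obstacle is the averaging/telescoping against $\mu_\beta$ in the presence of the order-preserving shift $\psi$: one must verify that adding $\psi(\eps)$ does not spoil the reduction of the $\mu_\beta$-integral of the directional increments of $R'\mapsto\ex{\bO_{R'}}$ (and of $\gtr\MMSE$) to a one-dimensional telescoping sum in the relevant direction, which calls for a change of variables from the uniform variables $U_k$ to the genuine path parameter together with a Loewner-monotone comparison — exactly where order-preservation of $\psi$, rather than a smoothness hypothesis, is used. A secondary technical point is to justify the pointwise I-MMSE identities along matrix-valued, possibly rank-deficient, perturbation directions (here $I_d$ and the $B_k$), for which the continuous-time Gaussian coupling underlying Theorem~\ref{thm:var_inq} — rather than a naive scalar-SNR differentiation — is the clean route.
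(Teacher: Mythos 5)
Your proposal is correct and follows essentially the same route as the paper: the same orthogonal decomposition into a conditional-variance ("thermal") term and a deviation-of-the-conditional-mean ("disorder") term, the same use of Theorem~\ref{thm:var_inq} plus data processing for the latter, the same reduction of the former to $\ex{\|\cov(\gvec(\bX)\mid\cG)\|_\fro^2}$ controlled by the second-order I-MMSE derivative in the $I_d$ direction, and the same two-argument monotonicity/telescoping device (partial increment bounded by the total increment because the $\psi$-contribution is nonnegative) to average against $\mu_\beta$. You also correctly identify the order-preservation of $\psi$ as the crux enabling that last step, which is exactly how the paper uses it.
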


To appreciate the significance of Theorem~\ref{thm:overlap_concentration}, it is useful to first consider the case where $\psi$ is identically zero such that the left-hand side is simply the $\mu_\beta$-averaged squared deviation of the overlap. If the upper bound on the relative entropy variance $\overline{V}$ scales at a rate slower then $n^2$ for some fixed pair $(\beta_0, \delta_0)$, then there exists a sequence $(\beta_n, \delta_n)_n$ decreasing to zero such that the right-hand side of \eqref{eq:overlap_concentration} converges to zero in the large-$n$ limit. In other words, concentration of the free energy is sufficient to ensure concentration of the conditional overlap under a vanishingly small perturbation. 

A further contribution of Theorem~\ref{thm:overlap_concentration} is that the perturbation argument still holds under the change of measure given by $\eps \mapsto \eps + \psi(\eps)$,  provided that $\psi$ is order-preserving. This property is crucial for the adaptive interpolation used in Section~\ref{proof:prop:approx_bound}, where the specification of the interpolation path depends on the perturbation. 

Theorem~\ref{thm:overlap_concentration} is related to recent work by  Barbier~\cite{barbier:2020c}, who also  uses perturbation arguments to establish overlap concentration in terms of the variance of the free energy. While there are a number of similarities in the approaches there are also some important differences. First, the results in \cite{barbier:2020c} are based on a componentwise decomposition of the overlap matrix which differs with the frame-based approach in this paper. As a consequence, it is unclear whether the resulting bounds in  \cite{barbier:2020c} still hold for an  arbitrary order-preserving transformation $\psi$. Second, our use of the  variance inequality in Theorem~\ref{thm:var_inq} provides tighter control on the error terms. For example, we do not require that the rows have bounded support. For the applications considered in this paper, the improvements provided by the variance inequality lead to better convergence rates with respect to the dimensions $n$ and $d$.

\subsection{General bound based on relative entropy variance}\label{sec:general_bounds}

The bounds on the difference between the mutual information and the approximation formula given in Section~\ref{sec:mutual_information}  require an independence assumption on the rows of $\bX$ and the side information. In this section, we provide a more general result that depends only on relative entropy variance.

For $R, \eps  \in \psd^d$ and $S \in \psd^{d^2}$ define the normalized relative entropy variance
\begin{align}
\cV_n(\eps  \mid R, S) & \coloneqq \frac{1}{n^2} V\left( P_{\tilde{\bY}_{\eps } , \bY_{R,S} , \bZ} \, \| \, P_{\tilde{\bW},  \bY_{R,S} , \bZ} \right), \label{eq:Vn} 
\end{align}
where   $\tilde{\bY}_{\eps} \coloneqq \bX \eps^{1/2} + \tilde{\bW}$ is an additional observation and $\tilde{\bW}$ is a standard Gaussian matrix that is independent of everything else. In words, this is the relative entropy variance between the conditional distribution of a new observation $\tilde{\bY}_\eps$ given $(\bY_{R,S}, \bZ)$ and the standard Gaussian measure on $\reals^{n \times d}$, i.e., 
\begin{align}
\cV_n(\eps  \mid R, S) & = \frac{1}{n^2}  \var\left( \log \frac{ \dd  P_{\tilde{\bY}_{\eps} \mid \bY_{R,S}, \bZ}}{\dd P_{\tilde{\bW}}} \right).
\end{align}

The following result provides a non-asymptotic bound on the difference between the normalized mutual information $\cI_n$  function defined by \eqref{eq:I_RS} and the approximation formula  $\hat{\cI}_n$ defined by \eqref{eq:Ih_RS}. The proof is  given in Section~\ref{proof:thm:general_bound}. 

\begin{theorem}\label{thm:general_bound}
Assume that $(\ex{ \|\bX\|^4})^{1/4} \le \sqrt{n}\,  \rho$. For a pair $(R,S) \in \psd^d \times \psd^{d^2}$ and  $\beta, \delta > 0$, 
define
\begin{align}
\overline{\cV}_n \coloneqq  \max_{\eps, \tilde{R}, \tilde{S}  } \cV_n(\eps\mid  R+ \tilde{R} , \tilde{S} ),
\end{align}
where the maximum is over positive semidefinite matrices $\eps, \tilde{R} , \tilde{S}$ subject to the constraints
\begin{align}
& \| \eps \| \le \delta,   \quad  \|\tilde{R} \| \le    \beta + 2 d \rho^2 \|S\|, \quad  \|\tilde{S}\|  \le \|S\|  .
\end{align}
Then, the error in the approximation formula satisfies 
\begin{align}
 \big | \cI_n(R, S)   -   \hat{\cI}_n(R, S) \big |  \le  \frac{\beta d\rho^2}{2}  +\|S\| \left[  \sqrt{ \frac{2   d^3 \rho^6 }{ \beta n}   }  + \frac{2 \delta d^2 \rho^4 }{ 2\beta} +  \frac{2 d^2}{ \delta^2 }  \overline{\cV}_n \right]  . \label{eq:general_bound}
 \end{align}
\end{theorem}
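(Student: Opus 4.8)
The plan is to realize the approximation formula via the adaptive interpolation method, using an order-preserving positive semidefinite interpolation path connecting the matrix tensor product observation $\bY_{R,S}$ at one endpoint to a purely linear observation $\bY_{R+\tilde R,0}$ at the other. First I would introduce an interpolating family of observation models parameterized by $t\in[0,1]$, in which the tensor-product signal strength is scaled down by a factor depending on $t$ while simultaneously an extra linear observation channel of strength $\tilde R(t)$ is turned on. The path $\tilde R(t)$ is defined through a first-order ODE whose right-hand side involves the expected overlap matrix along the interpolation; the key structural point, supplied by Lemma~\ref{lem:ODE_comp}, is that this ODE produces a path that is order-preserving in the Loewner sense, which is exactly what is needed to legitimately add a matrix-valued perturbation $\eps\mapsto\eps+\psi(\eps)$ drawn from $\mu_\beta$ along the path without destroying monotonicity. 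Differentiating the interpolating mutual information in $t$ and applying the matrix I-MMSE relations \eqref{eq:I_RS_gradient}, the derivative splits into a ``main'' term that telescopes to give the bracketed expression in \eqref{eq:Ih_RS} (after optimizing over $\tilde R$ and $Q$) and a ``remainder'' term proportional to $\|S\|$ times the deviation of the overlap matrix from its mean.

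Second, I would bound the remainder term by the overlap-concentration estimate of Theorem~\ref{thm:overlap_concentration}. This is where the three error contributions in \eqref{eq:general_bound} enter: averaging the remainder over the perturbation measure $\mu_\beta$ converts $\mathbb E\|\bO_{\eps+\psi(\eps)}-\mathbb E\bO_{\eps+\psi(\eps)}\|_\fro^2$ into the right-hand side of \eqref{eq:overlap_concentration}, giving the $\sqrt{2d^3\rho^6/(\beta n)}$, the $\delta d^2\rho^4/\beta$, and the $\frac{4d^2}{\delta^2 n^2}\overline V$ pieces, with $\overline V$ identified with $n^2\overline{\cV}_n$. The role of $\psi$ here is precisely the perturbation built into the adaptive interpolation path, so the boundedness constant $\gamma$ should come out to be $O(d\rho^2\|S\|)$, which explains the constraint $\|\tilde R\|\le\beta+2d\rho^2\|S\|$ appearing in the definition of $\overline{\cV}_n$. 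The Cauchy–Schwarz step pairing the $\|S\|$-weighted overlap deviation against $\|\bO\|$ (which is $O(nd\rho^2)$ in spectral/trace norm under the fourth-moment bound $(\mathbb E\|\bX\|^4)^{1/4}\le\sqrt n\rho$) is what produces the overall prefactor $\|S\|$ and the $1/2$ in front of $\delta d^2\rho^4/\beta$.

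Third, I would handle the cost of the perturbation itself: shifting $R$ to $R+\eps$ with $\eps$ drawn from $\mu_\beta$, supported on $\{\eps\preceq\beta I_d\}$, changes the mutual information by at most $\tfrac12\langle\eps,\MMSE(\bX\mid\cdot)\rangle/n\le\tfrac12\|\eps\|\cdot\tfrac1n\gtr\MMSE(\bX\mid\bZ)\le\tfrac12\beta d\rho^2$, using $\tfrac1n\gtr\mathbb E[\bX^\top\bX]\le d\rho^2$. This is the leading $\tfrac{\beta d\rho^2}{2}$ term in \eqref{eq:general_bound}. Assembling the three contributions — perturbation cost, overlap-concentration bound scaled by $\|S\|$, and verifying that the main term of the interpolation derivative integrates exactly to $\hat{\cI}_n(R,S)$ after the min–sup optimization — yields the stated inequality.

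The main obstacle I anticipate is the construction and verification of the order-preserving interpolation path: one must set up the defining ODE for $\tilde R(t)$ so that (a) the endpoints are correct, (b) the derivative of the interpolating mutual information cleanly separates into the target single-letter term plus a genuinely small overlap-deviation remainder, and (c) the Loewner monotonicity of $t\mapsto\tilde R(t)$ (and its stability under adding $\psi$) can be certified via the comparison inequality of Lemma~\ref{lem:ODE_comp}. Getting the bookkeeping of the Kronecker-square signal $\tfrac1n\bX^{\otimes2}S^{1/2}$ to interact correctly with the linear channel — in particular that the relevant overlap quantity is the $d\times d$ matrix $\tfrac1n\bX^\top\bX$ and its tensor square, matching the terms $\langle S,\mathbb E[(\tfrac1n\bX^\top\bX)^{\otimes2}]-Q^{\otimes2}\rangle$ and $\langle\tilde R,\mathbb E[\tfrac1n\bX^\top\bX]-Q\rangle$ in \eqref{eq:Ih_RS} — is the delicate computational heart of the argument, though it is ``routine'' given the framework. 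The concentration and perturbation-cost estimates, by contrast, should follow fairly mechanically from Theorems~\ref{thm:var_inq} and~\ref{thm:overlap_concentration} once the path is in place.
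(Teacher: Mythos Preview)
Your proposal is correct and matches the paper's approach: the proof in Section~\ref{proof:thm:general_bound} invokes Proposition~\ref{prop:approx_bound} (the adaptive interpolation step, yielding the perturbation cost $\tfrac{\beta d\rho^2}{2}$ and the $\|S\|$-weighted overlap-deviation remainder you describe) and then averages over $\mu_\beta$ and applies Theorem~\ref{thm:overlap_concentration} with $\psi_t(\eps)=\Phi(t,\eps)-\eps$ exactly as you outline, identifying $\overline V_t$ with $n^2\overline{\cV}_n$. One minor correction: the $\|S\|$ prefactor on the remainder arises from the matrix H\"older inequality $|\langle S,(\bO_t-\ex{\bO_t})^{\otimes 2}\rangle|\le\|S\|\,\|\bO_t-\ex{\bO_t}\|_\fro^2$ (using $\|M^{\otimes 2}\|_*=\|M\|_\fro^2$), not from a Cauchy--Schwarz pairing against $\|\bO\|$.
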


Similar to  the bound on the overlap concentration given  in Theorem~\ref{thm:overlap_concentration}, one of the main takeaways from Theorem~\ref{thm:general_bound} is that convergence of the upper bound on the entropy variance $\overline{\cV}_n$ is sufficient to ensure that the difference $| \cI_n(R, S)   -   \hat{\cI}_n(R, S)  |$ converges to zero. Specifically,  note that $\overline{\cV}_n$ is nondecreasing as a function of $(\beta, \delta)$. Thus, if $\overline{\cV}_n$ converges to zero for some fixed pair $(\beta_0, \delta_0)$, then there exists a sequence $(\beta_n, \delta_n)_n$  decreasing to zero such that the right-hand side of \eqref{eq:general_bound} converges to zero. 

One setting where convergence of $\bar{\cV}_n$ can be established directly is the setting considered in Theorem~\ref{thm:MI_bound}. Establishing the convergence of $\overline{\cV}_n$ under more general conditions is an interesting direction for future work.

\section{Properties and examples} \label{sec:prop_approx}

This section studies properties of the mutual information function and the approximation formula given in Definition~\ref{def:Ih} and also provides some examples of how this formula can be applied to the asymmetric and multiview models.

\subsection{Properties of mutual information and MMSE} 

The mutual information and MMSE associated with the joint observation model \eqref{eq:Y_RS} have a number of functional properties that play a prominent role in our analysis. The observations associated with a pair $(R,S)$ can be split  into two separate observations given by $(\bY_{R}, \bY_{0,S})$, where we use the convention that $\bY_{R} = \bY_{R,0}$.  The chain rule for mutual information then yields
\begin{align}
I(\bX ; \bY_{R,S} \mid \bZ) & = I(\bX ; \bY_{R} \mid \bZ)   +  I(\bX ; \bY_{0,S} \mid  \bY_{R} , \bZ). \label{eq:MI_reduction} 
\end{align}
The first term does not depend on $S$ and in the second term,  the observations $\bY_R$ appear only  in the conditioning argument and can be viewed as additional side information. Because our basic formulation does not place constraints on the side information, this means the case of general $(R,S)$ can be reduced to the case $(0,S)$ with augmented side information. Note  that in the setting of Theorem~\ref{thm:MI_bound}, the additional observations $\bY_{R}$ satisfy the independence assumption required on the side information and so there is no loss of generality in this reduction. 

Next,   the data processing inequality for mutual information says that if variables $(X,Y,Z)$ form a Markov chain, that is $Z$ and $X$ are conditionally independent given $Y$, then $I(X;Y) \ge I(X ; Z)$. An estimation-theoretic version of the data processing inequality \cite[Proposition~5]{rioul:2011} applies to the MMSE: if $X$ is a random vector and $(X,Y,Z)$ form a Markov chain then $\mmse( X \mid Y) \le \mmse( X \mid Z)$.  When the mutual information and MMSE are viewed as functions of the matrices $(R,S)$ in the joint model \eqref{eq:Y_RS}, the data processing inequality implies an order preserving property.

\begin{prop}[Data processing inequality]\label{prop:DPI} The mapping $(R,S) \mapsto I(\bX ; \bY_{R,S} \mid \bZ)$ is order-preserving with respect to the Loewner order and the mapping $(R,S) \mapsto \MMSE(\bX \mid \bY_{R,S} ,  \bZ)$ is order-reversing in the Leowner order, i.e., if $R_1 \preceq R_2$ and $S_1 \preceq S_2$ then 
\begin{subequations} 
\begin{align}
I(\bX ; \bY_{R_1,S_1} \mid \bZ) &\le I(\bX ; \bY_{R_2,S_2} \mid \bZ)\\
 \MMSE(\bX \mid \bY_{R_1,S_1} ,  \bZ)  &\succeq  \MMSE(\bX \mid \bY_{R_2,S_2} ,  \bZ).
\end{align}
\end{subequations} 
\end{prop}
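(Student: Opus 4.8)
The plan is to realize the passage from $(R_2,S_2)$ to $(R_1,S_1)$ as a stochastic degradation: I will exhibit an explicit Gaussian channel that, applied to $\bY_{R_2,S_2}$ while passing $\bZ$ through unchanged, produces a random matrix with the same joint law (together with $(\bX,\bZ)$) as $\bY_{R_1,S_1}$. This makes $\bX \to (\bY_{R_2,S_2},\bZ) \to (\bY_{R_1,S_1},\bZ)$ a Markov chain, after which the mutual-information statement is the data processing inequality for $I(\,\cdot\,;\,\cdot\mid\bZ)$, and the MMSE-matrix statement follows by testing the claimed Loewner inequality against rank-one directions and invoking the estimation-theoretic data processing inequality recalled just above.

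First I would treat the two observation blocks separately. For the linear block, set $C_R \coloneqq (R_2^{1/2})^\dagger R_1^{1/2}$. Since $R_1 \preceq R_2$ forces $\mathrm{col}(R_1)\subseteq \mathrm{col}(R_2)$, one has $R_2^{1/2} C_R = R_1^{1/2}$, so
\[
\bY_{R_2}\, C_R + \bW_R' \;=\; \bX R_1^{1/2} + \bigl(\bW C_R + \bW_R'\bigr),
\]
where $\bW_R'$ is a fresh Gaussian matrix, independent of all the original randomness, with i.i.d.\ rows $\normal(0,\,I_d - C_R^\top C_R)$. Then the rows of $\bW C_R + \bW_R'$ are i.i.d.\ $\normal(0, C_R^\top C_R + (I_d - C_R^\top C_R)) = \normal(0, I_d)$ and independent of $(\bX,\bZ)$, so the left-hand side has the same joint law with $(\bX,\bZ)$ as $\bY_{R_1}$. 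The only point needing justification is that $I_d - C_R^\top C_R$ is a genuine covariance, i.e.
\[
C_R^\top C_R \;=\; R_1^{1/2} R_2^\dagger R_1^{1/2} \;\preceq\; I_d ,
\]
and this is the one place where rank-deficiency of $R_1,R_2$ matters. I would prove it by writing $R_2 = U\Lambda U^\top$ with $U$ having orthonormal columns spanning $\mathrm{col}(R_2)$ and $\Lambda \succ 0$; then $R_1 = U M U^\top$ with $0\preceq M\preceq\Lambda$ (using $\mathrm{col}(R_1)\subseteq\mathrm{col}(R_2)$ and $R_1\preceq R_2$), whence $R_1^{1/2}R_2^\dagger R_1^{1/2} = U\bigl(M^{1/2}\Lambda^{-1}M^{1/2}\bigr)U^\top$, and $M^{1/2}\Lambda^{-1}M^{1/2}$ has the same eigenvalues as $\Lambda^{-1/2}M\Lambda^{-1/2}\preceq I$, giving the displayed bound. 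The tensor block is handled identically with $\bX$ replaced by $n^{-1/2}\bX^{\otimes 2}$, $R$ by $S$, and $d$ by $d^2$, yielding $C_S \coloneqq (S_2^{1/2})^\dagger S_1^{1/2}$ and the companion bound $S_1^{1/2}S_2^\dagger S_1^{1/2}\preceq I_{d^2}$.

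Applying both transformations at once, with mutually independent fresh noise that is also independent of everything else, maps $\bY_{R_2,S_2}=(\bY_{R_2},\bY_{0,S_2})$ to a pair equal in joint law with $(\bX,\bZ)$ to $\bY_{R_1,S_1}$; keeping $\bZ$ untouched produces the Markov chain $\bX \to (\bY_{R_2,S_2},\bZ) \to (\bY_{R_1,S_1},\bZ)$. The conditional data processing inequality then gives $I(\bX;\bY_{R_1,S_1}\mid\bZ) \le I(\bX;\bY_{R_2,S_2}\mid\bZ)$. For the MMSE matrix, observe that for every $v\in\reals^d$ the chain $\bX v \to (\bY_{R_2,S_2},\bZ)\to(\bY_{R_1,S_1},\bZ)$ is Markov as well, and since the conditional mean is the minimizer, $\langle v v^\top, \MMSE(\bX\mid\bY_{R,S},\bZ)\rangle = \mmse(\bX v\mid\bY_{R,S},\bZ)$. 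The estimation-theoretic data processing inequality gives $\mmse(\bX v\mid\bY_{R_1,S_1},\bZ)\ge\mmse(\bX v\mid\bY_{R_2,S_2},\bZ)$, i.e.\ $\langle vv^\top, \MMSE(\bX\mid\bY_{R_1,S_1},\bZ)-\MMSE(\bX\mid\bY_{R_2,S_2},\bZ)\rangle\ge 0$, and ranging over all $v$ yields the Loewner inequality.

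The main — essentially only — obstacle is the rank-deficient case, packaged in $R_1^{1/2}R_2^\dagger R_1^{1/2}\preceq I_d$: for positive definite $R_1,R_2$ this is immediate from operator anti-monotonicity of the inverse, and the compact-diagonalization argument above is what extends it cleanly to the boundary of $\psd^d$; the analogous care is needed for $S$. As an alternative, the mutual-information half alone follows directly from the I-MMSE identities \eqref{eq:I_RS_gradient} together with concavity of $\cI_n$, since $\nabla_R\cI_n$ and $\nabla_S\cI_n$ are positive semidefinite everywhere, so integrating along the segment from $(R_1,S_1)$ to $(R_2,S_2)$ — which stays in the positive semidefinite cone and has positive semidefinite direction — shows $\cI_n$ is non-decreasing in the Loewner order; the degradation route is preferred only because it also delivers the MMSE half in one stroke.
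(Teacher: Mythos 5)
Your proposal is correct and follows essentially the same route as the paper, which states the proposition as an immediate consequence of the data processing inequalities for mutual information and MMSE recalled just before it, leaving the underlying degradation argument implicit. Your explicit construction of the Gaussian degradation channel via $C_R=(R_2^{1/2})^\dagger R_1^{1/2}$, including the verification that $R_1^{1/2}R_2^\dagger R_1^{1/2}\preceq I_d$ in the rank-deficient case, simply supplies the details the paper omits.
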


\subsection{Alternative characterizations}\label{sec:alternate_formulas}

It is convenient to consider a different characterization of the approximation formula where the mutual information is replaced by a relative entropy term. Specifically, let $\cD_n \colon \psd^d \times \psd^{d^2} \to \reals$ be defined as
\begin{align}
\cD_n(R,S) \coloneqq \frac{1}{n} D( P_{\bY_{R,S}, \bZ} \, \| \, P_{\bW , \bW'} \otimes P_{\bZ} ).
\end{align}
This function is related to the mutual information via the identity 
\begin{align}
\cI_n(R,S) + \cD_n(R,S)  
=  \frac{1}{2} \langle R , \ex{ \tfrac{1}{n} \bX^\top \bX} \rangle +  \frac{1}{2} \langle S , \ex{ (\tfrac{1}{n} \bX^\top \bX)^{\otimes 2} } \rangle, \label{eq:InDn}
\end{align}
which follows from recognizing the right-hand side as $\frac{1}{n} D(P_{\bX, \bY_{R,S}, \bZ } \, \| \, P_{\bX , \bW, \bW' , \bZ})$ and then applying the chain rule for relative entropy.  
In view of \eqref{eq:InDn}, the relative entropy approximation formula $\hat{\cD}_n \colon  \psd^d \times \psd^{d^2} \to \reals$ is defined according to  
\begin{align}
\label{eq:Dh_RS}
\hat{\cD}_n(R, S)  
\coloneqq    \adjustlimits   \max_{Q \in \cQ} \inf_{\tilde{R} \in \psd^d} \left\{ \cD_n(R+ \tilde{R})  + \frac{1}{2}  \langle S,Q^{\otimes 2} \rangle -  \frac{1}{2} \langle \tilde{R},  Q \rangle   \right\}, 
\end{align}
where $\cQ \coloneqq \{ Q \in \psd^d \, : \, Q \preceq \ex{\frac{1}{n}  \bX^\top \bX}\}$ and $\cD_n(R) \coloneqq \cD_n(R,0)$. Note that  $\cI_n- \hat{\cI}_n = \hat{\cD}_n - \cD_n$ and so the error of the approximation formula in the relative entropy formulation is the same as in the mutual information formulation. 

The next result provides a characterization of the bilinear form $(Q_1, Q_2) \mapsto  \langle S , Q_1 \otimes Q_2\rangle$ that is used to provide alternative representations of the approximation formula. 

\begin{lemma}\label{lem:lin_op}
For each $S \in \sym^{d^2}$ there exist unique linear operators $\cT , \cT^* \colon \sym^d \to \sym^d$ such that 
\begin{align*}
\langle S, M_1 \otimes M_2 \rangle  = \langle \cT(M_1) , M_2 \rangle = \langle M_1, \cT^*(M_2) \rangle,
\end{align*}
for all $M_1,M_2 \in \reals^{d \times d}$. 
For any decomposition of the form
\begin{align}
S = \sum_{k=1}^{m} \lambda_k \gvec(V_k) \gvec(V_k)^\top, \label{eq:S_decomp}
\end{align}
with $\lambda_1, \dots, \lambda_{m} \in \reals$  and $V_1, \dots, V_{m} \in  \reals^{d \times d}$, these operators can be expressed as
\begin{align*}
\cT(M)  &= \sum_{k=1}^m \lambda_k V_k M V_k^\top, \quad \cT^*(M) = \sum_{k=1}^m \lambda_k V^\top_k M V_i,
\end{align*}
Furthermore,  $\|\cT(M) \| , \|\cT^*(M)\| \le \|S\| \|M\|_*$ for all $M \in \reals^{d \times d}$, 
where $\|\cdot\|_*$ denotes the nuclear norm,  and if $S \succeq 0$ then both $\cT$ and $\cT^*$ are order-preserving with respect to the Leowner order. 
\end{lemma}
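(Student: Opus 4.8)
The plan is to first establish existence via an explicit formula, then uniqueness, then the norm bound, then the order-preservation claim. For existence, I would start from the observation that any $S \in \sym^{d^2}$ admits a decomposition of the form \eqref{eq:S_decomp}: indeed, writing $S$ in terms of its eigendecomposition $S = \sum_k \lambda_k u_k u_k^\top$ with $u_k \in \reals^{d^2}$, one simply sets $V_k$ to be the $d \times d$ matrix with $\gvec(V_k) = u_k$. Given such a decomposition, define $\cT(M) := \sum_k \lambda_k V_k M V_k^\top$ and $\cT^*(M) := \sum_k \lambda_k V_k^\top M V_k$. The key identity to verify is $\langle S, M_1 \otimes M_2 \rangle = \langle \cT(M_1), M_2 \rangle$, and this reduces to the standard ``vec trick'' identities: $\langle \gvec(V) \gvec(V)^\top, M_1 \otimes M_2 \rangle = \gvec(V)^\top (M_1 \otimes M_2) \gvec(V) = \langle V, M_1 V M_2^\top \rangle$ after using $(M_1 \otimes M_2)\gvec(V) = \gvec(M_2^\top V M_1^\top)$ (or the appropriate ordering convention matching the paper's $\gvec$). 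Expanding $\langle V, M_1 V M_2^\top\rangle = \gtr(V^\top M_1 V M_2^\top) = \langle V M_2 V^\top, M_1\rangle = \langle V^\top M_1 V, M_2 \rangle$ then gives both the $\cT$ and $\cT^*$ expressions simultaneously and shows they are adjoints. Summing against $\lambda_k$ completes the identity.

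For uniqueness: the map $M_2 \mapsto \langle \cT(M_1), M_2\rangle$ is determined on all of $\sym^d$ by the bilinear form $\langle S, M_1 \otimes M_2\rangle$, and since the trace inner product is non-degenerate on $\sym^d$, the element $\cT(M_1)$ is uniquely pinned down for each $M_1$; linearity in $M_1$ is immediate, so $\cT$ (and likewise $\cT^*$) is the unique such operator. In particular the operators do not depend on which decomposition \eqref{eq:S_decomp} is used, even though the formulas are written in terms of one. For the norm bound, I would fix a unit vector $x \in \reals^d$ and estimate $|x^\top \cT(M) x| = |\sum_k \lambda_k (x^\top V_k M V_k^\top x)| = |\langle S, M_1 \otimes (x x^\top)\rangle|$ — wait, more directly: choose the eigendecomposition of $S$ so that $\|\lambda_k\| \le \|S\|$ for all $k$ and $\sum_k \gvec(V_k)\gvec(V_k)^\top \preceq \frac{\|S\|}{\|S\|}\cdots$; cleaner is to bound $\|\cT(M)\| = \sup_{\|x\|=\|y\|=1} |y^\top \cT(M) x|$ and write $y^\top \cT(M) x = \sum_k \lambda_k (y^\top V_k)(M)(V_k^\top x) = \langle S, M \otimes \cdot\rangle$-type expression, then apply $|\langle A, B\rangle| \le \|A\| \|B\|_*$ with the roles chosen so the $\|S\|$ sits on $S$ and the nuclear norm on a rank-structured factor involving $M$; using $\|M_1 \otimes M_2\|_* = \|M_1\|_* \|M_2\|_*$ and $\|xy^\top\|_* = 1$ yields $\|\cT(M)\| \le \|S\| \|M\|_*$. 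The same argument applies verbatim to $\cT^*$.

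Finally, for order-preservation when $S \succeq 0$: if $S \succeq 0$ then in its eigendecomposition all $\lambda_k \ge 0$, so $\cT(M) = \sum_k \lambda_k V_k M V_k^\top$ is a sum of congruence transformations with nonnegative weights; since $M \succeq 0 \implies V_k M V_k^\top \succeq 0$, we get $\cT(M) \succeq 0$, i.e.\ $\cT$ maps $\psd^d$ into itself, and by linearity $M_1 \preceq M_2 \implies \cT(M_1) \preceq \cT(M_2)$. The identical reasoning handles $\cT^*$. The main obstacle I anticipate is purely bookkeeping: getting the $\gvec$/Kronecker ordering conventions consistent (the paper's $\gvec$ stacks columns, so $(A \otimes B)\gvec(C) = \gvec(B C A^\top)$, and one must track whether $\cT$ picks up $V_k M V_k^\top$ or $V_k^\top M V_k$), and making the norm bound tight by exploiting $\|S\|$ rather than a cruder $\sum_k |\lambda_k|$ bound — this requires writing the estimate as a single inner product against $S$ rather than term-by-term.
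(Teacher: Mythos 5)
Your proposal is correct and follows essentially the same route as the paper: verify the formulas via the vec identity, obtain the norm bound by writing the bilinear form $y^\top \cT(M) x$ as a single inner product $\langle S, M \otimes xy^\top\rangle$ and applying matrix H\"older together with the fact that the nonzero singular values of a Kronecker product with a rank-one factor coincide with those of $M$, and deduce order-preservation from the nonnegativity of the eigenvalues in a decomposition of $S \succeq 0$. The only item to fix is the bookkeeping you already flagged yourself: with the paper's column-stacking convention $\gvec(AXB) = (B^\top \otimes A)\gvec(X)$, so $(M_1 \otimes M_2)\gvec(V) = \gvec(M_2 V M_1^\top)$ (no transpose on $M_2$), which then yields exactly the stated expressions for $\cT$ and $\cT^*$.
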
 
\begin{proof}
We prove these results for $\cT$; the corresponding proof for $\cT^*$ follows analogously. 
 For any decomposition of the form given in \eqref{eq:S_decomp}, 
\begin{align*}
\langle S,  M_1 \otimes M_2 \rangle 
& = \sum_{k=1}^m \lambda_k \left  \langle  \gvec(V_k),  ( M_1 \otimes M_2)  \gvec(V_k) \right \rangle 
 = \sum_{k=1}^m \lambda_k \left  \langle  V_k,   M_2 V_k  M^\top_1  \right \rangle
  = \Big \langle  \sum_{k=1}^m \lambda_k  V_k M_1 V_k^\top,  M_2   \Big \rangle ,
\end{align*}
where the second step follows from the basic property $\gvec(A X B) = (B^\top \otimes A) \gvec(X)$. 

To establish the bound on the spectral norm of $\cT(M)$, observe that for any unit vector $u \in\reals^d$, 
\begin{align*}
u^\top \cT(M) u & = \langle \cT(M), u u^\top \rangle = \langle S, M \otimes u u^\top \rangle  \le  \|S\| \| M \otimes u u^\top\|_{*}  =  \|S\| \| M \|_{*},
\end{align*}
where we have used the matrix H\"{o}lder inequality and the fact that the nonzero singular values of $M \otimes u u^\top$ are the same as the nonzero singular values of $M$. 

Finally, by linearity, the order preserving property is equivalent to  $\cT(M) \succeq 0$ for all $M \in \psd^d$. If $S \succeq 0$ then the eigenvalue decomposition has the form given in \eqref{eq:S_decomp} with $\lambda_k \ge 0$. Hence, $u^\top \cT(M) u= \sum_{k=1}^m \lambda_k u^\top V_k M V_k^\top u \ge 0$ for all $M \in \psd^d$ and $u \in \reals^d$, and so $\cT$ is order preserving.  
\end{proof}

The next result shows that the approximation formula can be expressed as a max-min problem defined on a compact set. This formulation is useful for numerical approximation. 

\begin{prop}\label{prop:Dh_RS_alt}
The approximation formula $\hat{\cD}_n(R,S)$ given in \eqref{eq:Dh_RS} satisfies 
\begin{align*}
 \hat{\cD}_n(R,S)  
& =   \adjustlimits \max_{Q \in \cQ} \min_{Q \in \cQ } \left\{ D(R + \cT(Q) + \cT^*(\tilde{Q}) )  - \frac{1}{2} \langle \cT(Q),  \tilde{Q} \rangle \right\} ,  
\end{align*}
where the linear operators $\cT, \cT^* \colon \psd^d \to \psd^d$ are defined as a function of $S$ according to  Lemma~\ref{lem:lin_op}.  
\end{prop}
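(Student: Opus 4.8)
The plan is to start from the definition \eqref{eq:Dh_RS} of $\hat{\cD}_n(R,S)$ and rewrite the two extra terms using the bilinear-form machinery of Lemma~\ref{lem:lin_op}. Recall that \eqref{eq:Dh_RS} is
\[
\hat{\cD}_n(R,S) = \adjustlimits \max_{Q \in \cQ} \inf_{\tilde{R} \in \psd^d} \left\{ \cD_n(R + \tilde{R}) + \tfrac{1}{2}\langle S, Q^{\otimes 2}\rangle - \tfrac{1}{2}\langle \tilde{R}, Q\rangle \right\}.
\]
First I would observe that $\langle S, Q^{\otimes 2}\rangle = \langle S, Q \otimes Q\rangle = \langle \cT(Q), Q\rangle$ by Lemma~\ref{lem:lin_op}, so the objective depends on $\tilde{R}$ only through the linear term $-\tfrac{1}{2}\langle \tilde{R}, Q\rangle$ and through $\cD_n(R + \tilde{R})$. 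The key idea is a change of variables: since $\cD_n$ is a convex function (it is $\cI_n$ minus an affine function by \eqref{eq:InDn}, and $\cI_n$ is concave), the inner infimum over $\tilde{R}$ is a Fenchel-type conjugation, and one expects the optimal $\tilde{R}$ to "absorb" the coupling term. The natural substitution, guided by the structure of the answer, is to split $\tilde{R}$ into a part of the form $\cT(Q)$ (reflecting the first-order effect of $S$ on the effective noise) plus an auxiliary variable; the remaining infimum over that auxiliary variable is then itself a conjugation that produces the inner $\min_{\tilde Q \in \cQ}$ together with the operator $\cT^*$.

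The concrete steps I would carry out are: (1) Use $\langle S, Q^{\otimes 2}\rangle = \langle \cT(Q), Q\rangle$ and, more generally, the identity $\langle \tilde R, Q\rangle$-type manipulations to set up the duality. (2) Recognize that the map $\tilde{R} \mapsto \inf_{\tilde R}\{\cD_n(R+\tilde R) - \tfrac12\langle \tilde R, Q\rangle\}$ is, up to sign and translation, the Legendre transform of $\cD_n$, and that iterating duality (Fenchel–Moreau, since $\cD_n$ is proper, convex, lower semicontinuous — the commented-out lemma in the excerpt records exactly the fact $\cD_n(R) = \max_{\tilde Q \in \cQ}\inf_{\tilde R}\{\cD_n(\tilde R) + \tfrac12\langle R - \tilde R, \tilde Q\rangle\}$) lets one re-expand $\cD_n(R + \cT(Q) + \cT^*(\tilde Q))$ as a min over $\tilde Q \in \cQ$. (3) Track the bilinear cross-terms: expanding $\cD_n$ at the shifted argument $R + \cT(Q) + \cT^*(\tilde Q)$ and using $\langle \cT(Q), \tilde Q\rangle = \langle Q, \cT^*(\tilde Q)\rangle$ from Lemma~\ref{lem:lin_op}, the quadratic $\langle \cT(Q), Q\rangle$ term recombines with the conjugation terms so that exactly $-\tfrac12\langle \cT(Q), \tilde Q\rangle$ survives. (4) Check that the domains match: the outer $\max$ stays over $\cQ$, and the inner infimum over $\psd^d$ collapses to a $\min$ over the compact set $\cQ$ precisely because $\cD_n$ is monotone and its supergradients lie in $\cQ$ (the set $\cQ = \{Q \preceq \ex{\tfrac1n \bX^\top\bX}\}$ is exactly the range of $2\nabla\cD_n$), which also upgrades "$\inf$" to an attained "$\min$".

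The main obstacle I expect is step (3): bookkeeping the cross-terms so that the coefficient on $\langle \cT(Q), \tilde Q\rangle$ comes out to exactly $-\tfrac12$ rather than, say, $-1$ or $+\tfrac12$. This requires being careful about which variable is being conjugated against which, and using the adjoint relation $\langle \cT(M_1), M_2\rangle = \langle M_1, \cT^*(M_2)\rangle$ in the right direction at each stage; a sign error here would give a formula that looks superficially similar but is wrong. A secondary technical point is justifying the interchange of the outer $\max_Q$ with the inner $\inf_{\tilde R}$ / the nested conjugations — this should follow from a Sion-type minimax argument since the objective is concave in $Q$ (as $Q \mapsto \langle \cT(Q), Q\rangle$ is a quadratic form in $Q$ that, combined with the other terms, one must check is concave — here the restriction to $Q \in \cQ$ and the structure $S \succeq 0$ matter) and convex in the dual variable, on convex domains one of which ($\cQ$) is compact. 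I would handle the compactness/attainment issue by invoking monotonicity of $\cD_n$ together with the a priori bound $2\nabla \cD_n \in \cQ$, exactly as in the commented-out Fenchel–Moreau lemma in the source.
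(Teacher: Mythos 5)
Your proposal is correct and follows essentially the paper's own route: the ``$\le$'' direction comes from restricting the inner infimum in \eqref{eq:Dh_RS} to the parametrized family $\tilde{R}=\cT(Q)+\cT^*(\tilde{Q})$ and using the adjoint identity $\langle\cT^*(\tilde{Q}),Q\rangle=\langle\cT(Q),\tilde{Q}\rangle$, while the ``$\ge$'' direction uses precisely the tools you list --- the conjugate representation $\max_{Q\in\cQ}\{\tfrac12\langle\cT(Q),Q\rangle-\cD_n^*(Q)\}$, a linearization of $\langle\cT(Q),Q\rangle$ as a minimum over $\tilde{Q}$, a minimax swap justified by compactness of $\cQ$, and Fenchel--Moreau together with the fact that $2\nabla\cD_n\in\cQ$. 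The one point to keep straight is that the inner $\min_{\tilde{Q}\in\cQ}$ in the target formula arises from that restriction of the original infimum rather than from ``re-expanding'' $\cD_n$ itself as a min, but this is bookkeeping rather than a gap.
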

\begin{proof}
By the reduction in \eqref{eq:MI_reduction}  we may assume without loss of generality that $R = 0$.  For $Q \in \cQ$ define the compact set $\cR_{Q} = \{  \cT(Q) + \cT^*(\tilde{Q})\, : \,  \tilde{Q} \in \cQ  \}$. Since $\cR_Q \subset \psd^d$, we have the upper bound:
\begin{align*}
\hat{\cD}_n(R, S) 
& \le  \adjustlimits  \max_{Q \in \cQ}  \min_{\tilde{R}\in  \cR_{Q}  } \left\{ \cD_{n}( \tilde{R} )   -  \frac{1}{2} \langle \tilde{R} - \cT(Q)  , Q \rangle  \right\} \\
&=    \adjustlimits  \max_{Q \in \cQ}  \min_{\tilde{Q} \in \cQ } \left\{ \cD_{n}(  \cT(Q) + \cT^*(\tilde{Q}) )   -  \frac{1}{2} \langle \cT^*(\tilde{Q})     , Q \rangle  \right\} \\
&=    \adjustlimits  \max_{Q \in \cQ}  \min_{\tilde{Q} \in \cQ } \left\{ \cD_{n}(  \cT(Q) + \cT^*(\tilde{Q}) )   -  \frac{1}{2} \langle \cT(Q)     , \tilde{Q} \rangle  \right\} .
\end{align*}

The matching upper bound is established using a duality argument.  Letting $D^*_n(Q) \coloneqq \sup_{R \in \psd^d} \{ \frac{1}{2} \langle R, Q \rangle - D_n(R) \}$ denote the convex conjugate of $\cD_n(R)$, we can write 
\begin{align}
\hat{\cD}_n(R, S) 
&=  \max_{Q \in  \cQ }  \left\{  \frac{1}{2} \langle \cT(Q)  , Q \rangle  - \cD^*_{n}( Q )   \right\}.  \label{eq:Dhat_alt_c}
\end{align}
Next, for every $\tilde{R} \in \psd^d$ we have the basic inequality
\begin{align*}
\langle \cT(Q)  , Q \rangle 
& \ge  \min_{\tilde{Q} \in \cQ} \left\{ \langle \tilde{R}  + \cT^*(\tilde{Q} )  , Q \rangle  -  \langle \tilde{R}  , \tilde{Q} \rangle \right\} .
\end{align*} 
Plugging this inequality into \eqref{eq:Dhat_alt_c} shows that $\hat{\cD}_n(R, S)$ is bounded from below by
\begin{align}
 \adjustlimits \max_{Q \in  \psd^d } \min_{\tilde{Q} \in \cQ}   \left\{  \frac{1}{2} \langle \tilde{R}  + \cT^*(\tilde{Q} )  , Q \rangle   - \frac{1}{2} \langle \tilde{R} , \tilde{Q} \rangle  - \cD_n^*(Q) \right\}. \label{eq:Dhat_alt_d}
\end{align}
The expression inside the brackets is upper semicontinuous and concave in $Q$,  and  linear in $\tilde{Q}$.  By the compactness of $\cQ$, we can apply the minimax theorem (e.g., \cite[Corollary~37.3.1]{rockafellar:1970}) to swap the order of the maximum  and the minimum over these terms, and this leads to the lower bound
\begin{align*}
 \hat{\cD}_n(R, S) 
&\ge
  \adjustlimits  \min_{\tilde{Q} \in \cQ}  \max_{Q \in \psd^d }  \left\{  \frac{1}{2} \langle \tilde{R}  + \cT^*(\tilde{Q} )  , Q \rangle   - \frac{1}{2} \langle \tilde{R} , \tilde{Q} \rangle  - \cD_n^*(Q) \right\}\\
&=   \min_{\tilde{Q} \in \cQ}   \left\{  \max_{Q \in  \psd^d }  \left(  \frac{1}{2} \langle \tilde{R}  + \cT^*(\tilde{Q} )  , Q \rangle -   \cD_n^*(Q) \right)   - \frac{1}{2} \langle \tilde{R} , \tilde{Q} \rangle   \right\} \\
&=   \min_{\tilde{Q} \in \cQ}   \left\{ \cD_n(  \tilde{R}  + \cT^*(\tilde{Q} )  )   - \frac{1}{2} \langle \tilde{R} , \tilde{Q} \rangle   \right\}. 
\end{align*}
Here the last step follows from the  Fenchel-Moreau Theorem~\cite[Theorem~13.37]{bauschke:2017} which says that a proper, lower semi-continuous, and convex function, is equal to its biconjugate (i.e., the  conjugate of the convex conjugate) and the fact that $2 \nabla_R \cD_n(R) \in \cQ$ for all $R \in \psd^d$.  Since this inequality holds for every choice of $\tilde{R}$, we can take the maximum of both sides with respect to $\tilde{R}  \in \{ \cT^*(Q)\, : \,  Q \in \cQ  \} $, and this gives the matching lower bound. 
\end{proof}

The approximation formula can be simplified further  in the special case where $M \mapsto \langle S, M^{\otimes2} \rangle $ is convex. Necessary and sufficient conditions for convexity are given as follows.

\begin{lemma}Suppose that  $S\in \psd^{d^2}$ has a decomposition of the form   $S = \sum_{l=1}^L \gvec(B_l) \gvec(B_l)^\top$ with $B_l \in \reals^{d \times d}$. Then, the mapping $M \mapsto \langle S, M^{\otimes 2}\rangle$ is convex on $\sym^d$ if and only if 
\begin{align}
\sum_{l=1}^L  \mathrm{D}^\top_n \left\{ (B_l \otimes B_l) + (B_l \otimes B_l)^\top \right\} \mathrm{D}_n \succeq 0, \label{eq:convex_condition}
\end{align}
where $\mathrm{D}_n$  is the duplication matrix, i.e.,  the unique $d^2 \times d (d+1)/2$ matrix such that $\gvec(M)  =\mathrm{D}_n \vech(M)$ for all $M \in \sym^d$ where $\vech(M)$ is the $d(d+1)/2$-dimensional vector obtained by stacking the entries on or below the diagonal in $M$; see \cite[Chapter 3.8]{magnus:2007}.
\end{lemma}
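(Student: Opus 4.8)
The plan is to recognize $g(M) \coloneqq \langle S, M^{\otimes 2}\rangle$ as a homogeneous quadratic form on the vector space $\sym^d$ and to read off, in the coordinates supplied by the half-vectorization $\vech$, the symmetric matrix that represents it; convexity of $g$ on $\sym^d$ will then be equivalent to positive semidefiniteness of that matrix, and matching it against \eqref{eq:convex_condition} will finish the proof.

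First I would put $g$ into trace form. Using the decomposition of $S$ together with the identities $\langle \gvec(V)\gvec(V)^\top, A\rangle = \gvec(V)^\top A\,\gvec(V)$ and $\gvec(AXB) = (B^\top\otimes A)\gvec(X)$, exactly as in the proof of Lemma~\ref{lem:lin_op}, one obtains $g(M) = \sum_{l=1}^L \gtr(B_l^\top M B_l M^\top)$. Then, after a cyclic permutation inside the trace and an application of the standard identity $\gtr(X^\top C X D) = \gvec(X)^\top (D^\top\otimes C)\gvec(X)$ (with $X=M$, $C=B_l^\top$, $D=B_l$), and using that a $1\times 1$ matrix equals its transpose to symmetrize the Kronecker block, I would obtain
\[ g(M) \;=\; \frac{1}{2}\sum_{l=1}^L \gvec(M)^\top\bigl[(B_l\otimes B_l) + (B_l\otimes B_l)^\top\bigr]\gvec(M), \qquad M\in\reals^{d\times d}. \]

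Next I would pass to coordinates on $\sym^d$: for symmetric $M$ the duplication matrix gives $\gvec(M) = \mathrm{D}_n\vech(M)$, so the display above becomes $g(M) = \tfrac12\vech(M)^\top A\,\vech(M)$ with $A \coloneqq \sum_{l=1}^L \mathrm{D}_n^\top\{(B_l\otimes B_l)+(B_l\otimes B_l)^\top\}\mathrm{D}_n$, which is symmetric since each $(B_l\otimes B_l)+(B_l\otimes B_l)^\top$ is. Since $\vech\colon\sym^d\to\reals^{d(d+1)/2}$ is a linear bijection, $g$ restricted to $\sym^d$ is the quadratic form with Gram matrix $\tfrac12 A$ in these coordinates. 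Finally I would invoke the elementary fact that a quadratic form $v\mapsto v^\top A v$ with $A$ symmetric is convex if and only if $A\succeq 0$ — the ``if'' direction is immediate, and for ``only if'' convexity along the segment from $M$ to $-M$ together with $g(-M)=g(M)$ forces $g(M)\ge g(0)=0$ for all $M$. Hence $g$ is convex on $\sym^d$ exactly when $A\succeq 0$, which is condition \eqref{eq:convex_condition}.

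I expect no substantive obstacle here; the steps are essentially bookkeeping. The two points that need care are keeping the transposes straight in the trace–vec manipulations — the blocks $B_l$ need not be symmetric, so $B_l\otimes B_l$ need not be symmetric and only its symmetric part contributes to the quadratic form on symmetric arguments — and noting that restricting the quadratic form from $\reals^{d\times d}$ to $\sym^d$ is captured \emph{exactly}, with no loss, by conjugating with the duplication matrix $\mathrm{D}_n$.
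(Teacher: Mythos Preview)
Your proposal is correct and follows essentially the same route as the paper: express $\langle S, M^{\otimes 2}\rangle$ as a quadratic form in $\vech(M)$ via the duplication matrix, and then read off convexity as positive semidefiniteness of the representing matrix. The only cosmetic difference is that the paper writes the form as $g(v)=v^\top \mathrm{D}_n^\top \sum_l (B_l\otimes B_l)^\top \mathrm{D}_n v$ and then invokes the Hessian (which produces the symmetrization), whereas you symmetrize up front and argue directly that a homogeneous quadratic form is convex iff it is nonnegative; both arrive at \eqref{eq:convex_condition}.
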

\begin{proof}
For $M\in \sym^d$ we can write $\langle S, M \rangle = g( \vech(M))$ where   $g : \reals^{d(d+1)/2} \to \reals$  is given by \[g(v) =  v^\top \mathrm{D}_n^\top  \sum_{l=1}^L ( B_l \otimes B_l)^\top \mathrm{D}_n v.\] The function $g$ is convex if and only if its Hessian is positive semidefinite, and this leads to the condition given in \eqref{eq:convex_condition}.
\end{proof}



The next result shows that convexity allows for a representation of the approximation formula involving only a single matrix variable. 
\begin{prop}\label{prop:Dh_convex}
If $Q \mapsto \langle S, Q^{\otimes2} \rangle $ is convex on $\sym^d$, then the approximation formula $\hat{\cD}_n(R,S)$ given in \eqref{eq:Dh_RS} satisfies 
\begin{align*}
\hat{\cD}_n(R,S) 
& =\max_{Q \in \cQ}  \left\{ D(R + \cT(Q) + \cT^*(Q) )  - \frac{1}{2} \langle \cT(Q) , Q\rangle \right\} ,
\end{align*}
where the linear operator $\cT \colon \psd^d \to \psd^d$ is defined as a function of $S$ according to  Lemma~\ref{lem:lin_op}.  
\end{prop}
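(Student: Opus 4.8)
The plan is to work directly from the definition \eqref{eq:Dh_RS} of $\hat{\cD}_n$, using only one external fact (recorded in the proof of Proposition~\ref{prop:Dh_RS_alt}): $2\nabla_R\cD_n(R)\in\cQ$ for every $R\in\psd^d$, where $\cD_n(\cdot)=\cD_n(\cdot,0)$. A convenient piece of bookkeeping is to introduce the symmetric bilinear form $b(Q_1,Q_2):=\tfrac12\langle(\cT+\cT^*)(Q_1),Q_2\rangle$ attached to the quadratic form $h(Q):=\langle S,Q^{\otimes 2}\rangle=\langle\cT(Q),Q\rangle$; one checks $b$ is symmetric and $b(Q,Q)=h(Q)$. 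Since $h$ is $2$-homogeneous, its convexity on $\sym^d$ is equivalent to $b\succeq 0$, and $b\succeq 0$ supplies the Cauchy--Schwarz bound $b(Q_1,Q_2)^2\le h(Q_1)h(Q_2)$, hence $b(Q_1,Q_2)\le\tfrac12 h(Q_1)+\tfrac12 h(Q_2)$ for all $Q_1,Q_2$. Writing $G(Q):=\cD_n(R+\cT(Q)+\cT^*(Q))-\tfrac12 h(Q)$ for the right-hand side objective, the goal is the two inequalities comprising $\hat{\cD}_n(R,S)=\max_{Q\in\cQ}G(Q)$.

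The inequality $\hat{\cD}_n(R,S)\le\max_{Q\in\cQ}G(Q)$ is the easy one: in \eqref{eq:Dh_RS}, for each $Q\in\cQ$ I feed the admissible choice $\tilde R=\cT(Q)+\cT^*(Q)$ into the infimum over $\tilde R$. This is legitimate because $S\succeq 0$ makes $\cT,\cT^*$ order-preserving (Lemma~\ref{lem:lin_op}), so $\cT(Q)+\cT^*(Q)\in\psd^d$; using $\langle\cT(Q)+\cT^*(Q),Q\rangle=2h(Q)$ the bracket collapses to exactly $G(Q)$, and taking the maximum over $Q\in\cQ$ finishes this direction.

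The substance is the reverse inequality, and the key realization is that one should \emph{not} try to show $\tilde Q=Q$ is optimal in the inner minimization for a fixed outer $Q$ (on the boundary of $\cQ$ that generally fails), but should instead exploit the freedom in the \emph{outer} maximum. Fix $Q_0\in\cQ$ and aim for $G(Q_0)\le\hat{\cD}_n(R,S)$. Put $R':=R+\cT(Q_0)+\cT^*(Q_0)$ and take as the outer variable $Q^\sharp:=2\nabla_R\cD_n(R')\in\cQ$ (on the boundary, interpret the gradient via the limiting convention of Section~\ref{sec:MMSE_matrix}, or take any element of $2\partial_R\cD_n(R')\subseteq\cQ$). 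Convexity of $\cD_n$ gives the tangent inequality $\cD_n(R+\tilde R)\ge\cD_n(R')+\tfrac12\langle Q^\sharp,(R+\tilde R)-R'\rangle$ for all $\tilde R\succeq 0$; subtracting $\tfrac12\langle\tilde R,Q^\sharp\rangle$ cancels all $\tilde R$-dependence on the right, so the infimum over $\tilde R$ in \eqref{eq:Dh_RS} evaluated at $Q=Q^\sharp$ is at least $\cD_n(R')+\tfrac12\langle Q^\sharp,R-R'\rangle=\cD_n(R')-b(Q^\sharp,Q_0)$, where I used the adjoint identity $\langle Q^\sharp,\cT(Q_0)+\cT^*(Q_0)\rangle=\langle(\cT+\cT^*)(Q^\sharp),Q_0\rangle=2b(Q^\sharp,Q_0)$. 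Adding the $\tfrac12 h(Q^\sharp)$ term of \eqref{eq:Dh_RS} and then invoking $b(Q^\sharp,Q_0)\le\tfrac12 h(Q^\sharp)+\tfrac12 h(Q_0)$ yields $\hat{\cD}_n(R,S)\ge\cD_n(R')-\tfrac12 h(Q_0)=G(Q_0)$; maximizing over $Q_0\in\cQ$ closes the argument.

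I expect the main obstacle to be conceptual rather than computational: the naive collapse of the inner minimum by plugging in $\tilde Q=Q$ breaks down at boundary points of $\cQ$, so the estimate must be routed through the gradient point $2\nabla_R\cD_n(R')$ and completed with the Cauchy--Schwarz inequality that $b\succeq 0$ provides --- this is the sole place the convexity hypothesis enters. The remaining details are routine: verifying $\cT(Q)+\cT^*(Q)\in\psd^d$ so that the substitution in the easy direction is admissible, and dealing with $\nabla_R\cD_n$ at a possibly boundary point $R'$ exactly as done elsewhere in the paper.
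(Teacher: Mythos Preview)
Your proof is correct and uses the same two ingredients as the paper: convexity of $\cD_n$ (your tangent inequality at $Q^\sharp=2\nabla_R\cD_n(R')$ is exactly what the paper's Fenchel--Moreau step unpacks to) and convexity of $Q\mapsto\langle S,Q^{\otimes 2}\rangle$ (your bound $b(Q^\sharp,Q_0)\le\tfrac12 h(Q^\sharp)+\tfrac12 h(Q_0)$ from $b\succeq 0$ is the tangent inequality the paper uses, rewritten). The only difference is the order of application---the paper first linearizes $h$ and then invokes Fenchel--Moreau, whereas you first fix the outer variable at the gradient point and then absorb the mismatch via the bilinear-form inequality---so the route is essentially the same, with your version slightly more self-contained in avoiding the explicit biconjugate step.
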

\begin{proof}
By the reduction in \eqref{eq:MI_reduction}  we may assume without loss of generality that $R = 0$. The fact that this expression is an upper bound on $\hat{\cD}_n$ follows immediately from Proposition~\ref{prop:Dh_RS_alt}. To obtain a matching lower bound, observe that  convexity of $Q \mapsto \langle S, Q^{\otimes 2} \rangle$  implies that for  all $Q, \tilde{Q} \in \psd^d$,
\begin{align*}
\langle S, Q^{\otimes 2 } \rangle  
& \ge   \langle S , \tilde{Q}^{\otimes 2}   \rangle  + \langle Q - \tilde{Q}  , \cT(\tilde{Q})  + \cT^*(\tilde{Q}) \rangle \\
 &= \langle  \cT(\tilde{Q})  + \cT^*(\tilde{Q}) , Q \rangle 
-   \langle \cT(\tilde{Q}) ,\tilde{Q}  \rangle .
\end{align*}
Combining this inequality with \eqref{eq:Dh_RS}, we see that $\hat{\cD}_n(R,S)$ is bounded from below by 
\begin{align*}
& \adjustlimits   \max_{Q \in \cQ} \inf_{\tilde{R} \in \psd^d} \left\{ \cD_n(\tilde{R}) +   \frac{1}{2} \langle\cT(\tilde{Q})  + \cT^*(\tilde{Q}) -  \tilde{R},  Q \rangle   \right\} \notag \\
& \quad  - \frac{1}{2}   \langle \cT(\tilde{Q}) ,\tilde{Q}  \rangle \\
& =   \cD_{n}( \cT( \tilde{Q})  + \cT^*(\tilde{Q}) )- \frac{1}{2}   \langle \cT(\tilde{Q}) ,\tilde{Q}  \rangle
\end{align*}
where the last step follows from the  Fenchel-Moreau Theorem~\cite[Theorem~13.37]{bauschke:2017} and the fact that $2 \nabla_R \cD_n(R) \in \cQ$ for all $R \in \psd^d$. Taking the maximum of this lower bound with respect to $\tilde{Q} \in \cQ$ gives the matching lower bound. 
\end{proof}

\subsection{Examples} 

\begin{example}[Multiview  model]
The multiview spiked matrix model \eqref{eq:multiview} can be represented using the symmetric matrix tensor product model in Definition~\ref{def:sym_mtp_model} using $ S = \sum_{l=1}^L  \gvec(B_l) \gvec(B_l)^\top$. In this case, the linear operators described in Lemma~\ref{lem:lin_op} are given by
\begin{align}
\cT(M)  &= \sum_{l=1}^L  B_l M B_l^\top, \quad \cT^*(M)  = \sum_{l=1}^L  B^\top_l M B_l.
\end{align}
By Proposition~\ref{prop:Dh_RS_alt}, the approximation formula can be expressed as
\begin{align}
\hat{\cD}_n(R,S) & = \adjustlimits \max_{Q \in \cQ} \min_{Q \in \cQ } \left\{ D\left(R + \sum_{l=1}^L \left\{ B_l  Q B_l^\top + B_l^\top \tilde{Q} B_l \right\}  \right )  - \frac{1}{2} \sum_{l=1}^L  \langle  B_l Q , \tilde{Q} B_l^\top  \rangle   \right\} .
\end{align}
Furthermore, if  the convexity condition in \eqref{eq:convex_condition} holds then, by Proposition~\ref{prop:Dh_convex}, we obtain the simplified formula  
\begin{align}
\hat{\cD}_n(R,S) & = \max_{Q \in \cQ} \left\{ D\left(R + \sum_{l=1}^L \left\{ B_l  Q B_l^\top + B_l^\top Q B_l \right\}  \right )  - \frac{1}{2} \sum_{l=1}^L  \langle  B_l Q , Q B_l^\top  \rangle   \right\},
\end{align}
While the simplified version has appeared in previous work \cite{mayya:2019, barbier:2020a} the general case in  \eqref{eq:Dh_multiview} is new. 

The spiked Wigner model corresponds to the setting $L =1$ and $B_1 = \sqrt{s } I_d$, which yields 
\begin{align}
\hat{\cD}_n(0,S) & = \max_{Q \in \cQ}  \left\{ \cD_n\left(R+ 2 s Q   \right )  - \frac{s}{2} \|Q\|_\fro^2 \right\} .
\end{align}
In the special case of no side information and  $R =0$ this recovers the formula given by Lelarge and Miolane \cite[Theorem~41]{lelarge:2018}. The only difference is a factor of two in the parameter $s$, which arises because our model does not  assume the noise matrix is symmetric. 

\end{example}

\begin{example}[Asymmetric model] Suppose that $d_1 = d_2  = d$. The asymmetric spiked matrix model \eqref{eq:spiked_wishart} can be represented using \eqref{eq:sym_mtp_model} with a $(n_1+ n_2) \times 2d$ signal matrix  $\bX = \bX_1 \oplus \bX_2$ and  $ S = \gvec(B) \gvec(B)^\top$ where  $B$ is the $d \times d$ matrix given by
\begin{align*}
B = \sqrt{ s} \begin{pmatrix} 0 & I_{d} \\ 0  & 0 \end{pmatrix}.
\end{align*}
For each $M \in \reals^{2d \times 2d}$ it follows that
\begin{align*}
\cT\left(\begin{bmatrix} M_{11} & M_{12} \\ M_{21}  & M_{22} \end{bmatrix}\right)  &= s \begin{bmatrix} M_{22} & 0  \\ 0  & 0 \end{bmatrix}  , \qquad \cT^*\left(\begin{bmatrix} M_{11} & M_{12} \\ M_{21}  & M_{22} \end{bmatrix}\right)  = s \begin{bmatrix} 0 & 0  \\ 0  & M_{11} \end{bmatrix} .
\end{align*}
By Proposition~\ref{prop:Dh_RS_alt}, it follows that
\begin{align*}
\begin{multlined}
\hat{\cD}_n(R,S) =  \adjustlimits \max_{Q_1 \in \cQ_1} \min_{Q_2 \in \cQ_2 }  \left\{ \cD_n\left(R +  s \begin{bmatrix} Q_{2} & 0 \\ 0  & Q_{1} \end{bmatrix}   \right )  - \frac{s}{2}   \langle Q_1, Q_2 \rangle  \right\},
\end{multlined}
\end{align*}
where $\cQ_k = \{ Q \in \psd^d \, : \, Q \preceq \ex{ \frac{1}{n} \bX_k^\top \bX_k } \} $ for $k  =1,2$. Note that here, we have not made any assumptions about the dependence between $\bX_1$ and $\bX_2$ and   $\cD_n$ is defined with respect to the block-diagonal matrix $\bX$.  In this case,  the convexity condition in \eqref{eq:convex_condition} does not hold and so the simplification in Proposition~\ref{prop:Dh_convex} does not apply.  

If $\bX_1$ and $\bX_2$ are conditionally independent given the side information $\bZ$, then the relative entropy function evaluated at the block-diagonal matrix $R = \diag(R_1 , R_2)$ decomposes as
\[
\hat{\cD}_n(\diag(R_1,R_2) ) = \cD^{(1)} _n(R_1)  + \cD^{(2)} _n(R_1),
\]
where $\cD^{(k)}_n$ is the relative entropy function associated with matrix $\bX_k$. Specializing to this setting and evaluating at $R =0$ yields
\begin{align*}
\hat{\cD}_n(0,S)  = \adjustlimits \max_{Q_1 \in \cQ_1} \min_{Q_2 \in \cQ_2 } \left\{ \cD^{(1)} _n(s Q_2)  + \cD^{(2)} _n(s Q_1)- \frac{s}{2}  \langle Q_1, Q_2 \rangle  \right\}.
\end{align*}
This last expression is the same as the one given by Miolane \cite[Theorem~2]{miolane:2017a}.
\end{example}

\section{Proofs of main results}\label{sec:proofs}

This section gives the proofs of Theorem~\ref{thm:MI_bound} and Theorem~\ref{thm:general_bound}.

\subsection{Proof of Theorem~\ref{thm:MI_bound}}\label{proof:thm:MI_bound}

The proof of Theorem~\ref{thm:MI_bound}  follows from combining the general bound given in Theorem~\ref{thm:general_bound} with an upper bound on the relative entropy variance term $\cV_n(\eps \mid R, S)$ defined in \eqref{eq:Vn}.

To proceed,  partition $\bY_{R,S}$ according to $(\bY_R, \bY'_S)$ and define the augmented side information $\tilde{\bZ}  = (\bY_R, \bZ)$. Then, the free energy defined by the distributions of  $(\tilde{\bY}_{\eps}, \bY_{R,S} , \bZ)$  and $(\tilde{\bW}, \bY_{R,S} , \bZ)$ can be decomposed as
\begin{align*}
F 
 &  = \log \frac{ \dd  P_{\tilde{\bY}_{\eps}, \bY'_{S} , \tilde{\bZ} }}{\dd  P_{\tilde{\bW}, \bY'_{S} , \tilde{\bZ} } } (\tilde{\bY}_\eps, \bY'_{S} ,\tilde{\bZ})   =  \log \frac{ \dd  P_{\tilde{\bY}_{\eps}, \bY'_{S} , \tilde{\bZ} }}{\dd P_{\tilde{\bW}, \bW', \tilde{\bZ} }}(\tilde{\bY}_\eps, \bY'_{S} ,\tilde{\bZ})  -  \log \frac{ \dd  P_{\bY'_{S}, \tilde{\bZ} }}{\dd P_{\bW', \tilde{\bZ} }} (\bY'_{ S} , \tilde{\bZ}) .
\end{align*}
Combing this decomposition with the basic inequality $\var(A+B) \le 2 \var(A) + 2 \var(B)$ for any random variables $(A,B)$, leads to
\begin{align}
\cV_n(\eps \mid R, S)
& \le  \frac{2}{n^2}  V\left( P_{ \bY_{\eps,S} , \tilde{\bZ}} \, \| \, P_{ \bY_{0,0} , \tilde{\bZ}} \right)  + \frac{2}{n^2} V\left( P_{ \bY_{0,S} , \tilde{\bZ}} \, \| \, P_{ \bY_{0,0}  , \tilde{\bZ}} \right), \label{eq:Vn_bound}
\end{align}
where we have used the fact that $(\tilde{\bY}_\eps, \bY'_S)$  is equal in distribution to  $\bY_{\eps, S}$. 

Next, observe that if $(\bX, \bZ)$ satisfies the assumption of Theorem~\ref{thm:MI_bound} then so does $(\bX, \tilde{\bZ})$. Accordingly, we can apply  Proposition~\ref{prop:relative_entropy_variance} to obtain 
\begin{align*}
 V( P_{\bY_{\eps, S}  , \tilde{\bZ} } \, \| \, P_{\bY_{0, 0},\bZ}  ) & \le  n d \rho^2 \left( \|\eps \| + d \rho^2 \|S\| \right)  +   3 n d^2 \rho^4 \left( \|\eps \|  + 2 d \rho^2  \|S\|   \right)^2.
\end{align*}
In view of \eqref{eq:Vn_bound}, this leads to the upper bound
\begin{align*}
\overline{\cV}_n
 & \le  \frac{ 4 }{n} \left[d \rho^2 \left( \delta + d \rho^2 \|S\| \right)  +   3 d^2\rho^4 \left( \delta + 2 d \rho^2  \|S\|   \right)^2\right] .
\end{align*}
Combining with Theorem~\ref{thm:general_bound} and making the specification
\begin{align}
\beta_n = n^{-1/5} d\rho^2  \|S\|, \quad \delta_n = n^{-2/5} d\rho^2  \|S\|,
\end{align}
leads to the stated result. 

\subsection{Proof of Theorem~\ref{thm:general_bound}}\label{proof:thm:general_bound}

The proof has two main steps. First, we use adaptive interpolation to bound the difference in terms of a remainder term. Then, we show that under an appropriate perturbation, this error can be made small. The first step is summarized by the following result, which is proved in Section~\ref{proof:prop:approx_bound}.

  \begin{prop}\label{prop:approx_bound} Assume that $(\ex{ \|\bX\|^2})^{1/2}  \le  \sqrt{n} \, \rho$. 
For each $(R,S)  \in \psd^d \times \psd^{d^2}$ there exists a function $\Phi \colon [0,1] \times \psd^d \to \psd^d$ with the following properties: 
\begin{enumerate}[(i)]
\item For each $\eps \in \psd^d$, 
\begin{align}
 \big | \cI_n(R, S)   -   \hat{\cI}_n(R, S) \big |  \le  \frac{d \rho^2 \|\eps\|   }{2}    + \frac{ \|S\|}{2n^2}  \int_0^1 \ex{ \| \bO_{t,\eps} - \ex{ \bO_{t, \eps}} \|_\fro^2}  \, \dd t, \label{eq:approx_bound}
\end{align}
where $\bO_{t, \eps}$ is the overlap matrix associated with the observations $(\bY_{R + \Phi(t,\eps), (1-t) S}, \bZ)$.

\item For each $t \in [0,1]$, the mapping $\eps \mapsto \Phi(t, \eps)- \eps$ is order-preserving. In other words, 
\begin{align*}
\eps_1\preceq \eps_2 \quad \implies \quad  \Phi(t, \eps_1) - \eps_1 \preceq   \Phi(t, \eps_2) -\eps_2.
\end{align*}
Furthermore, $0\preceq \Phi(t,\eps)  -\eps \preceq  2 d\rho^2 \|S\| I_d$. 
\end{enumerate}
\end{prop}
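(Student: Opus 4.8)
The plan is to set up an adaptive interpolation between the tensor-product channel at parameter $S$ and the linear channel, following the scheme of \cite{barbier:2018,barbier:2019f}, but with the interpolation path defined through a matrix-valued ODE so that the order-preserving property in part (ii) can be read off directly. Concretely, I would introduce an interpolating family of observations in which the tensor part carries signal strength $(1-t)S$ and the linear part carries a matrix $R + \Phi(t,\eps)$, where $\Phi$ is to be chosen. The ``fundamental sum rule'' of the interpolation method — obtained by computing $\frac{\dd}{\dd t}$ of the interpolating mutual information (or free energy) and using the I-MMSE relations \eqref{eq:I_RS_gradient} — expresses $\cI_n(R,S) - \cI_n(R+\Phi(1,\eps))$ (plus the boundary perturbation term) as an integral over $t$ of a difference between $\langle S, \ex{(\tfrac1n\bX^\top\bX)^{\otimes2}}\rangle$-type quantities and $\langle \Phi'(t), \ex{\tfrac1n\bX^\top\bX}\rangle$-type quantities, up to a remainder that is controlled by the overlap fluctuation $\ex{\|\bO_{t,\eps} - \ex{\bO_{t,\eps}}\|_\fro^2}$. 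The standard ``Nishimori'' / Gaussian-integration-by-parts identities turn the cross terms into an expression involving the \emph{conditional} overlap $\ex{\bO_{t,\eps}\mid \cdot}$, and the quadratic completion leaves exactly the variance of the overlap as the error, with prefactor $\|S\|$ coming from the matrix H\"older bound in Lemma~\ref{lem:lin_op} ($\|\cT(M)\|\le\|S\|\|M\|_*$).

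The key design choice is the ODE for $\Phi$. I would take $\Phi(0,\eps)=\eps$ and let $\Phi(t,\eps)$ solve a first-order ODE of the form $\frac{\dd}{\dd t}\Phi(t,\eps) = G\big(t, \Phi(t,\eps)\big)$, where $G$ is built from the linear-model MMSE matrix at the current parameter (this is what makes the interpolation ``adaptive'' and kills the sign-indefinite part of the remainder). The right-hand side is bounded: since $\cT$ maps $\psd^d$ into itself and $\|\cT(M)\|\le \|S\|\|M\|_* \le \|S\|\, d\,\|M\|$, and the relevant MMSE matrices are bounded by $\ex{\tfrac1n\bX^\top\bX}\preceq \rho^2 I_d$ (using $(\ex{\|\bX\|^2})^{1/2}\le\sqrt n\rho$), one gets a derivative bounded in norm by $2d\rho^2\|S\|$, hence $0\preceq \Phi(t,\eps)-\eps\preceq 2d\rho^2\|S\| I_d$ after integrating over $t\in[0,1]$. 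For the order-preserving claim, the point is that $\eps\mapsto \Phi(t,\eps)-\eps$ should be monotone in the Loewner order; this follows by a comparison principle for matrix ODEs. This is precisely where Lemma~\ref{lem:ODE_comp} enters: if $\eps_1\preceq\eps_2$ and the vector field $G$ is itself order-preserving in its state argument (which it is, because $G$ is a composition of the order-preserving operator $\cT$ — see Lemma~\ref{lem:lin_op} — with the MMSE map, which is order-reversing in $R$ composed appropriately, and with the channel, so the net composition preserves order; one checks the sign bookkeeping), then the ODE comparison inequality gives $\Phi(t,\eps_1)\preceq\Phi(t,\eps_2)$ and more precisely that the difference of the solutions dominates the difference $\eps_2-\eps_1$ subtracted off, yielding $\Phi(t,\eps_1)-\eps_1\preceq\Phi(t,\eps_2)-\eps_2$.

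The main obstacle I anticipate is \emph{two-fold}. First, verifying that the chosen vector field $G$ is genuinely order-preserving in the Loewner sense and that Lemma~\ref{lem:ODE_comp} applies to it — matrix MMSE maps are order-\emph{reversing} in the channel parameter, so one has to be careful that the composition defining $G$ (signal-to-MMSE through $\cT$ and then fed back into the linear parameter) comes out with the right monotonicity, and that $G$ is Lipschitz enough for existence/uniqueness of $\Phi$ and for the comparison theorem to hold. Second, making the sum-rule bookkeeping exact: the identity \eqref{eq:approx_bound} has specific constants ($\tfrac12 d\rho^2\|\eps\|$ for the perturbation cost, $\tfrac{\|S\|}{2n^2}$ on the integrated overlap variance), so the Gaussian integration-by-parts and the matrix H\"older step must be executed without slack, and the boundary term at $t=1$ must be matched against the definition \eqref{eq:Ih_RS} of $\hat\cI_n$ — in particular the $\min_{Q\in\cQ}\sup_{\tilde R}$ structure should emerge from optimizing the free parameters $\eps$ and the endpoint $\Phi(1,\eps)$, using concavity of $\cI_n$ in $R$ so that the interpolation bound holds as a genuine two-sided estimate rather than a one-sided one. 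Once these are in place, the $\tfrac{d\rho^2\|\eps\|}2$ term is just the cost of the side perturbation $\tilde\bY_\eps$ bounded via $I(\bX;\tilde\bY_\eps\mid\cdot)\le \tfrac12\langle\eps,\ex{\tfrac1n\bX^\top\bX}\rangle\le\tfrac12 d\rho^2\|\eps\|$.
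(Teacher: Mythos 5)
Your proposal correctly reproduces about half of the argument: the generic interpolation decomposition via the I-MMSE relations, the identification of the remainder with the integrated overlap fluctuation through the matrix H\"older bound $|\langle S, \ex{(\bO_t-\ex{\bO_t})^{\otimes2}}\rangle| \le \|S\|\,\ex{\|\bO_t-\ex{\bO_t}\|_\fro^2}$, the use of an ODE-defined path $\Phi$ whose vector field is built from $\cT$ applied to the expected overlap (note: the relevant map $\tilde R\mapsto 2\nabla_R\cD_n(\tilde R,(1-t)S)$ is the \emph{overlap}, which is order-preserving, not the MMSE, which is order-reversing — your ``sign bookkeeping'' concern resolves favorably), the derivative bound yielding $0\preceq\Phi(t,\eps)-\eps\preceq 2d\rho^2\|S\|I_d$, and the application of Lemma~\ref{lem:ODE_comp} for part (ii). This is exactly the paper's lower-bound construction.

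The genuine gap is the upper bound, and appealing to ``concavity of $\cI_n$ in $R$ so that the interpolation bound holds as a genuine two-sided estimate'' does not close it. Because $\hat{\cI}_n$ is a min-sup, the two directions of $|\cI_n-\hat\cI_n|$ require two structurally different interpolation paths. For the lower bound on $\cD_n$ a single ODE $\phi'=\cT(Q_t)+\cT^*(\psi)$ with a \emph{fixed} auxiliary $\psi\in\cQ$ suffices, followed by an infimum over $\tilde R$ and a maximization over $\psi$. For the upper bound the paper must introduce a \emph{coupled} system in which the auxiliary path itself evolves, $\psi'(t)=2\nabla\cD_n(\eps+\cT(Q_t)+\cT^*(\eta))$, apply Jensen twice (to $\cD_n^*$ along $\psi'$ and to $\cD_n$ along $\cT(Q_t)$), and — critically — choose the parameter $\eta$ to satisfy the fixed-point equation $\psi_{\eps,\eta}(1)=\eta$ so that two residual terms cancel. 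Establishing that such a fixed point exists, that a greatest one can be selected, and that the selection $\eps\mapsto\eta^\star(\eps)$ is itself order-preserving (so that part (ii) survives) requires a separate Brouwer fixed-point and lattice argument (Lemma~\ref{lem:FP}). None of this is present in, or implied by, your sketch; without it you obtain only a one-sided inequality, not \eqref{eq:approx_bound}. A secondary imprecision: the endpoint $\Phi(1,\eps)$ is not a free parameter to optimize over — it is determined by the ODE — so the minimax structure of $\hat\cI_n$ cannot ``emerge from optimizing the free parameters $\eps$ and the endpoint'' as you suggest; it emerges from the choice of $\psi$ (lower bound) and the fixed point $\eta^\star$ together with the Fenchel--Moreau identity (upper bound).
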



Proposition~\ref{prop:approx_bound} provides a family of  bounds indexed by $\eps$.  Using the results in Section~\ref{sec:overlap} we can bound the average with respect to a suitable distribution on $\eps$. 
Specifically, for $\beta > 0$ let $\mu_\beta$ be the probability measure of the random matrix in \eqref{eq:eps_random}. Then, for each $t \in [0,1]$, we can apply Theorem~\ref{thm:overlap_concentration} with $\psi_t(\eps) = \Phi(t,\eps) - \eps$ and  $\bZ_{t} = (\bY_{R,(1-t) S}, \bZ)$ to obtain
 \begin{align}
\frac{1}{n^2}  \int  \ex{ \| \bO_{t,\eps} - \ex{ \bO_{t, \eps}} \|_\fro^2}  \, \mu_{\beta}(\eps) 
  \le  2 \sqrt{ \frac{2   d^3 \rho^6 }{ \beta n}   }  + \frac{4 \delta d^2 \rho^4 }{\beta } + \frac{ 4 d^2 }{ \delta^2 n^2}  \overline{V}_t,  \label{eq:overlap_concentration2}
\end{align}
where
\begin{align*}
\overline{V}_t & =   \max_{\eps, \tilde{\eps}   \in \psd^d \, : \, \|\eps\| \le \gamma , \|\tilde{\eps} \| \le \delta }   V\left( P_{\tilde{\bY}_{\tilde{\eps}}, \bY_{\eps   }, \bZ_{t} }  \,  \| \,  P_{\tilde{\bW}, \bY_{\eps  } , \bZ_{t} }  \right),
\end{align*}
with $\gamma  =  \beta  + 2 d \rho^2 \|S\|$.  Finally, recognizing that $(\bY_{\eps}, \bZ_t)$ can be replaced by $(\bY_{R + \eps, (1-t) S}, \bZ)$ yields
\begin{align*}
\frac{1}{n^2} \overline{V}_t \le  \max_{\eps, \tilde{\eps}   \in \psd^d \, : \, \|\eps\| \le \gamma , \|\tilde{\eps} \| \le \delta } \cV_n(\tilde{\eps}  \mid R + \eps , (1-t) S).
\end{align*}
Taking the maximum of this bound with respect to $0 \le t \le 1$ and combining \eqref{eq:approx_bound} and \eqref{eq:overlap_concentration2} completes the proof of Theorem~\ref{thm:general_bound}.

\section{Adaptive interpolation: Proof of Proposition~\ref{prop:approx_bound}}\label{proof:prop:approx_bound}

In view of the reduction described in \eqref{eq:MI_reduction} we will assume without loss of generality that $R =0$.  Also, we use $\cT, \cT^*$ to denote the linear operators defined as a function of $S$ according to Lemma~\ref{lem:lin_op}. Recall that for $M_1, M_2 \in \sym^d$,  $\langle S, M_1 \otimes M_2 \rangle = \langle \cT(M_1), M_2 \rangle = \langle M_1 ,\cT^*(M_2) \rangle$.

\subsection{Interpolation}
We begin with a decomposition of the mutual information in terms of a generic interpolation function.  Let $\phi \colon [0,1] \to \psd^d$ be given by  $\phi(t) =  \phi(0) + \int_0^t \phi'(u)\, \dd u$ for some integrable function $\phi': [0,1] \to \psd^d$. Then, the interpolation path is described by the matrix pair $(\phi(t), (1-t) S)$, which transitions from the point $(\phi(0),S)$ at time $t=0$ to the point $(\phi(1), 0)$ at time $t=1$.

For $t \in [0,1]$, let $\cG_t$  be the sigma-algebra generated by the observations $(\bY_{\phi(t), (1-t) S}, \bZ)$. By the matrix version of the I-MMSE relation \eqref{eq:I_RS_gradient} and the fundamental theorem of calculus, the change in mutual information along the path can be expressed as
\begin{align}
 \cI_n( \phi(1) , 0) -  \cI_n(\phi(0) , S )  =  \frac{1}{2 n }   \int_0^1\left\{  \langle \phi'(t) , \MMSE(\bX \mid \cG_t)  \rangle - \left \langle \tfrac{1}{n}  S ,  \MMSE(\bX^{\otimes 2}  \mid \cG_t)  \right \rangle  \right\} \, \dd t . \label{eq:In_interp}
\end{align}
For the purposes of analysis, it is useful to restate this expression with respect to the relative entropy formulation. We use $\bO_t$ to denote the overlap matrix of $\bX$ associated with $\cG_t$ and we define the normalized expectation
\begin{align}
Q_t \coloneqq \frac{1}{n} \ex{\bO_t}.
\end{align}
By the property $(A^{\otimes 2} )^\top B^{\otimes 2 } = (A^\top B)^{\otimes 2}$ the overlap matrix corresponding to $\bX^{\otimes 2}$ can then  be expressed as $\bO_t^{\otimes 2}$. Using this notation, \eqref{eq:In_interp} can be rewritten according to
\begin{align*}
 \cD_n( \phi(0) , S) -  \cD_n(\phi(1) , 0 ) & =  \frac{1}{2n}   \int_0^1\left\{  \left \langle \tfrac{1}{n} S , \ex{ \bO_t^{\otimes 2} }  \right \rangle -  \langle \phi'(t) , \ex{ \bO_t}  \rangle  \right\} \, \dd t . \\
& =  \frac{1}{2}   \int_0^1\left\{  \left \langle S , Q_t^{\otimes2}   \right \rangle -  \langle \phi'(t) , Q_t  \rangle  \right\} \, \dd t  + \cR(\phi) . 
\end{align*}
In the second line, we have added and subtracted the term $ \left \langle S , Q_t^{\otimes2}   \right \rangle$ and so the remainder is given by
 \begin{align*}
 \cR(\phi) \coloneqq\frac{1}{2n^2}   \int_0^1  \left \langle S , \ex{ \bO_t^{\otimes 2} } -   \ex{ \bO_t}^{\otimes 2}  \right \rangle  \, \dd t .
 \end{align*}
Recalling the shorthand notation  $\cD_n(R) \coloneqq \cD_n(R, 0)$ and also making the substitution $\langle S, Q_t^{\otimes 2} \rangle = \langle \cT(Q_t) , Q_t\rangle $ leads to the decomposition 
\begin{align}
\cD_n(\phi(0),S) = \cA(\phi) + \cR(\phi) \label{eq:cD_AR}
\end{align}
where 
\begin{align}
\cA(\phi) \coloneqq  \cD_n(\phi(1)  ) +  \frac{1}{2}   \int_0^1  \left \langle \cT(Q_t) - \phi'(t)   , Q_t  \right \rangle  \, \dd t . \label{eq:cA_alt}
\end{align}

This decomposition holds  generally for any interpolation function $\phi$. The basic idea in the adaptive interpolation method is to specify the interpolation function recursively as a function of the expected overlap in such a way that the term $\cA(\phi)$ can be identified with the desired  single-letter formula. In view of \eqref{eq:cA_alt} one might be tempted to consider the specification $\phi'(t) = \cT(Q_t)$ which cancels the term in the integral, but this specification does not yield the desired result. Instead, we will consider paths satisfying the property
\begin{align}
\phi'(t) = \cT(Q_t )  + \cT^*( \psi'(t))
\end{align}
for some integrable function $\psi' \colon [0,1] \to \psd^d$. Evaluating \eqref{eq:cA_alt} under this choice leads to 
\begin{align}
\cA(\phi)  =\cD_n\left(\phi(0) + \cT(  {\textstyle \int_0^1 Q_t \, \dd t} )  + \cT^*( \psi(1))  \right )  
  -  \frac{1}{2}  \int_0^1 \langle \cT(Q_t ),  \psi'(t)   \rangle  \, \dd t  ,
\label{eq:cA_alt2} 
\end{align}
where $\psi(1) \coloneqq \int_0^1 \psi'(t)\, \dd t$. Note that this expression is reminiscent of  the approximation formula given in  Proposition~\ref{prop:Dh_RS_alt}.

\subsection{Lower bound} 

We construct an interpolation function that depends on $S \in \psd^{d^2}$, an initial value $\eps \in \psd^d$, and an auxiliary parameter $\psi  \in \cQ$, whose value will be specified later. Consider the initial value problem
\begin{align}
\phi'(t)  =  \cT\left(  Q_t \right)  + \cT^*(\psi), \qquad \phi(0) = \eps, \label{eq:IVP_LB}
\end{align}
where $\cT$ and $\cT^*$ are defined as a function of $S$ according to Lemma~\ref{lem:lin_op}.  This is a first order ordinary differential equation because $Q_t$ is a function of  $\phi(t)$. The existence and uniqueness of a solution are established in the following result.

\begin{lemma}\label{lem:IVP_LB} The initial value problem  \eqref{eq:IVP_LB} has a unique solution $\phi_{\eps,\psi}(t)$ defined on $[0,1]$. Furthermore, for  $t \in [0,1]$,
\begin{enumerate}[(i)]
\item the derivative is positive semidefinite and bounded according to $\| \phi'_{\eps, q}(t) \| \le 2 \|S \| \ex{ \frac{1}{n} \|\bX\|_\fro^2}$,
\item 
the mapping $(\eps, \psi)  \mapsto \phi_{\eps,\psi}(t) - \eps$ is  continuous and order-preserving. 
 \end{enumerate}
\end{lemma}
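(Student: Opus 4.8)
The plan is to read \eqref{eq:IVP_LB} as the first-order ODE $\phi'(t) = G(t,\phi(t))$ with $G(t,R) \coloneqq \cT\big(\tfrac1n\ex{\bO_{R,(1-t)S}}\big) + \cT^*(\psi)$, where $\bO_{R,(1-t)S}$ is the overlap matrix of the observations $(\bY_{R,(1-t)S},\bZ)$, and to establish the claims in three stages: well-posedness on $[0,1]$, then the derivative bound (i), then the continuity and order-preserving property (ii). Throughout I would use the properties of $\cT,\cT^*$ from Lemma~\ref{lem:lin_op}: linearity, the spectral-norm bound $\|\cT(M)\|,\|\cT^*(M)\|\le\|S\|\,\|M\|_*$, and — crucially, since $S\succeq0$ — that both operators are order-preserving.

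For well-posedness, first I would record that $R\mapsto\tfrac1n\ex{\bO_{R,(1-t)S}}$ maps into $\cQ$, since by \eqref{eq:OtoMMSE} it equals $\ex{\tfrac1n\bX^\top\bX}-\tfrac1n\MMSE(\bX\mid\bY_{R,(1-t)S},\bZ)$, hence is bounded; and that it is Lipschitz in $R$ with a constant controlled by the fourth moments of $\bX$ — the standard smoothness estimate for the Gaussian channel, obtainable from the I-MMSE relation \eqref{eq:I_R_gradient} together with a bound on the Hessian of $R\mapsto\cI_n(R,(1-t)S)$, valid uniformly over the closed cone $\psd^d$ rather than only its interior. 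Since $\cT$ is linear and $t\mapsto(1-t)S$ is continuous, $G$ is then continuous in $t$, Lipschitz in $R$ uniformly in $t$, and uniformly bounded; Picard--Lindel\"of (in Carath\'eodory form for the $t$-dependence) gives a unique local solution, and uniform boundedness of $G$ precludes finite-time blow-up, so the solution extends to all of $[0,1]$. I would then note the construction is self-consistent: once (i) gives $\phi'(t)\succeq0$, the solution is nondecreasing in the Loewner order from $\phi(0)=\eps\succeq0$, hence stays in $\psd^d$, so $Q_t$ is well-defined along the whole path.

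For (i), set $Q_t\coloneqq\tfrac1n\ex{\bO_t}$. This is positive semidefinite, with $\|Q_t\|_*=\gtr(Q_t)=\tfrac1n\ex{\|\ex{\bX\mid\cG_t}\|_\fro^2}\le\tfrac1n\ex{\|\bX\|_\fro^2}$ by Jensen's inequality, and $\psi\in\cQ$ gives $\psi\succeq0$ with $\|\psi\|_*=\gtr(\psi)\le\tfrac1n\ex{\|\bX\|_\fro^2}$. Order-preservation of $\cT,\cT^*$ then yields $\phi'(t)=\cT(Q_t)+\cT^*(\psi)\succeq0$, and the triangle inequality with the spectral-norm bound of Lemma~\ref{lem:lin_op} gives $\|\phi'(t)\|\le\|S\|(\|Q_t\|_*+\|\psi\|_*)\le2\|S\|\,\ex{\tfrac1n\|\bX\|_\fro^2}$, as claimed. (Combined with $\ex{\|\bX\|_\fro^2}\le d\,\ex{\|\bX\|^2}\le nd\rho^2$ this also produces the bound $\phi_{\eps,\psi}(t)-\eps\preceq2d\rho^2\|S\|I_d$ used in Proposition~\ref{prop:approx_bound}.)

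The hard part will be (ii). Continuity of $(\eps,\psi)\mapsto\phi_{\eps,\psi}(t)-\eps$ is the usual continuous-dependence-on-data-and-parameters statement, available via Gr\"onwall once $G$ is Lipschitz in the state, since $\eps$ enters as the initial value and $\psi$ through the Lipschitz forcing $\cT^*(\psi)$. For the order-preserving property, given $\eps_1\preceq\eps_2$ and $\psi_1\preceq\psi_2$ in $\cQ$ with solutions $\phi_1,\phi_2$, I would study $h(t)\coloneqq\phi_2(t)-\phi_1(t)$, so $h(0)=\eps_2-\eps_1\succeq0$ and $h'(t)=\cT(Q_t^{(2)}-Q_t^{(1)})+\cT^*(\psi_2-\psi_1)$. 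The second term is always $\succeq0$; the first term is $\succeq0$ whenever $h(t)\succeq0$, because $\phi_2(t)\succeq\phi_1(t)$ forces $\ex{\bO_t^{(2)}}\succeq\ex{\bO_t^{(1)}}$ by the data-processing monotonicity of the MMSE matrix (Proposition~\ref{prop:DPI}) together with \eqref{eq:OtoMMSE}, hence $Q_t^{(2)}\succeq Q_t^{(1)}$, and $\cT$ is order-preserving. Thus $h$ satisfies $h(0)\succeq0$ and $h'(t)\succeq0$ on the set where $h(t)\succeq0$; invoking the differential-inequality comparison result Lemma~\ref{lem:ODE_comp} this forces $h(t)\succeq0$, hence $h'(t)\succeq0$, on all of $[0,1]$, so $h$ is nondecreasing and $h(t)\succeq h(0)$, i.e.\ $\phi_2(t)-\eps_2\succeq\phi_1(t)-\eps_1$. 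Specializing to $\eps_1=\psi_1=0$ recovers $0\preceq\phi_{\eps,\psi}(t)-\eps$. The two points I expect to need the most care are the up-to-the-boundary Lipschitz estimate underlying well-posedness, and the bootstrap structure of the comparison step — the bound on $h'$ is only in force once we already know $h\succeq0$ — which is precisely the situation Lemma~\ref{lem:ODE_comp} is designed to resolve.
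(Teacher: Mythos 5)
Your proof is correct and follows essentially the same route as the paper: restate \eqref{eq:IVP_LB} as $\phi'=F(t,\phi(t),\psi)$, use the bounds of Lemma~\ref{lem:lin_op} together with Picard--Lindel\"of for existence, uniqueness, and continuous dependence on $(\eps,\psi)$; derive (i) from the nuclear-norm bound $\|Q_t\|_*,\|\psi\|_*\le\ex{\tfrac1n\|\bX\|_\fro^2}$ and the order-preservation of $\cT,\cT^*$; and obtain (ii) by noting that the right-hand side is order-preserving (via Proposition~\ref{prop:DPI} and Lemma~\ref{lem:lin_op}) and invoking the comparison result of Lemma~\ref{lem:ODE_comp}. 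The only difference is that you spell out the Lipschitz estimate for $R\mapsto\tfrac1n\ex{\bO_{R,(1-t)S}}$ more explicitly than the paper, which simply appeals to differentiability via the I-MMSE relation.
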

\begin{proof}
Let  $\cM  \colon  [0,1] \times \psd^d \to  \psd^d$  denote the normalized expected overlap associated with the observations $(\bY_{R, (1-t) S} , \bZ)$, i.e., 
\begin{align}
\cM(t, \tilde{R}) \coloneqq 2 \nabla_R \cD_n( \tilde{R}, (1-t) S) . \label{eq:cM}
\end{align}
Then, $Q_t = \cM(t, \phi(t) )$ and the initial value problem in \eqref{eq:IVP_LB} can be restated as 
\begin{align}
\phi'(t)  =  F(t, \phi(t),\psi) , \qquad \phi(0) = \eps, \label{eq:IVP_LB_alt}
\end{align}
where $F : [0,1] \times \psd^d  \times \cQ \to \psd^d$ is given by $F(t, \phi, \psi) = \cT\left( \cM(t, \phi) \right)  + \cT^*(\psi)$. The function $F$ is differentiable on the interior of its domain by the I-MMSE relation. Furthermore,  by  Lemma~\ref{lem:lin_op}, it satisfies the bound 
\begin{align*}
\| F(t, \phi, q) \| & =  \|  \cT( \cM(t, \phi) ) + \cT^*(\psi)  \| \notag\\
&  \le \|S\| ( \|\cM(t, \phi) \|_*  + \|\psi\|_*) \\
& \le  2 \|S\|  \ex{\tfrac{1}{n} \|\bX\|_\fro^2},
\end{align*} 
where we have used the fact that $\psi \in \cQ$. 
Therefore, by the Picard--Lindel\"{o}f theorem~\cite[Chapter II, Theorem~1.1]{hartman:2002} there exists a unique solution $\phi_{\eps,\psi}(t)$  defined on the interval  $[0,1]$. Furthermore, $\phi_{\eps,\psi}(t)$ is a continuous function of the parameters  $(\eps, \psi)$  \cite[Chapter V, Theorem~3.1]{hartman:2002}.

Next, observe that  $r \mapsto \cM(t,r)$ is order-preserving by the data processing inequality for MMSE (Proposition~\ref{prop:DPI}), and  $\cT , \cT^*$ are order-preserving by Lemma~\ref{lem:lin_op}. Therefore, 
 $(\phi,\psi) \to F(t, \phi, \psi)$ is order-preserving because it is the composition of order-preserving maps.  For any $(\eps_1, \psi_1 )\preceq  (\eps_2, \psi_2)$ we can apply the comparison result in Lemma~\ref{lem:ODE_comp} to conclude that $\phi_{\eps_1,\psi_1}(t) - \eps_1 \preceq \phi_{\eps_2, \psi_2}(t) - \eps_2$ for all $t \in [0,1]$, and this establishes the order-preserving in part (ii). 
 \end{proof}

Having established the existence of the path satisfying \eqref{eq:IVP_LB}, we may consider the representation in \eqref{eq:cA_alt2}, which becomes:
 \begin{align*}
\cA( \phi_{\eps, \psi}  )& =\cD_n\left( \eps + \cT\left({\textstyle \int_0^1 Q_t \, \dd t}\right) +\cT^*(\psi ) \right )  -  \frac{1}{2} \left \langle \cT\left( {\textstyle \int_0^1Q_t \, \dd t}\right), \psi  \right \rangle.
\end{align*}
In this expression, $\cT(\int_0^1 Q_t \, \dd t)$ is a positive semidefinite matrix that is defined implicitly by the interpolation path parameterized by $(\eps, \tilde{Q} )$. Minimizing over all possible values of this matrix 
leads to the following lower bound
 \begin{align}
\cA( \phi_{\eps,\psi} ) & \ge \inf_{\tilde{R} \in \psd^d} \left\{ \cD_n(  \eps  + \tilde{R} +\cT^*(\psi) )  -  \frac{1}{2} \langle \tilde{R}  , \psi \rangle  \right\}  \notag \\
& \ge \inf_{\tilde{R} \in \psd^d} \left\{ \cD_n(  \tilde{R}  )   -  \frac{1}{2} \langle \tilde{R}  , \psi \rangle  +\frac{1}{2} \langle \eps + \cT^*(\psi) , \psi \rangle \right\}  \notag \\
& \ge \inf_{\tilde{R} \in \psd^d} \left\{ \cD_n(  \tilde{R}  )   -  \frac{1}{2} \langle \tilde{R}  , \psi \rangle  +\frac{1}{2} \langle  \cT^*(\psi) , \psi \rangle \right\},  \label{eq:cD_LB_c}
\end{align}
where the second step holds because we have enlarged the set over which the infimum is taken by making the change of variables $\tilde{R}  \mapsto \tilde{R} - \cT^*(Q)$, and the last step holds because both $\eps$ and $\psi$ are positive semidefinite. 

This bound holds for every choice of the parameter $\psi$.  By the compactness of $\cQ$ and the  upper-semicontinuity of \eqref{eq:cD_LB_c}, it follows from the extreme value theorem~\cite[Theorem~1.29]{bauschke:2017} that the maximum over $\psi$  is attained at some point $\psi^\star \in  \cQ$. Evaluating \eqref{eq:cD_LB_c} at this point and recalling the definition of $\hat{\cD}_n$ in \eqref{eq:Dh_RS} yields $ \cA(  \phi_{\eps,\psi^\star} )  \ge \hat{\cD}_n(0, S)$. In view of  \eqref{eq:cD_AR} this implies the lower bound
 \begin{align*}
\cD_n(\eps , S) )  & \ge \hat{\cD}_n(0, S) + \cR( \phi_{\eps}), 
\end{align*}
where $\psi_\eps \coloneqq \psi_{\eps, \psi^\star}$. Combining with the bound $\cD_n(\eps, S) \le \cD_n(\eps, S)  + \frac{1}{2} \langle \eps, \ex{\tfrac{1}{n} \bX^\top \bX}\rangle$, which follows from the convexity of $\cD_n$ and the I-MMSE relation, we arrive at
 \begin{align}
\cD_n(0 , S)   & \ge \hat{\cD}_n(0, S) - \frac{1}{2} \langle \eps, \ex{ \tfrac{1}{n} \bX^\top \bX}\rangle + \cR( \phi_{\eps}).  \label{eq:cD_LB_g}
\end{align}

\subsection{Upper bound} 

Similar to the previous section, we begin by defining an interpolation function that depends on the pair $(S, \eps)$ and an auxiliary  parameter $\eta \in \cQ$, whose value will be specified later. Consider the initial value problem with coupled equations:
\begin{subequations}
\label{eq:IVP_UB}
\begin{alignat}{3}
\phi'(t)& =  \cT( Q_t  )  +  \cT^*( \psi'(t)), & \qquad & \phi(0) = \eps \\
\psi'(t)& = 2 \nabla \cD_n(  \eps + \cT( Q_t )  +  \cT^*(\eta)) , & \qquad & \psi(0) = 0.\label{eq:IVP_UBb}
\end{alignat} 
\end{subequations}
This is a coupled system of first order ordinary differential equations because $Q_t$ is a function of $\phi(t)$. 

\begin{lemma}\label{lem:IVP_UB} The initial value problem  \eqref{eq:IVP_LB} has unique solutions $\phi_{\eps,\eta}(t)$ and $\psi_{\eps,\eta}(t)$ defined on $[0,1]$. Furthermore, for each $t \in [0,1]$,
\begin{enumerate}[(i)]
\item the derivative is positive semidefinite and bounded according to  $\| \phi'_{\eps, q}(t) \| \le 2 \|S \| \ex{ \frac{1}{n} \|\bX\|_\fro^2}$,
\item the mappings $(\eps, \eta)  \mapsto \phi_{\eps,\eta}(t) - \eps$  and $(\eps, \eta)  \mapsto \psi_{\eps,\eta}(t)$  are  continuous and order-preserving. 
 \end{enumerate}
\end{lemma}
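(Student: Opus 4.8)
The plan is to follow the proof of Lemma~\ref{lem:IVP_LB} closely, exploiting the fact that the system~\eqref{eq:IVP_UB} has a \emph{triangular} structure: the right-hand side of~\eqref{eq:IVP_UBb} involves $\psi$ only through $\phi$ (via $Q_t$), and $\psi'(t)$ in turn appears only inside $\cT^*(\cdot)$ in the equation for $\phi$, so substituting~\eqref{eq:IVP_UBb} into the $\phi$-equation leaves a single, self-contained first-order ODE for $\phi$, after which $\psi$ is recovered by a quadrature. Concretely, writing $\cM(t,\tilde R):=2\nabla_R\cD_n(\tilde R,(1-t)S)$ for the normalized expected overlap of $(\bY_{\tilde R,(1-t)S},\bZ)$ exactly as in the proof of Lemma~\ref{lem:IVP_LB} --- so that $Q_t=\cM(t,\phi(t))$ and, since $\cD_n(\cdot)=\cD_n(\cdot,0)$, also $2\nabla\cD_n(\cdot)=\cM(1,\cdot)$ --- the system~\eqref{eq:IVP_UB} is equivalent to
\begin{align}
\phi'(t)=\cT\bigl(\cM(t,\phi(t))\bigr)+\cT^*\Bigl(\cM\bigl(1,\,\eps+\cT(\cM(t,\phi(t)))+\cT^*(\eta)\bigr)\Bigr),\qquad\phi(0)=\eps,
\end{align}
together with $\psi_{\eps,\eta}(t)=\int_0^t\cM\bigl(1,\,\eps+\cT(\cM(u,\phi(u)))+\cT^*(\eta)\bigr)\,\dd u$. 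I record at the outset that, by the data processing inequality for the MMSE matrix (Proposition~\ref{prop:DPI}) and the identity $\ex{\bO_{R,S}}=\ex{\bX^\top\bX}-\MMSE(\bX\mid\bY_{R,S},\bZ)$, the map $\tilde R\mapsto\cM(t,\tilde R)$ is continuous, $\psd^d$-valued, order-preserving, and takes values in $\cQ$.

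First I would establish existence, uniqueness, and part~(i). The right-hand side above is a composition of $\cM$ --- which is smooth, hence locally Lipschitz, in its matrix argument by the I-MMSE relation, just as argued for $F$ in Lemma~\ref{lem:IVP_LB} --- and the linear operators $\cT,\cT^*$, so it is locally Lipschitz in $\phi$; and it is uniformly bounded, since $\|\cT(M)\|,\|\cT^*(M)\|\le\|S\|\,\|M\|_*$ (Lemma~\ref{lem:lin_op}) together with $\|\cM(t,\cdot)\|_*=\gtr(\cM(t,\cdot))\le\ex{\tfrac1n\|\bX\|_\fro^2}$ gives $\|\phi'(t)\|\le2\|S\|\ex{\tfrac1n\|\bX\|_\fro^2}$. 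The Picard--Lindel\"of theorem \cite[Chapter~II, Theorem~1.1]{hartman:2002} then yields a unique solution $\phi_{\eps,\eta}$ on all of $[0,1]$, and $\psi_{\eps,\eta}$ is uniquely determined by the quadrature. Positive semidefiniteness of $\phi'_{\eps,\eta}(t)$ --- and of $\psi'_{\eps,\eta}(t)$, which is also a value of $\cM(1,\cdot)$ --- holds because $\cM$ is $\psd^d$-valued and $\cT,\cT^*$ are order-preserving, which with the displayed bound gives~(i). The continuity half of~(ii) follows from continuous dependence of ODE solutions on parameters \cite[Chapter~V, Theorem~3.1]{hartman:2002} for $\phi$, and then from dominated convergence in the quadrature for $\psi$.

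The substantive step is the order-preserving half of~(ii). Set $\delta_{\eps,\eta}(t):=\phi_{\eps,\eta}(t)-\eps$, which solves $\delta'(t)=\widetilde{F}(t,\delta(t),\eps,\eta)$ with $\delta(0)=0$, where $\widetilde{F}(t,\delta,\eps,\eta)$ is obtained from the displayed right-hand side by replacing $\phi(t)$ with $\eps+\delta$. Because $\cM(t,\cdot)$, $\cM(1,\cdot)$, $\cT$ and $\cT^*$ are order-preserving and both $(\delta,\eps)\mapsto\eps+\delta$ and translation by $\cT^*(\eta)$ are order-preserving, the map $(\delta,\eps,\eta)\mapsto\widetilde{F}(t,\delta,\eps,\eta)$ is order-preserving as a composition of order-preserving maps. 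Fixing $(\eps_1,\eta_1)\preceq(\eps_2,\eta_2)$ and noting $\delta_{\eps_1,\eta_1}(0)=\delta_{\eps_2,\eta_2}(0)=0$, the comparison inequality for differential equations (Lemma~\ref{lem:ODE_comp}), applied exactly as in Lemma~\ref{lem:IVP_LB}, gives $\delta_{\eps_1,\eta_1}(t)\preceq\delta_{\eps_2,\eta_2}(t)$, i.e.\ $\phi_{\eps_1,\eta_1}(t)-\eps_1\preceq\phi_{\eps_2,\eta_2}(t)-\eps_2$ for every $t\in[0,1]$; in particular $\phi_{\eps_1,\eta_1}(u)\preceq\phi_{\eps_2,\eta_2}(u)$ as well. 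Feeding this into the quadrature, the order-preserving properties of $\cM(u,\cdot)$, $\cM(1,\cdot)$, $\cT$, $\cT^*$ make the integrand defining $\psi_{\eps_1,\eta_1}(t)$ Loewner-dominated, pointwise in $u$, by the integrand defining $\psi_{\eps_2,\eta_2}(t)$; since integration preserves the Loewner order, $\psi_{\eps_1,\eta_1}(t)\preceq\psi_{\eps_2,\eta_2}(t)$, which completes~(ii).

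The only genuinely new point relative to Lemma~\ref{lem:IVP_LB} --- and it is a mild one --- is to verify that, after eliminating $\psi$, the substituted right-hand side for $\phi$ retains both the regularity needed for Picard--Lindel\"of and the order-preserving structure needed for Lemma~\ref{lem:ODE_comp}; both survive because they are preserved under composition, provided one checks that the inner argument $\eps+\cT(\cM(t,\phi))+\cT^*(\eta)$ remains positive semidefinite and hence stays in the domain on which $\cM(1,\cdot)=2\nabla\cD_n$ is defined and regular, and that $\psi$ --- now produced by integration rather than as a state variable --- correctly inherits monotonicity in $(\eps,\eta)$.
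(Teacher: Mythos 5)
Your proof is correct and follows essentially the same route as the paper: the paper likewise writes the system with right-hand sides $F_1,F_2$ depending only on $(t,\phi,\eps,\eta)$ (i.e.\ it performs the same elimination of $\psi$ you make explicit), invokes Picard--Lindel\"of and continuous dependence on parameters, bounds the derivatives via Lemma~\ref{lem:lin_op}, and deduces the order-preserving property from Lemma~\ref{lem:ODE_comp} applied to compositions of order-preserving maps. Your recovery of $\psi$ by quadrature and the pointwise Loewner domination of its integrand is just a slightly more explicit rendering of the paper's appeal to the same comparison argument.
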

\begin{proof}
Let $\cM(t, \psi)$ be defined as in \eqref{eq:cM}. The initial value problem in \eqref{eq:IVP_UB} can be expressed as
\begin{subequations}
\begin{alignat*}{3}
\phi'(t)& =F_1( t, \phi(t),  \eps, \eta) , & \qquad & \phi(0) = \eps \\
\psi'(t)  &= F_2(t, \phi(t), \eps, \eta) ,  && \psi(0) = 0
\end{alignat*} 
\end{subequations}
where $F_k : [0,1] \times \psd^d \times \psd^d \times \cQ \to \psd^d$ is given by
\begin{subequations}
\begin{align*}
F_1(t, \phi,\eps, q) &=  \cT(  \cM(t, \phi)  )  +  \cT^*( F_2(t, \phi, \eps,  \eta))\\
F_2(t, \phi,\eps, q) &= 2 \nabla \cD_n\left(\eps +  \cT( \cM(t, \phi) )  + \cT^*(\eta) \right).
\end{align*}
\end{subequations}
Following the same arguments used in the proof of Lemma~\ref{lem:IVP_LB}, one finds that these functions are differentiable on the interior of their domains and satisfy the bounds $\|  F_1(t, \phi, q) \| \le 2 \|S\| \ex{ \frac{1}{n} \|\bX\|_\fro^2}$ and $\| F_2(t, \phi, q) \| \le \| \ex{\tfrac{1}{n}  \bX^\top \bX}\|$. Therefore, by the Picard--Lindel\"{o}f theorem \cite[Chapter II, Theorem~1.1]{hartman:2002}  there exist  unique solutions $\phi_{\eps,	\eta}(t) $  and $\psi_{\eps,\eta}(t) $  defined on the interval  $[0,1]$. Furthermore, these functions are continuous in $(t, \eps,\eta)$  \cite[Chapter V, Theorem~3.1]{hartman:2002}.

Next,  for $t  \in [0,1]$ and $k =1,2$, the mapping  $(\phi, \eps, \eta) \mapsto F_k(t, \phi, \eps, \eta)$ is order-preserving because it is the composition of order-preserving maps. The order-preserving  properties of $\phi_{\eps,\eta}$ and $\psi_{\eps,\eta}$ then follow from Lemma~\ref{lem:ODE_comp} using the same approach as in the proof of Lemma~\ref{lem:IVP_LB}.
\end{proof}

Starting with \eqref{eq:cA_alt2} and introducing the additional term   $\tilde{R}_{\eps,\eta}'(t)  = \eps +  \cT(Q_t)   +\cT^*( \eta)$, we can write 
 \begin{align}
\cA( \phi_{\eps, \eta}  )
& =\cD_n\left(\phi_{\eps, \eta}(1) \right)  + \frac{1}{2} \langle \eps  + \cT^*(\eta), \psi_{\eps, \eta}(1) \rangle \notag\\
& \quad  -   \frac{1}{2} \int_0^1 \langle \tilde{R}_{\eps,\eta}'(t),  \psi'_{\eps, \eta}(t)  \rangle  \, \dd t ,  \label{eq:interp_UB_c}
\end{align}
where $\psi_{\eps, \eta}(1)   =\int_0^1 \psi_{\eps, \eta}'(t) \, \dd t$. For the path specified by \eqref{eq:IVP_UB}, the terms  $\tilde{R}'_{\eps,\eta}(t)$ and $\psi'_{\eps, \eta}(t)$ are related according to $\psi'_{\eps,\eta}(t) = 2 \nabla \cD_n(\tilde{R}'_{\eps, \eta}(t))$. Consequently, the inner product can be expressed as 
 \begin{align}
\frac{1}{2}   \langle \tilde{R}_{\eps,\eta}'(t) , \psi_{\eps,\eta}'(t) \rangle & =    \cD_n(\tilde{R}_{\eps,\eta}'(t) ) +   \cD_n^*(\tilde{q}_{\eps,\eta}'(t) ),  \label{eq:interp_UB_d}
\end{align}
where $\cD_n^*(Q) = \sup_{R \in \psd^d} \left\{ \frac{1}{2} \langle R, Q \rangle - \cD_n(R) \right\}$ is the convex conjugate of $\cD_n(R)$. 

We now consider two bounds due to Jensen's inequality. First,  by the convexity of $\cD_n^*$, 
\begin{align}
\int_0^1 \cD_n^*(\psi_{\eps,\eta}'(t) ) \, \dd t  \ge  \cD_n^*( \psi_{\eps, \eta}(1) ) .  \label{eq:interp_UB_e}
\end{align}
Second, by the convexity of $\cD_n$, 
\begin{align}
\cD_n(\phi_{\eps,\eta}(1)) & = \cD_n\left(\eps +{ \textstyle \int_0^1  \cT(Q_t  ) \, \dd t } + \cT^*(\psi_{\eps,\eta}(1)) \right) \notag \\
& \le \int_0^1 \cD_n \big(\eps + \cT( Q_t )+ \cT^*(\psi_{\eps,\eta}(1)) \big) \, \dd t \label{eq:interep_UB_e}\\
&  =  \int_0^1 \cD_n \left(\tilde{R}_{\eps,\eta}'(t) + \cT^*(\psi_{\eps,\eta}(1) - \eta) \right) \, \dd t.  \label{eq:interp_UB_f}
\end{align}
Combining \eqref{eq:interp_UB_c},  \eqref{eq:interp_UB_d},  \eqref{eq:interp_UB_e}, and \eqref{eq:interp_UB_f} yields
 \begin{align}
 \cA( \phi_{\eps, \eta}  ) & \le  \frac{1}{2} \langle \eps  + \cT^*(\psi_{\eps, \eta}(1)), \psi_{\eps, \eta}(1) \rangle-  \cD_n^*( \psi_{\eps, \eta}(1) )  \notag \\
& + \frac{1}{2} \langle \cT^*(\psi_{\eps,\eta}(1) - \eta)  , \psi_{\eps, \eta}(1)  \rangle  \notag \\
& + \int_0^1 \left\{ \cD_n \big(\tilde{R}_{\eps,\eta}'(t) + \cT^*(\psi_{\eps,\eta}(1) - \eta) \big )  -    \cD_n(\tilde{R}_{\eps,\eta}'(t) )  \right\}  \, \dd t .
 \label{eq:interp_UB_g}
\end{align}
Recalling the definition of $\cD_n^*$, the first term in \eqref{eq:interp_UB_g} satisfies
\begin{align}
\MoveEqLeft 
  \frac{1}{2} \langle \eps  + \cT^*(\psi_{\eps, \eta}(1)), \psi_{\eps, \eta}(1) \rangle-  \cD_n^*( \psi_{\eps, \eta}(1) )  \notag \\
& =   \inf_{\tilde{R} \in \psd^d} \left\{\cD_n( \tilde{R} ) +   \frac{1}{2} \langle \eps  -  \tilde{R} , \psi_{\eps, \eta}(1)  \rangle + \frac{1}{2}\langle S , \psi_{\eps, \eta}(1)^{\otimes 2} \rangle \right\}  \notag \\
& \le   \inf_{\tilde{R} \in \psd^d} \left\{\cD_n( \tilde{R} ) -   \frac{1}{2} \langle   \tilde{R} , \psi_{\eps, \eta}(1)  \rangle + \frac{1}{2}\langle S , \psi_{\eps, \eta}(1)^{\otimes 2} \rangle \right\}   + \frac{1}{ 2} \langle \eps, \ex{\tfrac{1}{n}  \bX^\top \bX } \rangle \notag \\
& \le   \hat{\cD}_n(0, S) + \frac{1}{ 2} \langle \eps, \ex{\tfrac{1}{n}  \bX^\top \bX } \rangle, \label{eq:interp_UB_h}
\end{align}
where the third step holds because $ \psi_{\eps, \eta}(1)  \preceq \frac{1}{n} \ex{\bX^\top \bX}$ and the last step follows from the definition of $\hat{\cD}_n$ in  \eqref{eq:Dh_RS}.

The second and third terms in  \eqref{eq:interp_UB_g} are equal to zero whenever $\psi_{\eps, \eta}(1)  = \eta$. Thus, to specify an interpolation function with the desired properties we need to show that there exists $\eta^\star \colon \psd^d \to \cQ$  such that:
\begin{enumerate}[(i)]
\item  $\eta^\star(\eps)$ is the solution $\eta$ to the fixed point equation  $\psi'_{\eps, \eta}(1)  = \eta$ and thus $\cA(\phi_{\eps, \eps^\star(\eta)})$ is bounded from above by \eqref{eq:interp_UB_h};
\item  $\eps \mapsto \eta^\star(\eps)$ is order-preserving and thus, by Lemma~\ref{lem:IVP_UB},  $ \eps \mapsto   \phi_{\eps, q^\star(\eps)}(t) - \eps$ is order-preserving. 
\end{enumerate}
The existence of such a mapping is established in the following result. Let  $\Psi(\eps, \eta) \coloneqq \psi_{\eps,\eta}(1)  = \int_0^1 \psi'_{\eps, \eta}(t) \, \dd t$ where $\psi'_{\eps,\eta}(t)$ is the solution to the initial value problem \eqref{eq:IVP_UB} and define the fixed-point set  $\FP(\eps) =  \{ \eta \in \cQ \, : \,  \Psi(\eps, \eta) = \eta \}$.

 \begin{lemma}\label{lem:FP}
 For all  $\eps \in \psd^d$, the set $\FP(\eps)$ is nonempty and there exists a greatest element $\eta^\star(\eps)$ such that $\eta \preceq \eta^\star(\eps)$ for all $\eta \in \FP(\eps)$. Furthermore, the mapping $\eta^\star \colon \psd^d \to \cQ$ is order-preserving in the Loewner order. 
\end{lemma}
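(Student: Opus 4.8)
The plan is to run a Knaster--Tarski-type iteration adapted to the Loewner order, using the monotonicity and continuity of $\Psi$ recorded in Lemma~\ref{lem:IVP_UB}. First I would collect the structural facts I need. By Lemma~\ref{lem:IVP_UB}(ii), $(\eps,\eta)\mapsto\Psi(\eps,\eta)=\psi_{\eps,\eta}(1)$ is continuous and order-preserving on $\psd^d\times\cQ$. Moreover $\Psi$ takes values in $\cQ$: since $\psi_{\eps,\eta}'(t)=2\nabla\cD_n(\tilde R_{\eps,\eta}'(t))\in\cQ$ for every $t$ and $\cQ$ is closed and convex, the average $\Psi(\eps,\eta)=\int_0^1\psi_{\eps,\eta}'(t)\,\dd t$ lies in $\cQ$. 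Finally, $\cQ$ has least element $0$ and greatest element $\bar Q:=\ex{\tfrac1n\bX^\top\bX}$, it is compact in $\sym^d$, and every Loewner-monotone sequence in $\cQ$ converges in $\cQ$ (the diagonal entries converge by monotone convergence of real sequences and the off-diagonal entries by polarization; equivalently, use sequential compactness of $\cQ$). This convergence property is what replaces the lattice-completeness used in the classical statement, which is unavailable for the Loewner order.

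Next I would establish existence of the greatest fixed point. Fixing $\eps$, define $\eta_0:=\bar Q$ and $\eta_{k+1}:=\Psi(\eps,\eta_k)$. Since $\Psi(\eps,\cdot)$ maps into $\cQ$ we have $\eta_1\preceq\bar Q=\eta_0$, and monotonicity of $\Psi(\eps,\cdot)$ then gives $\eta_{k+1}\preceq\eta_k$ for all $k$ by induction. The sequence $(\eta_k)$ is decreasing and bounded below by $0$, hence converges to some $\eta^\star(\eps)\in\cQ$, and continuity of $\Psi(\eps,\cdot)$ yields $\Psi(\eps,\eta^\star(\eps))=\eta^\star(\eps)$; in particular $\FP(\eps)\neq\emptyset$. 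For any $\eta\in\FP(\eps)$ we have $\eta\preceq\eta_0$, so applying $\Psi(\eps,\cdot)$ repeatedly gives $\eta=\Psi(\eps,\eta)\preceq\Psi(\eps,\eta_k)=\eta_{k+1}$ for all $k$, whence $\eta\preceq\eta^\star(\eps)$. Thus $\eta^\star(\eps)$ is the greatest element of $\FP(\eps)$.

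For the monotonicity of $\eps\mapsto\eta^\star(\eps)$, suppose $\eps_1\preceq\eps_2$. Using $\eta^\star(\eps_1)=\Psi(\eps_1,\eta^\star(\eps_1))$ and joint monotonicity of $\Psi$, I get $\eta^\star(\eps_1)\preceq\Psi(\eps_2,\eta^\star(\eps_1))$, i.e.\ $\eta^\star(\eps_1)$ is a post-fixed point of $g:=\Psi(\eps_2,\cdot)$. Iterating $g$ from $\eta^\star(\eps_1)$ produces an increasing sequence (monotonicity of $g$ propagates the inequality $x\preceq g(x)$) that is bounded above by $\bar Q$ and hence converges, by continuity of $g$, to a fixed point $\eta_\infty$ of $g$ with $\eta^\star(\eps_1)\preceq\eta_\infty$. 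Since $\eta_\infty\in\FP(\eps_2)$, the previous step gives $\eta_\infty\preceq\eta^\star(\eps_2)$, and chaining the inequalities yields $\eta^\star(\eps_1)\preceq\eta^\star(\eps_2)$.

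The only step that requires care beyond bookkeeping is the substitute for lattice-completeness: one must verify that bounded monotone sequences in $\cQ$ converge and that the iterates stay in $\cQ$; both follow at once from the compactness and convexity of $\cQ$ together with the containment $2\nabla_R\cD_n(R)\in\cQ$ and the continuity and monotonicity already proven for $\Psi$. Everything else is a routine unwinding of the order-preserving properties from Lemma~\ref{lem:lin_op} and Lemma~\ref{lem:IVP_UB}.
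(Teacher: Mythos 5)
Your proof is correct, but it takes a genuinely different route from the paper. The paper establishes nonemptiness of $\FP(\eps)$, the existence of a greatest element, and the monotonicity of $\eta^\star$ via three separate applications of Brouwer's fixed-point theorem on carefully chosen compact convex sets: first $\cQ$ itself, then the auxiliary set $K(\eps)=\bigcap_{q\in\FP(\eps)}\{\eta\in\cQ : \eta\succeq q\}$, and finally $\{\eta\in\cQ : \eta\succeq\eta^\star(\eps_1)\}$. You instead run a Knaster--Tarski/Kleene iteration: descending from the top element $\bar{Q}$ of $\cQ$ to produce the greatest fixed point directly, and ascending from the post-fixed point $\eta^\star(\eps_1)$ of $\Psi(\eps_2,\cdot)$ to get monotonicity. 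The one ingredient you need that the paper does not is the convergence of order-monotone, order-bounded sequences of symmetric matrices, and your justification (monotone convergence of the quadratic forms $v^\top\eta_k v$ plus polarization, with closedness of the PSD cone to pass the order to the limit) is sound; the remaining inputs --- continuity and joint order-preservation of $\Psi$ from Lemma~\ref{lem:IVP_UB}, and the containment $\Psi(\eps,\eta)\in\cQ$ via $2\nabla\cD_n(\cdot)\in\cQ$ and convexity of $\cQ$ --- are exactly the ones the paper uses. What your approach buys is the elimination of Brouwer's theorem and of the auxiliary set $K(\eps)$: the argument becomes constructive and delivers nonemptiness and the greatest element in a single sweep. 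What the paper's approach buys is that it never has to discuss convergence of matrix sequences at all, since compactness and convexity do that work inside Brouwer's theorem.
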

\begin{proof}
 For each $\eps \in \psd^d$, the mapping  $\eta \mapsto \Psi(\eps,\eta) $ is continuous by Lemma~\ref{lem:IVP_UB} and maps a convex compact subset of Euclidean space to itself.  Therefore,  we can apply Brouwer's fixed-point theorem to conclude that $\FP(\eps)$ is nonempty. 

Next we show that  there is a greatest element. If $\FP(\eps)$ is single-valued then we are done. Otherwise, define the set $K(\eps) = \bigcap_{q \in \FP(\eps)} \{ \eta \in \cQ \, : \, \eta \succeq  q  \}.$ 
This set is nonempty because $\cQ$ has a largest element and it is convex and compact because it is the intersection of convex and compact sets. The mapping $(\eps, \eta) \mapsto \psi_{\eps,\eta}(t)$ is order-preserving by  Lemma~\ref{lem:IVP_UB} and this implies that $(\eps, \eta) \mapsto \Psi(\eps,\eta)$ is order-preserving. Consequently,  $\eta  \in K(\eps)$ implies that $\Psi(\eps,\eta) \succeq \eta$ for all $ \eta \in \FP(\eps)$, and so  $\eta \mapsto \Psi(\eps,\eta)$ maps $K(\eps)$ to itself. By  Brouwer's fixed-point theorem,  there must be a solution on $K(\eps)$. From the definition of $K(\eps)$, this solution is the greatest element of $\FP(\eps)$, which is necessarily unique. 

Finally, we show that the greatest element, denoted by  $\eta^\star(\eps)$, is order-preserving.  Fix any $\eps_1, \eps_2 \in \psd^d$  with $\eps_1 \preceq \eps_2$ and let $\eta_k = \eta^\star(\eps_k)$ for $k=1, 2$.   For each $\eta \succeq \eta_1$, the fact that $\Psi(\eps,\eta)$ is order-preserving means that $\Psi(\eps_2, \eta) \succeq  \Psi(\eps_1, \eta_1) = \eta_1$ and so $\eta \mapsto \Psi(\eps_2, \eta)$ maps the closed convex set $\{ \eta \in \cQ \, : \, \eta \succeq \eta_1\}$ to itself. By Brouwer's fixed-point theorem, there exists a solution on this set and by the assumption that $\eta_2$ is the greatest element it follows that $\eta_1 \preceq \eta_2$. 
\end{proof}

In view of Lemma~\ref{lem:FP}, we may consider the interpolation path given by $\psi_{\eps} \coloneqq \psi_{\eps, \eta^\star(\eps)}$. By  \eqref{eq:cD_LB_g} and \eqref{eq:interp_UB_g}, we obtain the upper bound
 \begin{align}
\cD_n(0, S) )  & \le \hat{\cD}_n(0, S)+ \frac{1}{ 2} \langle \eps, \ex{\tfrac{1}{n}  \bX^\top \bX } \rangle + \cR( \phi_{\eps}).  \label{eq:cD_UB_h}
\end{align}

\begin{remark}
In comparison to previous incarnations of the adaptive interpolation method\cite{barbier:2019c,barbier:2018,barbier:2019f, barbier:2019e, barbier:2020a}, the new components introduced in this paper include: 1) the use of general positive semidefinite matrix-valued perturbations satisfying the order-preserving property, 2)  the introduction of the coupled first order differential equations to define the adaptive path for the upper bound;  and 3) the fixed-point arguments used to establish the existence of the order-preserving mapping $\eta^\star(\eps)$. Moreover, a crucial step in the bounding technique is that the Jensen's inequality in \eqref{eq:interp_UB_e} is applied to the overlap but not the auxiliary path $\psi'_{\eps,\eta}(t)$, and this allows for the cancellation of the second and third terms in \eqref{eq:interp_UB_g}.

The order-preserving assumption can be seen as a natural extension of regularity conditions used in earlier versions of the adaptive interpolation method (see e.g.,~\cite[Supplement, Definition~1]{barbier:2019}), which require that the mapping $\eps \mapsto  \Phi(t, \eps)$ is a $\cC^1$ diffeomorphism, whose Jacobian is greater  than or equal to 1. Unlike the conditions used previously, the order-preserving property does not require continuity. 
\end{remark}

\subsection{Bounding the remainder} 

In the previous two sections, we showed that for all $S$, there exist functions $\phi^L_{\eps}$ and $\phi_\eps^U$ such that
 \begin{align*}
- \frac{1}{ 2} \langle \eps, \ex{\tfrac{1}{n}  \bX^\top \bX } \rangle -  \cR( \phi^L_{\eps}) & \le 
\cD_n(0, S) ) -  \hat{\cD}_n(0, S)  \le  \frac{1}{ 2} \langle \eps, \ex{\tfrac{1}{n}  \bX^\top \bX } \rangle + \cR( \phi^U_{\eps}). 
\end{align*}
This double inequality implies that
\begin{align*}
 | \cD_n(0, S) ) -  \hat{\cD}_n(0, S) |  \le  \frac{\|\eps\| }{ 2}  \ex{ \tfrac{1}{n} \|\bX\|_\fro^2}  
 + \max\left\{ \left|\cR( \phi^U_{\eps}) \right|, \left|  \cR( \phi^U_{\eps})\right|  \right\}.
\end{align*} 
 To bound the remainder term, we write
 \begin{align*}
|  \langle S, \ex{ \bO_{t}^{\otimes 2}}   - \ex{ \bO_{t}}^{\otimes 2}  \rangle |  & =  |  \ex{  \left \langle S,  (  \bO_{t}- \ex{ \bO_{t}} )^{\otimes 2} \right \rangle } |  \\
&  \le \|S\|  \ex{  \| (\bO_{t}   - \ex{ \bO_{t}})^{\otimes 2} \|_\ast } \\
&  =  \|S\|\ex{  \| \bO_{t}   - \ex{ \bO_{t}} \|^2_\fro },  \label{eq:Reps_bnd}
\end{align*}
where $\|\cdot\|_*$ denotes the nuclear norm.  The second step follows from Jensen's inequality and the matrix H\"{o}lder inequality and the last step holds because $\|M^{\otimes 2}\|_* = \| M \|_{\fro}^2$ for any $M \in \reals^{m \times n}$. Hence, for $\phi \in \{ \phi_\eps^L, \phi^U_\eps\}$, 
\begin{align*}
 \left|  \cR( \phi^U_{\eps})\right|   \le \frac{\|S\|}{2 n^2} \int_0^1 \ex{  \| \bO_{t}   - \ex{ \bO_{t}} \|^2_\fro }\, \dd t. 
\end{align*}
 This completes the proof of Proposition~\ref{prop:approx_bound}.

\section{Variance inequality: Proof of Theorem~\ref{thm:var_inq}}\label{proof:thm:var_inq}

There are two steps in this proof. Section~\ref{sec:var_inq} establishes the variance inequality for the case $d=1$, and Section~\ref{sec:basis_decomp} then extends this result to the matrix-valued setting $d > 1 $.

\subsection{Variance inequality from pointwise I-MMSE}\label{sec:var_inq}
This section considers a continuous time representation of the free energy. This approach is inspired by the pointwise I-MMSE relations in \cite{weissman:2010, venkat:2012}.  Let $\bX$ be an $n \times 1$ vector and  define the observation process
\begin{align}
\bY_t = t \bX + \bW_t,
\end{align}
where $(\bW_t)_{t \ge 0}$ is $\reals^n$-valued standard Brownian motion that is independent of $(\bX, \bZ$). We use $\cG_t $ to denote the filtration generated by the observations $(\bY_s , \bZ )_{ 0 \le s \le t}$.

Using this notation, the conditional expectation of a function of $\bX$ can be expressed as
\begin{align*}
 \ex{ f(\bX) \mid \cG_t }  = \frac{  \int f(\bx)  \exp\left\{  \langle \bx, \bY_t   \rangle  -  \frac{t}{2}  \| \bx\|^2 \right \}  \, \dd P_{\bX \mid \bZ}(\bx \mid \bZ )} { \int  \exp\left\{  \langle \bx, \bY_t \rangle  -  \frac{t}{2}  \| \bx\|^2 \right \}  \, \dd P_{\bX \mid \bZ}(\bx \mid \bZ)}.
\end{align*}
The innovations process $(\bar{\bY}_t)_{t \ge 0}$ is defined by
\begin{align*}
\bar{\bY}_t : = \bY_t  - \int_0^t  \ex{ \bX \mid \cG_s}   \, \dd s.
\end{align*}
An important property of the innovations process  is that it is a standard Brownian motion under $P_{\bY_t \mid \bZ}$.  

We study two different processes that are adapted to $\cG_t$.  The first is the free energy: 
\begin{align}
F_t& : = \log \frac{ \dd P_{\bY_t , \bZ}}{  \dd P_{\bW_t , \bZ}},
\end{align}
and the second is the conditional overlap:
\begin{align}
\hat{\bO}_t &: = \|  \ex{ \bX \mid \cG_t} \|^2.
\end{align}
The means of these processes are related to the mutual information and MMSE according to
\begin{align}
 \ex{ F_t} &= D( P_{\bY_t, \bZ} \, \| \, P_{\bW_t,  \bZ} ), \label{eq:Ft_mean}  \\
\bEx[ \hat{\bO}_t] &=  \ex{ \|\bX\|^2} - \mmse(\bX \mid \bY_t, \bZ) . \label{eq:Oht_mean}
\end{align} 
As a consequence of Girsanov's theorem,  the free energy can be expressed as \cite[Equation (116)]{venkat:2012}
\begin{align*}
F_t & =\int_0^t \langle \ex{ \bX \mid \cG_s},  \dd \bY_{s} \rangle    - \frac{1}{2}  \int_0^t \hat{\bO}_s  \, ds .
\end{align*}
In terms of the innovations process $\bar{\bY}_t$, this expression becomes
\begin{align}
F_t & = \frac{1}{2}  \int_0^t \hat{\bO}_s  \, \dd s + \int_0^t \langle  \ex{ \bX \mid \cG_s},  \dd \bar{ \bY}_{s} \rangle .\label{eq:Ft_decomp}
\end{align}
The second term is a zero-mean martingale and so taking the expectation of both sides, comparing with \eqref{eq:Ft_mean}  and \eqref{eq:Oht_mean}, and then taking the derivative in $t$ recovers the I-MMSE relationship $\frac{ \dd }{ \dd t} I(\bX; \bY_t \mid \bZ) = \frac{1}{2} \mmse( \bX \mid \bY_t, \bZ)$.


Next, we consider a further decomposition of the conditional overlap.  The following result can be viewed as special case of the Kushner--Stratonovich equation.

\begin{lemma}\label{lem:rep_pi_f}
For any $f : \reals^{n} \to \reals^m$ with $\ex{ \|f(\bX)\|^2} < \infty$, it holds that
\begin{align*}
\ex{ f(\bX) \mid \cG_t }  & =  \ex{ f(\bX) \mid\bZ}  + \int_0^t \cov( f(\bX), \bX\mid \cG_s)  \, \dd \bar{\bY}_{s}.
\end{align*}
\end{lemma}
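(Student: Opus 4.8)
The plan is to carry out the classical Kallianpur--Striebel (unnormalized filtering) computation for the process $\bY_t = t\bX + \bW_t$ and then normalize. Fix the realization of $\bZ$ and condition on it throughout. Set $Z_t(\bx) \coloneqq \exp\bigl\{ \langle \bx, \bY_t\rangle - \tfrac{t}{2}\|\bx\|^2 \bigr\}$, let $\Sigma_t \coloneqq \int Z_t(\bx)\,\dd P_{\bX\mid\bZ}(\bx\mid\bZ)$ be the scalar normalizer appearing in the Bayes formula displayed just before the lemma, and for a vector-valued function $g$ write $\sigma_t(g) \coloneqq \int g(\bx)\, Z_t(\bx)\,\dd P_{\bX\mid\bZ}(\bx\mid\bZ)$, so that $\ex{g(\bX)\mid\cG_t} = \sigma_t(g)/\Sigma_t$. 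The first step is to apply It\^o's formula to the exponential $Z_t(\bx)$: the second-order term exactly cancels the $-\tfrac{1}{2}\|\bx\|^2\,\dd t$ coming from the exponent, leaving $\dd Z_t(\bx) = Z_t(\bx)\,\langle \bx, \dd\bY_t\rangle$. A stochastic Fubini argument (interchanging $\int\cdots\dd P_{\bX\mid\bZ}$ with the stochastic integral in $t$) then yields $\dd\Sigma_t = \langle \sigma_t(g_0),\, \dd\bY_t\rangle$ and $\dd\sigma_t(f_i) = \langle \sigma_t(f_i g_0),\, \dd\bY_t\rangle$ for each coordinate $f_i$ of $f$, where $g_0(\bx) = \bx$.

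Next I would compute the semimartingale decomposition of $\hat f_i \coloneqq \sigma_t(f_i)/\Sigma_t$ by It\^o's quotient rule. Using $\dd(1/\Sigma_t) = -\Sigma_t^{-2}\langle\sigma_t(g_0),\dd\bY_t\rangle + \Sigma_t^{-3}\|\sigma_t(g_0)\|^2\,\dd t$ together with the cross-variation $\dd[\sigma_{\cdot}(f_i),\,1/\Sigma_{\cdot}]_t = -\Sigma_t^{-2}\langle\sigma_t(f_i g_0),\sigma_t(g_0)\rangle\,\dd t$, and writing $\hat\bx = \ex{\bX\mid\cG_t}$ and $\widehat{f_i\bX} = \ex{f_i(\bX)\bX\mid\cG_t}$, all the $\dd t$ contributions combine with the stochastic term so that
\[
\dd\hat f_i = \bigl\langle \widehat{f_i\bX} - \hat f_i\,\hat\bx,\ \dd\bY_t - \hat\bx\,\dd t \bigr\rangle = \bigl\langle \cov(f_i(\bX),\bX\mid\cG_t),\ \dd\bar{\bY}_t \bigr\rangle ,
\]
where the last equality uses the definition $\bar{\bY}_t = \bY_t - \int_0^t\ex{\bX\mid\cG_s}\,\dd s$ of the innovations process together with the identity $\widehat{f_i\bX} - \hat f_i\hat\bx = \cov(f_i(\bX),\bX\mid\cG_t)$. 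Stacking over $i = 1,\dots,m$ gives $\dd\ex{f(\bX)\mid\cG_t} = \cov(f(\bX),\bX\mid\cG_t)\,\dd\bar{\bY}_t$; integrating from $0$ to $t$ and using $\cG_0 = \sigma(\bZ)$, so that $\ex{f(\bX)\mid\cG_0} = \ex{f(\bX)\mid\bZ}$, yields the stated identity.

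The delicate part is not the formal It\^o calculus but the integrability bookkeeping: $\Sigma_t$ and $\sigma_t(f_i)$ are a priori only local martingales, the stochastic Fubini interchange needs justification, and one must ensure the $\cG_t$-martingale $\ex{f(\bX)\mid\cG_t}$ is genuinely square integrable so that the final stochastic integral against $\bar{\bY}$ is a true martingale. I would dispatch this by localization and truncation: prove the identity first for bounded $f$ and bounded $\bX$, where $Z_t(\bx)$ and all moments are uniformly controlled on compacts and stochastic Fubini applies directly; then pass to the general case by truncating $\bX$ and $f$, using the hypothesis $\ex{\|f(\bX)\|^2} < \infty$ together with the standing moment assumptions on $\bX$ to control the approximation errors in $L^2(P_{\bY_t,\bZ})$ on both sides and concluding by dominated convergence. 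Equivalently, one could invoke the representation of square-integrable $\cG_t$-martingales as stochastic integrals against the innovations Brownian motion $\bar{\bY}$ and identify the integrand by matching cross-variations with $\bar{\bY}$; the Kallianpur--Striebel computation above is simply the most self-contained way to exhibit that integrand.
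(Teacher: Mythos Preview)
Your proof is correct and reaches the Kushner--Stratonovich representation just as the paper does, but the organization is different. The paper applies It\^o's formula directly to the ratio $g_t(\bY_t)=\sigma_t(f)/\Sigma_t$, viewing it as a smooth function of $(t,\by)$ and computing $\dot g_t$, $g_t^i$, $g_t^{ii}$ explicitly in terms of conditional moments; the drift and diffusion terms then collapse into the innovations integral after ``straightforward but tedious calculations.'' You instead take the Kallianpur--Striebel route: first obtain the linear Zakai-type dynamics $\dd Z_t(\bx)=Z_t(\bx)\langle\bx,\dd\bY_t\rangle$ for the unnormalized density, push this through to $\Sigma_t$ and $\sigma_t(f_i)$ via stochastic Fubini, and only then normalize using It\^o's quotient rule.

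The two computations are equivalent in content---both are just It\^o applied to the Bayes formula---but your ordering buys a cleaner bookkeeping: the unnormalized objects satisfy simple linear SDEs with no drift, so all the cancellations occur in one place (the quotient rule step) rather than being spread across $\dot g_t$, $g_t^i$, $g_t^{ii}$. Your explicit discussion of localization/truncation to justify stochastic Fubini and to upgrade from local to true martingales is also more careful than the paper, which leaves these points implicit.
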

\begin{proof} 
For simplicity we derive this result in the setting where there is no side information $\bZ$. The conditional expectation can be expressed as $\ex{ f(\bX) \mid \cG_t }  = g_t(\bY_t)$ where  
\begin{align}
g_t(\by)  =\frac{ \int f(\bx) \exp\left\{ \langle \by, \bx \rangle  - \frac{t}{2} \|\bx\|^2 \right \} \, \dd P_{\bX}(\bx)}{\int \exp\left\{ \langle \bx, \by \rangle  - \frac{t}{2} \|\bx\|^2 \right \} \, \dd P_{\bX}(\bx)}.
\end{align}
By It\^{o}'s formula~\cite[Theorem~8.3]{steele:2001}, 
\begin{align}
g_t(\bY_t) & = g_0(0) + \int_0^t \dot{g}_t(\bY_t)  \, \dd s  + \int_0^t  \sum_{i}  g_s^i( \bY_s) \, \dd Y_{s,i}  + \frac{1}{2} \int_0^t \sum_i g_s^{ii}(\bY_s) \, \dd s,
\end{align}
where $\dot{g}_t(\by)$ denotes the time derivative, and $g^i_t(\by)$ and $g^{ii}_t(\by)$ denote the first and second partial derivatives with respect to the $i$-th entry. Some straightforward but tedious calculations show that 
\begin{align}
\dot{g}_t(\bY_t) & = - \frac{1}{2} \cov(f(\bX) , \|\bX\|^2 \mid \bY_t)\\
g^i_t(\bY_t) 
& = \cov(f(\bX), X_i \mid \bY_t) \\
g^{ii}_t(\bY_t) 
& = \cov(f(\bX), X_i^2 \mid \bY_t)  + 2 \cov( f(\bX), X_i \mid \bY_t) \ex{ X_i \mid \bY_t },
\end{align}
and plugging these expression back into It\^{o}'s formula,  gives
\begin{align}
g_t(\bY_t) & =  \ex{ f(\bX)}  + \int_0^t  \sum_{i}  \cov(f(\bX), X_i \mid \bY_t)  \, \dd Y_{s,i}   -  \int_0^t \sum_i \cov( f(\bX), X_i \mid \bY_t)\ex{ X_i \mid \bY_t }  \, \dd s.
\end{align}
Writing this expressions in terms of the innovations process gives the desired result. 
\end{proof}

\begin{lemma}\label{lem:rep_Oh} The conditional overlap can be expressed as 
\begin{align}
\hat{\bO}_t & =  \| \ex{ \bX\mid \bZ}\|^2 + \int_0^t \| \cov(\bX \mid \cG_s)\|_\fro^2\,  \dd s   + 2 \int_0^t  \langle  \cov( \bX \mid \cG_s) \ex{ \bX \mid \cG_s}  , \dd \bar{\bY}_{s}  \rangle. \label{eq:rep_Oh}
 \end{align}
\end{lemma}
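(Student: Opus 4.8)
The plan is to derive \eqref{eq:rep_Oh} by combining Lemma~\ref{lem:rep_pi_f} with a single application of It\^o's formula to the quadratic map $x \mapsto \|x\|^2$. Write $M_t \coloneqq \ex{\bX \mid \cG_t}$ and $\Sigma_t \coloneqq \cov(\bX \mid \cG_t)$; the latter is an $n \times n$ symmetric positive semidefinite matrix. Specializing Lemma~\ref{lem:rep_pi_f} to $f$ equal to the identity map already gives the It\^o representation
\begin{align*}
M_t = \ex{\bX \mid \bZ} + \int_0^t \Sigma_s \, \dd \bar{\bY}_{s},
\end{align*}
so $(M_t)$ is an $\reals^n$-valued It\^o process driven by the innovations process $\bar{\bY}_t$, which — as recalled earlier in this section — is a standard $n$-dimensional Brownian motion under $P_{\bY_t \mid \bZ}$.

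First I would apply the multivariate It\^o formula to $\hat{\bO}_t = \|M_t\|^2$. Using $\partial_i \|x\|^2 = 2 x_i$ and $\partial_{ij} \|x\|^2 = 2 \delta_{ij}$, the first-order term is $2\langle M_t, \dd M_t \rangle = 2\langle M_t, \Sigma_t \, \dd \bar{\bY}_t\rangle = 2\langle \Sigma_t M_t, \dd \bar{\bY}_t\rangle$, where the last equality uses the symmetry of $\Sigma_t$; and the It\^o correction term is $\sum_i \dd\langle M_i \rangle_t = \gtr(\Sigma_t \Sigma_t^\top) \, \dd t = \|\Sigma_t\|_\fro^2 \, \dd t$, since the quadratic covariation of the vector martingale $\int_0^{\cdot} \Sigma_s \, \dd \bar{\bY}_s$ with respect to the standard Brownian motion $\bar{\bY}$ is $\int_0^{\cdot} \Sigma_s \Sigma_s^\top \, \dd s$ and $\Sigma_t$ is symmetric. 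Integrating from $0$ to $t$ and using $M_0 = \ex{\bX \mid \cG_0} = \ex{\bX \mid \bZ}$ (because $\cG_0 = \sigma(\bZ)$) yields \eqref{eq:rep_Oh}.

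The remaining points are routine but must be dispatched. As in the proof of Lemma~\ref{lem:rep_pi_f}, I would first carry out the computation without side information and then condition on $\bZ$ to recover the general statement. The integrability required for It\^o's formula and for $\int_0^t \langle \Sigma_s M_s, \dd \bar{\bY}_s\rangle$ to be a genuine martingale follows from the standing finite-fourth-moment assumption on $\bX$ together with the contraction bound $\|\Sigma_t\| \le \ex{\|\bX - \ex{\bX\mid\cG_t}\|^2 \mid \cG_t}$, supplemented if needed by a standard localization argument. The one place that genuinely deserves care — and the main obstacle — is the bookkeeping of the It\^o correction: the quadratic variation must be computed with respect to the innovations process $\bar{\bY}_t$ rather than the observation process $\bY_t$, which is precisely why the representation in Lemma~\ref{lem:rep_pi_f} is stated in terms of $\bar{\bY}_t$; getting this right is what produces the $\|\cov(\bX\mid\cG_s)\|_\fro^2 \, \dd s$ drift in \eqref{eq:rep_Oh}.
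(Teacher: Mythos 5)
Your proof is correct and matches the paper's argument exactly: the paper likewise specializes Lemma~\ref{lem:rep_pi_f} to the identity map to obtain $\ex{\bX\mid\cG_t} = \ex{\bX\mid\bZ} + \int_0^t \cov(\bX\mid\cG_s)\,\dd\bar{\bY}_s$ and then applies It\^o's formula to $\bx\mapsto\|\bx\|^2$. Your explicit bookkeeping of the quadratic-variation term $\|\cov(\bX\mid\cG_s)\|_\fro^2\,\dd s$ and the symmetry of the conditional covariance is exactly the (unstated) computation behind the paper's one-line proof.
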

\begin{proof}
 By Lemma~\ref{lem:rep_pi_f}, the conditional mean satisfies  $\ex{\bX \mid \cG_t}   =\ex{ \bX \mid \bZ}  +  \int_0^t  \cov(\bX \mid \cG_s)  \, \dd \bar{\bY}_{s}$.  Applying It\^{o}'s formula with the mapping  $\bx \mapsto \|\bx\|^2$ gives the stated result. 
\end{proof}

Lemma~\ref{lem:rep_Oh} can be viewed as a second-order pointwise I-MMSE relation providing a link between the first and second derivatives of the free energy. Indeed, the second term in \eqref{eq:rep_Oh} is a zero-mean martingale and so taking the expectation of both sides, comparing with \eqref{eq:Oht_mean},  and taking the derivative in $t$ recovers the identity $\frac{\dd}{\dd t} \mmse(\bX \mid  \bY_t ,  \bZ)  =  - \ex{ \| \cov(\bX \mid \bY_t, \bZ)\|_\fro^2}$.

Using these pointwise decompositions of the free energy and conditional overlap we can derive an inequality linking the variance of the conditional overlap to the variance of the free energy. The following result is the special case of Theorem~\ref{thm:var_inq} for the vector-valued  setting ($d=1)$ where the overlap matrix is a scalar. 

\begin{prop}[Variance inequality]\label{prop:var_inq}
For all $\delta > 0$, 
\begin{align*}
\var(\hat{\bO}_t) & \le \frac{4}{\delta^2}   \var\left( F_{t+\delta} - F_t\right)   +2 \bEx[\hat{\bO}_t]   \left( \bEx[\hat{\bO}_{t+\delta}]   -  \bEx[\hat{\bO}_t]   \right).
\end{align*}
\end{prop}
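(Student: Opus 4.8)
The plan is to work directly from the two pathwise representations derived above: equation~\eqref{eq:Ft_decomp} for the free energy and equation~\eqref{eq:rep_Oh} for the conditional overlap. Write $G \coloneqq F_{t+\delta} - F_t$ and split it, via \eqref{eq:Ft_decomp}, into its bounded-variation part and its martingale part,
\[
G = \tfrac12 \int_t^{t+\delta} \hat{\bO}_s \, \dd s + M, \qquad M \coloneqq \int_t^{t+\delta} \langle \ex{\bX \mid \cG_s}, \dd \bar{\bY}_s \rangle .
\]
Since $M$ is a stochastic integral against the innovations Brownian motion over $[t,t+\delta]$, it is a martingale increment with $\ex{M \mid \cG_t} = 0$ (the moment assumptions on $\bX$ guarantee the square-integrability of the integrand, hence that $M$ is a genuine martingale with finite second moment). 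Because $\hat{\bO}_t$ is $\cG_t$-measurable, this gives $\cov(\hat{\bO}_t, M) = \ex{\hat{\bO}_t \ex{M\mid\cG_t}} = 0$, and therefore, by Fubini,
\[
\cov\bigl(\hat{\bO}_t, G\bigr) = \frac12 \int_t^{t+\delta} \cov\bigl(\hat{\bO}_t, \hat{\bO}_s\bigr) \, \dd s .
\]

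The next step is to lower bound the integrand. By \eqref{eq:rep_Oh}, for $s \ge t$ the process $\hat{\bO}$ is a submartingale, so $\ex{\hat{\bO}_s \mid \cG_t} \ge \hat{\bO}_t \ge 0$; multiplying by $\hat{\bO}_t \ge 0$ and taking expectations yields $\ex{\hat{\bO}_t \hat{\bO}_s} = \ex{\hat{\bO}_t \ex{\hat{\bO}_s\mid\cG_t}} \ge \ex{\hat{\bO}_t^2}$. Moreover $s \mapsto \bEx[\hat{\bO}_s]$ is nondecreasing (again by \eqref{eq:rep_Oh}, or by \eqref{eq:Oht_mean} together with monotonicity of the MMSE). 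Hence, for $t \le s \le t+\delta$,
\[
\cov(\hat{\bO}_t, \hat{\bO}_s) \ge \ex{\hat{\bO}_t^2} - \bEx[\hat{\bO}_t]\,\bEx[\hat{\bO}_s] \ge \var(\hat{\bO}_t) - \bEx[\hat{\bO}_t]\bigl(\bEx[\hat{\bO}_{t+\delta}] - \bEx[\hat{\bO}_t]\bigr).
\]
Integrating over $s \in [t,t+\delta]$ gives $\cov(\hat{\bO}_t, G) \ge \tfrac{\delta}{2}(a-c)$, where I abbreviate $a \coloneqq \var(\hat{\bO}_t)$ and $c \coloneqq \bEx[\hat{\bO}_t](\bEx[\hat{\bO}_{t+\delta}] - \bEx[\hat{\bO}_t]) \ge 0$.

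Finally I would combine this with the Cauchy--Schwarz bound $\cov(\hat{\bO}_t, G)^2 \le a\,\var(G)$. If $a \le c$, the claimed inequality $a \le \tfrac{4}{\delta^2}\var(G) + 2c$ is immediate since $\var(G) \ge 0$ and $c \ge 0$. Otherwise $a - c \ge 0$, so $\tfrac{\delta^2}{4}(a-c)^2 \le \cov(\hat{\bO}_t,G)^2 \le a\,\var(G)$, whence $\tfrac{4}{\delta^2}\var(G) \ge (a-c)^2/a$; the elementary identity $(a-c)^2/a + 2c - a = c^2/a \ge 0$ then yields $\tfrac{4}{\delta^2}\var(G) + 2c \ge a$, which is exactly the asserted bound with $\var(G) = \var(F_{t+\delta}-F_t)$. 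The only genuinely delicate point is the very first step — justifying that the martingale part $M$ of the free-energy increment is a true square-integrable martingale, so that $\cov(\hat{\bO}_t, M)=0$ is legitimate and all second moments in play are finite; this rests on the finite fourth moment assumption on $\bX$, e.g. through the conditional-Jensen bound $\ex{\hat{\bO}_t^2} \le \ex{\|\bX\|^4}$ and the analogous control of $\ex{\bX\mid\cG_s}$. Everything afterward is the covariance computation plus two lines of algebra.
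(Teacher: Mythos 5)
Your proof is correct and follows essentially the same route as the paper: the It\^{o}/Girsanov decomposition of $F_{t+\delta}-F_t$ into drift plus martingale, orthogonality of the martingale increment with the $\cG_t$-measurable $\hat{\bO}_t$, a lower bound on $\cov(\hat{\bO}_t,\hat{\bO}_s)$ coming from the non-negativity of the overlap and of the drift in Lemma~\ref{lem:rep_Oh} (your submartingale shortcut is an equivalent packaging of the paper's bound $\cov(\hat{\bO}_t,\|\cov(\bX\mid\cG_u)\|_\fro^2)\ge-\bEx[\hat{\bO}_t]\,\ex{\|\cov(\bX\mid\cG_u)\|_\fro^2}$), and finally Cauchy--Schwarz. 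The only differences are cosmetic: you rearrange the covariance identity directly and close with the identity $(a-c)^2/a+2c-a=c^2/a$ instead of the paper's ``$x\le\sqrt{bx}+a\implies x\le b+2a$'' step.
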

\begin{proof}
We begin with the decomposition 
\begin{align}
\var(\hat{\bO}_t)  
= \frac{2}{\delta}  \cov\left( \hat{\bO}_t ,  F_{t+\delta} - F_t \right)  
  - \cov\left( \hat{\bO}_t  , \frac{2}{\delta} \left(  F_{t+\delta} -F_t \right)   - \hat{\bO}_t \right) . \label{eq:var_inq_b}
\end{align}
By the Cauchy-Schwarz inequality, the first covariance term satisfies
\begin{align}
\cov\left( \hat{\bO}_t , F_{t+\delta} -F_t \right) \le \sqrt{  \var( \hat{\bO}_t )  \var\left( F_{t+\delta} - F_t\right) }.  \label{eq:var_inq_c}
\end{align}
To bound the second term, we use \eqref{eq:Ft_decomp} to write
\begin{align*}
\frac{2}{\delta} \left(  F_{t+\delta} -F_t \right)   - \hat{\bO}_t  & = \frac{1}{\delta} \int_{t}^{t+\delta} \left( \hat{\bO}_{s}  - \hat{\bO}_t \right) \, \dd s  + \frac{2}{\delta} \int_{t}^{t+\delta}  \langle \bX_s - \ex{ \bX_{s} \mid  \cG_s } ,  \dd \bar{ \bY}_{s} \rangle.
\end{align*}
By the martingale property, the second term on the right-hand side is uncorrelated with $\hat{\bO}_t$ and thus
\begin{align*}
\cov\left( \hat{\bO}_t , \frac{2}{\delta} \left( F_{t+\delta} -F_t \right)   - \hat{\bO}_t \right)  
 & =  \cov\left( \hat{\bO}_t , \frac{1}{\delta} \int_t^{t+\delta} ( \hat{\bO} _{ s} - \hat{\bO}_t)  \, \dd s   \right)\\
 &   =\frac{1}{\delta} \int_t^{t+\delta}   \cov\left( \hat{\bO}_t ,  \hat{\bO}_{s}   - \hat{\bO}_t \right) \, \dd s,
\end{align*}
where the second step is justified by Fubini's theorem.  Next, the decomposition in  Lemma~\ref{lem:rep_Oh} leads to  
\begin{align}
 \hat{\bO}_{s}   - \hat{\bO}_t & = \int_t^{s}  \| \cov(\bX \mid \cG_{u})\|_\fro^2  \, \dd u  + 2 \int_t^{s}     \langle  \cov( \bX \mid \cG_u) \ex{ \bX \mid \cG_u}  , \dd \bar{\bY}_{u}  \rangle .
\end{align} 
Similar to before, the second term is a zero-mean martingale  that is uncorrelated with $\hat{\bO}_t$, and thus
\begin{align*}
 \cov\left( \hat{\bO}_t ,  \hat{\bO}_{s}   - \hat{\bO}_t \right) &  = \cov\left( \hat{\bO}_t ,  \int_t^{s}  \| \cov(\bX \mid \cG_{u})\|_\fro^2  \, \dd u\right) \\
& = \int_s^{t}  \cov\left( \hat{\bO}_t ,   \| \cov(\bX \mid \cG_{u})\|_\fro^2 \right) \, \dd u.
\end{align*}
Now, the crux of the argument is that  because both $\hat{\bO}_t$ and $ \| \cov(\bX \mid \cG_{u})\|_\fro^2 $ are non-negative, the covariance satisfies 
\begin{align*}
 \cov\left( \hat{\bO}_t ,   \| \cov(\bX \mid \cG_{u})\|_\fro^2 \right) \ge   -  \bEx[ \hat{\bO}_t]  \ex{ \| \cov(\bX \mid \cG_{u})\|_\fro^2 } . 
\end{align*}
Combining the above displays and recognizing that $\ex{ \| \cov(\bX \mid \cG_{u})\|_\fro^2 } = \frac{ \dd}{ \dd u} \bEx[ \hat{\bO}_{u}]$ leads to 
\begin{align}
 \cov\left( \hat{\bO}_t , \frac{2}{\delta} \left( F_{t+\delta} -F_t \right)   - \hat{\bO}_t \right) 
& \ge  - \frac{1}{\delta} \int_t^{t+\delta}  \int_t^{s}   \bEx[\hat{\bO}_t]    \ex{  \| \cov(\bX \mid \cG_{u})\|_\fro^2 }   \, \dd u  \, \dd s \notag \\
 & \ge  -\bEx[\hat{\bO}_t]   \int_t^{t+ \delta}      \ex{  \| \cov(\bX \mid \cG_{u})\|_\fro^2 }   \, \dd u \notag  \\
 & =  - \bEx[\hat{\bO}_t]   \left( \bEx[\hat{\bO}_{t+\delta}]   -  \bEx[\hat{\bO}_t]   \right).   \label{eq:var_inq_g}
 \end{align}
 
In view of \eqref{eq:var_inq_b}, \eqref{eq:var_inq_c}, and \eqref{eq:var_inq_g} we have shown that
\begin{align}
\var(\hat{\bO}_t) & \le \frac{2}{\delta} \sqrt{  \var( \hat{\bO}_t )  \var\left( F_{t+\delta} - F_t\right) }   + \bEx[\hat{\bO}_t]   \left( \bEx[\hat{\bO}_{t+\delta}]   -  \bEx[\hat{\bO}_t]   \right) .
\end{align}
The final expression follows from the fact that for nonnegative numbers $x,a,b$,  the inequality  $x \le \sqrt{ bx} + a$ implies  $x \le b + 2 a$. 
\end{proof}

\subsection{Extension to matrix-valued setting}\label{sec:basis_decomp}

We now prove the variance inequality in the matrix-valued setting ($d > 1$)  where the overlap is  given by a $d\times d$ random matrix. We use the same notation as the statement of Theorem~\ref{thm:var_inq}, where  $\bO$ is the overlap matrix associated with observations $\bZ$ and $\bO_{R}$ is the overlap matrix associated with the observation pair $(\bY_R, \bZ)$. 

 To bound the Frobenius norm of the difference between  the conditional expectation of the overlap and its expectation, we use the following result.

 \begin{lemma}\label{lem:frame}
 The collection of rank-one positive semidefinite matrices  $\{B_1, \dots, B_{d^2}\}$ given in \eqref{eq:def_basis} forms a frame for the space of $d \times d$ symmetric matrices with lower and upper frame bounds $1$ and $d$, respectively. In other words, for each $M \in \sym^{d}$, 
 \begin{align}
 \|M\|_\fro^2 \le \sum_{k = 1}^{d^2}   \langle B_{k},   M  \rangle^2  \le  d \,  \| M \|_\fro^2. \label{eq:frame_bnd}
\end{align}
\end{lemma}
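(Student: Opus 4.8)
The plan is to reduce \eqref{eq:frame_bnd} to an elementary identity relating $\sum_k \langle B_k, M\rangle^2$ to the entries of $M$. Since the collection $\{B_1,\dots,B_{d^2}\}$ is indexed by the ordered pairs $(a,b)\in\{1,\dots,d\}^2$ through $\pi$, the sum $\sum_{k=1}^{d^2}$ is the same as $\sum_{a,b=1}^{d}$, and I would organize the computation according to the three cases in \eqref{eq:def_basis}. Using $\langle vv^\top, M\rangle = v^\top M v$ together with the symmetry of $M$, the diagonal matrices give $\langle B_{\pi(a,a)}, M\rangle = M_{aa}$, while for $a\neq b$ one gets $\langle B_{\pi(a,b)}, M\rangle = \tfrac12(M_{aa}+M_{bb}) \pm M_{ab}$, with sign $+$ when $a<b$ and $-$ when $a>b$.

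The next step is to sum the squares. Grouping the $d^2$ terms into the $d$ diagonal terms and the $\binom{d}{2}$ pairs $\{(i,j),(j,i)\}$ with $i<j$, and applying the identity $(x+y)^2 + (x-y)^2 = 2x^2 + 2y^2$ with $x=\tfrac12(M_{ii}+M_{jj})$ and $y = M_{ij}$, each such pair contributes $\tfrac12(M_{ii}+M_{jj})^2 + 2M_{ij}^2$. Adding the diagonal contribution $\sum_a M_{aa}^2$ and recalling that $\|M\|_\fro^2 = \sum_a M_{aa}^2 + 2\sum_{i<j}M_{ij}^2$, this yields the exact identity
\begin{align}
\sum_{k=1}^{d^2}\langle B_k, M\rangle^2 = \|M\|_\fro^2 + \frac{1}{2}\sum_{i<j}(M_{ii}+M_{jj})^2. \notag
\end{align}

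From here both bounds follow quickly. The lower bound in \eqref{eq:frame_bnd} is immediate because the extra term is nonnegative. For the upper bound I would use $(M_{ii}+M_{jj})^2 \le 2(M_{ii}^2+M_{jj}^2)$, so that $\tfrac12\sum_{i<j}(M_{ii}+M_{jj})^2 \le \sum_{i<j}(M_{ii}^2+M_{jj}^2) = (d-1)\sum_a M_{aa}^2 \le (d-1)\|M\|_\fro^2$, where the middle equality just counts that each index appears in exactly $d-1$ pairs; combining gives $\sum_k \langle B_k, M\rangle^2 \le d\,\|M\|_\fro^2$. The argument is a direct computation with no real obstacle; the only points requiring a little care are that the index set $\{1,\dots,d^2\}$ is in bijection with the ordered pairs $(a,b)$, so nothing is double-counted, and the multiplicity bookkeeping $\sum_{i<j}(M_{ii}^2+M_{jj}^2) = (d-1)\sum_a M_{aa}^2$.
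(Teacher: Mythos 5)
Your proof is correct and follows essentially the same route as the paper's: compute $\langle B_{\pi(a,b)},M\rangle$ case by case, pair the $(i,j)$ and $(j,i)$ terms to obtain the exact identity $\sum_k\langle B_k,M\rangle^2=\|M\|_\fro^2+\tfrac12\sum_{i<j}(M_{ii}+M_{jj})^2$, and bound the surplus term by $(d-1)\|M\|_\fro^2$ via $(x+y)^2\le 2x^2+2y^2$. No gaps.
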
 
 \begin{proof}
For any symmetric matrix $M = (m_{ij})$, we can write
\begin{align*}
\langle B_{\pi(i,j)} , M \rangle &= \begin{dcases}  m_{ii} , & i = j\\
\frac{1}{2} (m_{ii} + m_{jj})  + m_{ij}    & i  <  j \\
\frac{1}{2} (m_{ii} + m_{jj} ) -  m_{ij}    & i  >  j 
  \end{dcases}.
\end{align*}
where  $\pi(i,j) = i + (d-1) j$.   Therefore, 
\begin{align*}
\sum_k \langle B_{k} , M \rangle^2
&= \sum_{i = j } m^2_{ii} +  \frac{1}{4} \sum_{i \ne j}  (m_{ii} + m_{jj})^2  +   \sum_{i \ne j}  m_{ij}^2  = \|M\|_F^2 +  \frac{1}{4} \sum_{i \ne j}  (m_{ii} + m_{jj})^2.
\end{align*}
By the basic inequality $0 \le (a + b)^2 \le 2a + 2 b$, the second term satisfies $
0  \le \frac{1}{4} \sum_{i \ne j}  (m_{ii} + m_{jj})^2   \le  \sum_{i  \ne j} m_{ii}^2  = (d- 1) \sum_{i} m_{ii}^2 \le (d-1) \|M\|_\fro^2$, and this establishes the desired result. 
\end{proof}

In view of  Lemma~\ref{lem:frame} and the fact that the conditional overlap matrix is symmetric, we can now write
\begin{align}
 \|   \ex{ \bO  \mid \bZ}  - \ex{ \bO}    \|^2_\fro & \le   \sum_{k=1}^{d^2} \langle B_k,     \ex{\bO \mid \bZ}   - \ex{ \bO }     \rangle ^2 . \label{eq:overlap_frame_decomp}
\end{align}
Furthermore, each $B_k$ can be expressed as $B_k = a_k a_k^\top$ where $a_k$ is a $d \times 1$. Introducing the $n \times 1$ vector $\bX_k \coloneqq \bX a_k$, leads to
\begin{align}
\langle B_k,     \ex{\bO \mid \bZ}  -   \ex{\bO }  \rangle & = \hat{\bO}_k - \bEx[ \hat{\bO}_k]
\end{align}
where $\hat{\bO}_k = \| \ex{ \bX_k \mid \bZ} \|^2$ is the conditional expectation of the scalar overlap corresponding to $\bX_k$.  Taking the expectation of both sides of  \eqref{eq:overlap_frame_decomp} yields
\begin{align}
\ex{   \|   \ex{ \bO  \mid \bZ}  - \ex{ \bO}    \|^2_\fro } & \le   \sum_{k=1}^{d^2}\var(\hat{\bO}_k). \label{eq:overlap_frame_decomp_c}
\end{align}

We can now apply the vector-valued variance inequality in Proposition~\ref{prop:var_inq}, evaluated at $t = 0$, to obtain 
\begin{align}
\var(\hat{\bO}_k)  \le \frac{4}{\delta^2}   \var\left( F_{k,\delta}  \right)  
  +2 \bEx[\hat{\bO}_k ]   \left( \bEx[ \hat{\bO}_{k,\delta}  ]   -  \bEx[\hat{\bO}_k]   \right). \label{eq:overlap_frame_decomp_d}
\end{align}
where $F_{k,\delta}$ and $\hat{\bO}_{k,d}$  are defined as in Section~\ref{sec:var_inq} with respect to the $n \times 1$ vector $\bX_k$. We can reinterpret these  these quantities the context of the matrix model by noting that they are equal in distribution to the random variables $F_{\delta B_k}$ and $\langle B_k , \ex{  \bO_{\delta B_k} \mid \bY_{\delta B_k}, \bZ}  \rangle$ associated with the observation of $\bX$ with matrix $\delta B_k$, i.e., 
\begin{align}
F_{\delta B_k} \coloneqq \log \frac{ \dd P_{\bY_{\delta B_k}, \bZ}}{ P_{\bW, \bZ}}.
\end{align}
In particular, this means that
\begin{align*}
\var(F_{k,\delta}) &= V( P_{\bY_{\delta B_k}, \bZ }\, \| \, P_{\bW, \bZ}) \\
 \bEx[ \hat{\bO}_{k,\delta} ] & = \langle B_k ,  \ex{ \bO_{\delta B_k} }  \rangle
\end{align*}
Combining these expressions with \eqref{eq:overlap_frame_decomp_c} and \eqref{eq:overlap_frame_decomp_d} completes the proof of Theorem~\ref{thm:var_inq}.

\section{Overlap concentration: Proof of Theorem~\ref{thm:overlap_concentration} }\label{proof:thm:overlap_concentration} 

This section gives the proof of the bound on the squared deviation of the overlap given in Theorem~\ref{thm:overlap_concentration}.  Recall that $\bX$ is an $n \times d$ random matrix satisfying $( \ex{ \|\bX\|^4})^{1/4} \le \sqrt{n} \, \rho$ and  $\bO_\eps$ is the $d \times d$ overlap matrix associated with observations $(\bY_\eps, \bZ)$.  The main object of interest is the averaged squared deviation of the overlap given by
\begin{align}
\Delta &\coloneqq  \int \ex{ \| \bO_{\eps + \psi(\eps)}  - \ex{ \bO_{\eps + \psi(\eps)} } \|_\fro^2} \, \dd \mu_{\beta}(\eps) ,
\end{align}
where $\psi \colon \psd^d \to \psd^d$ is an arbitrary order-preserving function and  $\mu_\beta$ is the probability measure of the random matrix in \eqref{eq:eps_random}.  Equivalently, we can write
\begin{align}
\Delta & = \int \ex{ \| \bO_{\eps}  - \ex{ \bO_{\eps} } \|_\fro^2} \, \dd \tilde{\mu}_{\beta}(\eps),
\end{align}
where $\tilde{\mu}_{\beta}$ is the pushforward measure under the mapping $\eps \mapsto \eps + \psi(\eps)$. 

\subsection{Orthogonal decomposition}
We begin with the decomposition
 \begin{align}
\ex{\| \bO_{\eps}  - \ex{ \bO_\eps}\|_\fro^2}  &  = \ex{\|\bO_{\eps}  -\ex{ \bO_{\eps} \mid \cG_\eps } \|_\fro^2}  + \ex{ \|\ex{ \bO_{\eps} \mid \cG_\eps } - \ex{ \bO_{\eps}}\|_\fro^2},  \label{eq:Oeps_decomp}
\end{align}
where $\cG_\eps$ denotes the sigma-algebra generated by the observation pair $(\bY_\eps, \bZ)$.  The first term is a measure of the correlation between the rows in the conditional distribution.  In Section~\ref{proof:eq:O_decomp1}, it is shown that 
\begin{align}
\ex{ \|  \bO_\eps  - \ex{ \bO_\eps \mid \cG_\eps} \|_\fro^2} 
& \le  2 n d  \rho^2 \left(  \ex{    \left\|  \cov(\gvec(\bX) \mid \cG_{\eps} )\right\|_\fro^2} \right)^{1/2}  , \label{eq:O_decomp1}
\end{align}
where $\gvec(\bX)$ denotes the $nd \times 1$ vector obtained by stacking the columns in $\bX$. The second term in \eqref{eq:Oeps_decomp} is a measure of the variability  in conditional expectation with respect to the observations. For $\delta > 0$, applying the variance inequality in Theorem~\ref{thm:var_inq} along with the bound $\ex{ \bO_{\eps} }\preceq \ex{ \bX^\top \bX} \preceq n \rho^2 I_d $ and that $\gtr(B_k) = 1$  yields
\begin{align}
 \ex{ \|\ex{ \bO \mid \cG_\eps} - \ex{ \bO}\|_\fro^2} 
& \le \frac{4}{ \delta^2} \sum_{k=1}^{d^2} V\left( P_{\tilde{\bY}_{\delta B_k} , \bY_{\eps}, \bZ}  \,  \| \,  P_{\tilde{\bW}, \bY_\eps, \bZ}  \right)   +   2n \rho^2  \sum_{k=1}^{d^2}  \langle B_k, \bEx[ \bO_{\eps + \delta B_k} ] -  \bEx[ \bO_\eps  ]   \rangle. \label{eq:O_decomp2}
\end{align} 
Combining \eqref{eq:Oeps_decomp}, \eqref{eq:O_decomp1}, and \eqref{eq:O_decomp2} and then integrating with respect to $\tilde{\mu}$ yields
\begin{align}
\Delta & \le  \Delta_1 + \Delta_2 + \Delta_3,   \label{eq:Delta_bnd}
\end{align}
where
\begin{align}
\Delta_1 & = 2  d n \rho^2  \left(  \int \ex{   \left\|  \cov(\gvec(\bX) \mid \cG_\eps)\right\|_\fro^2}   \, \dd \tilde{\mu}_\beta(\eps)  \right)^{1/2} \notag  \\
\Delta_2 & =\frac{4}{ \delta^2}   \sum_{k=1}^{d^2}  \int V\left( P_{\tilde{\bY}_{\delta B_k} , \bY_{\eps}, \bZ}  \,  \| \,  P_{\tilde{\bW}, \bY_\eps, \bZ}  \right)  \, \dd \tilde{\mu}_\beta(\eps) \notag  \\
\Delta_3 & =    2 \rho^4   \sum_{k=1}^{d^2}   \int \langle B_k, \bEx[ \bO_{\eps + \delta B_k} ] -  \bEx[ \bO_\eps  ]   \rangle \, \dd \tilde{\mu}_\beta(\eps) . \label{eq:Delta_3}
\end{align}


\subsection{Average over perturbation}
Let $\cU = [0,1]^{d^2 + 1}$ and for $k = 0, 1\dots, d^{2}$ define  $Q_k , Q_k' \colon \cU \times \cU \to [0, \infty)$, according to 
\begin{align*}
Q_k(u, \tilde{u}) & \coloneqq  \langle B_k,  \ex{ \bO_{\eps(u) + \psi(\eps(\tilde{u}))}} \rangle\\
Q'_k(u, \tilde{u}) & \coloneqq \frac{\partial  }{\partial u_k} \langle B_k,  \ex{ \bO_{\eps(u) + \psi(\eps(\tilde{u}))}} \rangle
\end{align*}
where  $B_0 \coloneqq  I_d$ and 
\begin{align}
\eps(u) \coloneqq  \frac{\beta u_0}{2}  B_0 + \frac{\beta}{2d} \sum_{k=1}^{d^2} u_k B_k. \label{eq:eps_u}
\end{align}
Here, the existence of the partial derivative is justified by the linearity of $u \mapsto \eps(u)$ and the differentiability of the MMSE with the matrix in the linear model channel~\cite{payaro:2009}. Furthermore, the following properties hold:
\begin{enumerate}[(i)]
\item Monotonicity:  $Q_k(u,\tilde{u})$ is order-preserving,  and thus non-decreasing in each of its arguments, because it is the composition of order-preserving mappings. 
\item Boundedness: $0 \le  Q_k(u,\tilde{u}) \le n \rho^2 $ because $0 \preceq \ex{\bO_{\eps} } \preceq  \ex{ \bX^\top \bX}\preceq n \rho^2 I_d$ and $\gtr(B_k) = 1$. 
\end{enumerate}

With these definitions in hand, we are now ready to bound the integral appearing in $\Delta_1$. 
%
By the second order I-MMSE relation for the vector-valued linear Gaussian channel with scalar signal-to-noise ratio \cite{payaro:2009} we can write
\begin{align*}
\frac{ \dd }{ \dd t} \mmse(\bX \mid \bY_{R + t I_d}  , \bZ)  = - \ex{   \left\|  \cov(\gvec(\bX) \mid  \bY_{R + t I_d},   \bZ)\right\|_\fro^2}.
\end{align*}
In conjunction with the relationship between the MMSE and the overlap in \eqref{eq:OtoMMSE} and the chain rule for differentiation, this implies that the partial derivative with respect to $u_0$ can be expressed as
\begin{align}
Q_0'(u, \tilde{u})
& = \frac{\beta}{2} \ex{   \left\|  \cov(\gvec(\bX) \mid \cG_{\eps(u) + \psi(\eps(\tilde{u}))})\right\|_\fro^2}. 
\end{align}
From the definition of  $\tilde{\mu}_\beta$, it then follows that
\begin{align}
 \int \ex{   \left\|  \cov(\gvec(\bX) \mid  \cG_{\eps } \right\|_\fro^2}  \, \dd \tilde{\mu}_{\beta}(\eps) & = \frac{2}{\beta}  \int_{\cU}  Q_0'(u, u) \, \dd u. \label{eq:Delta_bnd_T1a}
\end{align}
%
Now, the crux of the argument is that because $Q_0(u,\tilde{u})$ is non-decreasing in both $u_0$ and $\tilde{u}_0$, the integral of $Q'_0(u,u)$ with respect to $u_0$ can be bounded from above as follows: 
\begin{align*}
\int_0^1   Q'_0(u, u)  \, \dd u_0 & \le \int_0^1  \frac{\partial}{\partial u_0}   Q_0(u, u)  \, \dd u_0\\
&  = Q_0(u ,u) \Big \vert_{u_0  = 1}  - Q_0(u ,u) \Big \vert_{u_0  = 0}\\
& \le nd \rho^2,
\end{align*}
where the last step follows from the bounds on $Q_0(u,\tilde{u})$. Combining this inequality with \eqref{eq:Delta_bnd_T1a}  leads to 
\begin{align*}
  \int \ex{   \left\|  \cov(\gvec(\bX) \mid  \cG_{\eps } \right\|_\fro^2}  \, \dd \tilde{\mu}_{\beta}(\eps)
 & \le \frac{ 2 nd \rho^2, } {\beta},
\end{align*}
and thus, by Jensen's inequality, $\Delta_1 \le (2 nd \rho^2)^{3/2} / \sqrt{\beta}$.

For the term $\Delta_2$, noting that  $\eps \sim \mu_\beta$ satisfies $\|\eps\| \le \beta$ almost surely and  $\| \delta B_k\| = \delta$ leads to 
\begin{align*}
V\left( P_{\tilde{\bY}_{\delta B_k} , \bY_{\eps + \psi(\eps) }, \bZ}  \,  \| \,  P_{\tilde{\bW}, \bY_{\eps + \phi(\eps)}, \bZ}  \right)  \le \overline{V},
\end{align*}
for all $\eps$ in the support of  $\tilde{\mu}_\beta$. Hence, $\Delta_2 \le  \frac{4d^2}{\delta^2}  \overline{V}$. 

Finally, we consider the term $\Delta_3$. From \eqref{eq:eps_u} we can write $\eps(u)  + \delta_k B_k =  \eps(  u +  \alpha e_k)$  
where $e_k$ denotes a standard basis vector in  $(1 + d^2)$ dimensions (with the indexing starting at zero) and $\alpha  =  2 d \delta / \beta$. Consequently, $ \langle B_k,  \ex{ \bO_{\eps(u) + \psi(\eps(\tilde{u})) + \delta B_k }} \rangle  = Q_{k}( u +  \alpha e_k, \tilde{u} )$, 
and so the integral in $\Delta_3$ can be expressed as
\begin{align}
\int \langle B_k, \bEx[ \bO_{\eps + \delta B_k} ] -  \bEx[ \bO_\eps  ]   \rangle \, \dd \tilde{\mu}_{\beta}(\eps) 
 & =  \int_{\cU} \left\{   Q_k(u +\alpha e_k , u) -  Q_k(u, u) \right\}  \, \dd u. \label{eq:Delta_bnd_T3a}
\end{align}
Since $v \mapsto Q_k(u + v  e_k , u)$ is differentiable, the integrand can be expressed as 
\begin{align} 
 \int_0^1  \left\{ Q_k(u + 2\delta e_k , u) -  Q_k(u, u) \right\} \, \dd u_k  & =  \int_0^1  \int_0^{\alpha}  Q'_k(u +v , u) \, \dd v  \, \dd u_k \notag \\
& =  \int_0^{\alpha}   \int_0^1    Q'_k(u +v , u)   \, \dd u_k   \, \dd v. \label{eq:Delta_bnd_T3b}
\end{align} 
Similar to before, we use the fact that $Q_k(u +v, \tilde{u} )$ is non-decreasing in both $u_0$ and $\tilde{u}_0$ to bound the term on the inside
\begin{align*}
 \int_0^1    Q'_k(u +v , u)   \, \dd u_k & \le  \int_0^1   \frac{\partial}{\partial u_k}   Q_k(u +v , u)   \, \dd u_k \le 
 n \rho^2, 
\end{align*}
where the last step follows from the bounds on $Q_k(u,\tilde{u})$. Plugging this inequality back into  \eqref{eq:Delta_bnd_T3b} then yields
\begin{align}
\int_0^1  \left\{ Q_k(u + 2\delta e_k , u) -  Q_k(u, u) \right\} \, \dd u_k 
  \le \alpha n\rho^2. 
\end{align}
In view of \eqref{eq:Delta_3} and  \eqref{eq:Delta_bnd_T3a}, we conclude that  $\Delta_3 \le  4 n^2 d \delta \rho^4/ \beta$.

Combining the upper bounds on $\Delta_k$, $k =1 ,2,3$ with \eqref{eq:Delta_bnd} completes the proof of Theorem~\ref{thm:overlap_concentration}.

\subsection{Proof of Inequality~\eqref{eq:O_decomp1}}\label{proof:eq:O_decomp1} 
To simplify the notation we make the dependence on $\eps$ implicit and write $\bO$ and $\cG$. Recall that the overlap matrix is given by $\bO = (\bX')^\top \bX''$ where $\bX', \bX''$ are conditionally independent draws from the condition distribution of $\bX$ given $\cG$. Introducing the notation $\hat{\bX} \coloneqq \ex{ \bX \mid \cG}$ and noting that $\ex{ \bO \mid \cG} = \hat{\bX}^\top \hat{\bX}$ leads to the  orthogonal decomposition 
\begin{align}
\bO - \ex{ \bO \mid \cG} 
=  ( \bX' - \hat{\bX})^\top \bX''  +  \hat{\bX}^\top( \bX''  - \hat{\bX}),
\end{align}
where the expected inner product of the terms on the right-hand side is equal to zero.  Using  the identities $\|A^\top B\|_F^2 = \langle A^\top B  , A^\top  B\rangle= \langle  A A^\top , B B^\top \rangle$  and recalling that $(\bX, \bX', \bX'')$ are i.i.d.\ conditional on $\cG$ leads to
\begin{align*}
 \ex{ \|  \bO  - \ex{ \bO \mid \cG}\|_\fro^2  } 
& =  \ex{ \|  ( \bX' - \hat{\bX})^\top \bX'' \|_\fro^2 }   + \ex{ \|  \hat{\bX}^\top  ( \bX'' - \hat{\bX}) \|_\fro^2 }  \\
&  =   \ex{ \langle   ( \bX' - \hat{\bX}) ( \bX' - \hat{\bX})^\top , \bX''(\bX'')^\top   \rangle}  \notag \\
& \quad  +  \ex{ \langle  ( \bX'' - \hat{\bX})^\top   ( \bX'' - \hat{\bX})^\top ,  \hat{\bX}  \hat{\bX}^\top   \rangle}  \\
&  =   \ex{ \langle   \bE,  \bX \bX^\top   \rangle}  +  \ex{ \langle  \bE ,  \bX' (\bX'')^\top    \rangle} ,
\end{align*}
where $\bE = \ex{ (\bX- \hat{\bX}) (\bX - \hat{\bX})^\top \mid \cG}$. From here, two applications of the Cauchy-Schwarz inequality  yields
\begin{align}
 \ex{ \|  \bO  - \ex{ \bO \mid \cG}\|_\fro^2  } 
&  \le  \sqrt{ \ex{\| \bE\|_\fro^2}}   \left(\sqrt{ \ex{\| \bX \bX^\top \|_\fro^2}}  + \sqrt{ \ex{\| \bX' (\bX'')^\top  \|_\fro^2}}    \right)  \notag \\
&  \le 2   \sqrt{ \ex{\| \bE\|_\fro^2 }  \ex{\| \bX^\top \bX \|_\fro^2}} \notag\\
&  \le 2  n \rho^2 \sqrt{ d\,   \ex{\| \bE \|_\fro^2}} , \label{eq:O_to_Oh_bnd_d}
\end{align}
where the last line follows from  $( \ex{ \|\bX\|^4})^{1/4} \le \sqrt{n} \rho$, which implies that $\ex{ (\bX^\top \bX)^2 } \preceq n^2 \rho^4 I_d$. 

Finally, letting $\bX = (X_1, \dots, X_n)^\top $  and $\hat{\bX} = (\hat{X}_1, \dots, \hat{X}_n)^\top$, allows us to write
\begin{align}
 \| \bE  \|^2_\fro & =  \sum_{i,j=1}^n  \ex{  (X_i - \hat{X}_i)^\top ( X_j - \hat{X}_j) \mid \cG }^2 \notag \\
 & =  \sum_{i,j=1}^n   \left \langle I_d,  \cov(X_i, X_j \mid \cG)  \right \rangle^2 \notag \\
& \le  \sum_{i,j=1}^n d  \left\|  \cov(X_i, X_j \mid \cG)\right\|_\fro^2 , \label{eq:O_to_Oh_bnd_g}
\end{align}
where the third step is the Cauchy-Schwarz inequality.  Plugging \eqref{eq:O_to_Oh_bnd_g}
back into \eqref{eq:O_to_Oh_bnd_d} completes the proof of \eqref{eq:O_decomp1}.

\appendix

\section{Comparison lemma for differential equations}

There exists a large literature on differential inequalities; see e.g.,~\cite[Chapter~III]{hartman:2002}. The next result is related to a result by  Lasota et al.\ \cite[Lemma~4]{lasota:1970}. A self-contained proof is provided for completeness.  

\begin{lemma}[Comparison Lemma]\label{lem:ODE_comp}
Let $\phi(t)$ and $\psi(t)$ be $\sym^d$-valued functions that are differentiable on $(0,1)$ with 
\begin{alignat*}{3}
\phi'(t) &\preceq F(t, \phi(t)) , &\quad &\phi(0)  \preceq u\\
\psi'(t) &\succeq G(t, \psi(t)), &&  \psi(0)  \succeq v,
\end{alignat*}
where $u,v \in \sym^d$ satisfy $u \preceq v$ and $F, G \colon [0,1] \times \sym^d \to \sym^d$ satisfy  $F(t, \phi) \preceq G(t, \psi) $  whenever $\phi \preceq \psi$.  Furthermore, suppose that  $ |G(t, \phi) - G(t,\psi) |  \le L(t) \| \phi - \psi\|$ where $L(t)$ is integrable on $[0,1]$. 
 Then, 
\begin{align}
\phi(t)  - u \preceq  \psi(t)   -  v   \quad \text{for all $t \in [0,1]$.}
\end{align}
\end{lemma}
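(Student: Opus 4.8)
The plan is to reduce everything to a scalar Grönwall argument applied to the \emph{negative part} of the matrix
\[
E(t) \coloneqq \psi(t) - \phi(t) - (v - u) ,
\]
the goal being $E(t) \succeq 0$ for all $t \in [0,1]$, which is exactly the asserted inequality $\phi(t) - u \preceq \psi(t) - v$ rewritten. The initial condition is immediate: $E(0) = (\psi(0) - v) - (\phi(0) - u) \succeq 0$ because $\psi(0) \succeq v$ and $\phi(0) \preceq u$.

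The crucial step is a differential inequality for $E$ whose right-hand side sees only the negative part of $E$. Write $E(t) = E_+(t) - E_-(t)$ for the Jordan decomposition into positive semidefinite pieces, and set $M(t) \coloneqq \phi(t) + (v-u) + E_+(t)$. Then $\phi(t) \preceq M(t)$, since $v \succeq u$ and $E_+(t) \succeq 0$, so the order hypothesis on $(F,G)$ gives $F(t,\phi(t)) \preceq G(t, M(t))$. Combining $\psi'(t) \succeq G(t,\psi(t))$ with $\phi'(t) \preceq F(t,\phi(t)) \preceq G(t, M(t))$, using the identity $\psi(t) = M(t) - E_-(t)$, and invoking the Lipschitz bound on $G(t,\cdot)$ yields
\[
E'(t) = \psi'(t) - \phi'(t) \ \succeq\ G\big(t, M(t) - E_-(t)\big) - G\big(t, M(t)\big) \ \succeq\ -\,L(t)\,\|E_-(t)\|\, I_d .
\]
Working with $\|E_-(t)\| = \max\{0, -\lambda_{\min}(E(t))\}$ rather than with $\|E(t)\|$ is the whole point: the naive estimate $E'(t) \succeq -L(t)\|E(t)\| I_d$ does not close, because $\|E(t)\|$ can be large for reasons unrelated to any failure of positivity, whereas the choice $M(t) = \phi(t)+(v-u)+E_+(t)$ absorbs the positive part and leaves exactly the quantity we must control.

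To conclude, put $h(t) \coloneqq \|E_-(t)\| = \max\{0, -\lambda_{\min}(E(t))\}$, which is continuous with $h(0) = 0$ and $h \ge 0$. Integrating the differential inequality from $0$ and using $E(0) \succeq 0$ gives $E(t) \succeq -\big(\int_0^t L(s) h(s)\,ds\big) I_d$, hence $h(t) \le \int_0^t L(s) h(s)\, ds$, and Grönwall's inequality forces $h \equiv 0$, i.e.\ $E(t) \succeq 0$ on $[0,1]$. The only point requiring care is the passage $E(t) = E(0) + \int_0^t E'(s)\,ds$, which needs absolute continuity of $E$; this is automatic in the intended applications (Lemmas~\ref{lem:IVP_LB} and~\ref{lem:IVP_UB}), where $\phi$ and $\psi$ solve first-order ODEs with continuous right-hand sides and are therefore $C^1$, while in the general merely-differentiable case one runs the same estimate through the upper right Dini derivative of the continuous function $h$. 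I expect the extraction of the differential inequality phrased in terms of $E_-(t)$ — in particular the choice of $M(t)$ that lets the order hypothesis and the Lipschitz estimate be applied in tandem — to be the main obstacle; once it is available the Grönwall step is routine.
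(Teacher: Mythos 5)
Your proof is correct and follows essentially the same route as the paper's: both reduce to a scalar Gr\"onwall bound on the quantity measuring the failure of $E(t)=\psi(t)-\phi(t)-(v-u)\succeq 0$ (your $\|E_-(t)\|$ is exactly the positive part of the paper's $h(t)=\lambda_{\mathrm{max}}(\phi(t)-u-\psi(t)+v)$), both exploit the order hypothesis by comparing $F(t,\phi(t))$ with $G$ evaluated at a point dominating $\phi(t)$, and both convert the residual via the Lipschitz bound into $L(t)$ times that scalar. The only differences are bookkeeping --- you shift by the matrix $E_-(t)$ rather than by $h(t)I_d$ and integrate the matrix inequality directly, thereby avoiding the envelope theorem on $\lambda_{\mathrm{max}}$ and the contradiction step --- and the absolute-continuity caveat you flag is equally implicit in the paper's own argument.
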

\begin{proof}
The desired result can be stated equivalently as $h(t) \le 0$ for all $t\in [0,1]$ where $h(t) = \lambda_\mathrm{max}( \phi(t)  - u  - \psi(t)  + v  )$. Using the variational representation of the maximum eigenvalue $\lambda_\mathrm{max}(M) = \sup_{ \| u \| \le 1} u^\top  M u$
   and then applying the envelope theorem \cite[Theorem~2]{milgrom:2002}, it follows that $h(t)$ is absolutely continuous with
$ h(b)   =  h(a) +  \int_a^b  \lambda_\mathrm{max}( \phi'(t) - \psi'(t) ) \, \dd t$. The integrand can be bounded from above using
\begin{align*}
\lambda_\mathrm{max}( \phi'(t) - \psi'(t))  & \le  \lambda_\mathrm{max}( F(t, \phi(t))- G(t, \psi(t))) \\
&   \le \lambda_\mathrm{max}\left( G(s,   \psi(t) + h(t) I_d  ) -G(t, \psi(t)) \right) \\
&\le  L(t)   | h(t) |, 
\end{align*}
where the first inequality follows from the assumptions on $\phi(t), \psi(t)$ and the remaining steps follow from the assumptions on $F,G$ and the basic inequality $\phi(t)  \preceq \psi(t) + h(t) I_d$. Thus, we have shown that 
\begin{align*}
 h(b)   \le  h(a) +  \int_a^b  L(t) | h(t)| \,\dd t.
 \end{align*}
Now, suppose for the sake of contradiction that there exists $b \in ( 0, 1]$ such that $h(b) > 0$.  By the continuity of $h$ and the fact that $h(0) \le 0$,  there exists $a \in [0,b)$ such that $h(a) = 0$ and  $h(t) \ge 0$ for all $t \in [a,b]$.  Applied to this interval, the upper bound described above implies that
$h(b)  \le  \int_a^b  L(t)  h(t) \, \dd t$.  By Gr\"onwall's inequality \cite[Chapter~III, Theorem~1.1]{hartman:2002} it follows that $h(b) \le 0$, which is a contradiction. 
\end{proof}

\section{Gradients of convex functions}

%

\begin{lemma}\label{lem:convex_gradients}
Let $f_n$ be a sequence of  convex functions defined on $\sym^d_+$ that converges pointwise to a limit $f$. Then, $f$ is convex and for all $x \in \psd^d$ and $y \in \sym^d$ such that $x + t y \in \psd^d$ for small enough $t  > 0$, it follows that
\begin{align}
\limsup_{n \to \infty} \max_{u \in \partial f_n(x)} \langle u, y \rangle 
& \le   \max_{u \in \partial f(x)} \langle u, y \rangle. 
\end{align}
\end{lemma}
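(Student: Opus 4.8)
The plan is to reduce the statement to the elementary monotonicity of the difference quotients of a convex function, which turns the assertion into an interchange-of-limits question that pointwise convergence alone suffices to settle. Convexity of the limit $f$ is immediate: $\psd^d$ is convex, and passing to the limit in $f_n(\lambda x_0 + (1-\lambda) x_1) \le \lambda f_n(x_0) + (1-\lambda) f_n(x_1)$ gives the same inequality for $f$.

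Fix $x \in \psd^d$ and $y \in \sym^d$ with $x + ty \in \psd^d$ for all $t \in (0,t_0]$. For any convex function $g$ that is finite on $\psd^d$, the difference quotient $t \mapsto \big(g(x+ty) - g(x)\big)/t$ is nondecreasing on $(0,t_0]$, so the one-sided directional derivative $g'(x;y) \coloneqq \inf_{0 < t \le t_0} \big(g(x+ty) - g(x)\big)/t$ is well defined in $[-\infty, +\infty)$. The key step is the subgradient bound: if $u \in \partial f_n(x)$ then $f_n(x + ty) \ge f_n(x) + t\langle u, y\rangle$, hence for every $t \in (0, t_0]$
\[
\max_{u \in \partial f_n(x)} \langle u, y \rangle \;\le\; \frac{f_n(x+ty) - f_n(x)}{t}.
\]
The right-hand side depends on $f_n$ only through its values at the two points $x$ and $x+ty$, so no uniform convergence is required: pointwise convergence $f_n \to f$ gives, for each fixed $t$,
\[
\limsup_{n \to \infty} \max_{u \in \partial f_n(x)} \langle u, y \rangle \;\le\; \frac{f(x+ty) - f(x)}{t},
\]
and taking the infimum over $t \in (0, t_0]$ yields $\limsup_{n} \max_{u \in \partial f_n(x)} \langle u, y \rangle \le f'(x;y)$, the directional derivative of the convex limit $f$.

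It then remains to bound $f'(x;y)$ by $\max_{u \in \partial f(x)} \langle u, y \rangle$. Whenever $y$ is an interior point of the cone of admissible directions at $x$ — which holds in particular whenever $x \in \pdd^d$, and more generally whenever $x + ty \in \pdd^d$ for small $t > 0$ — the convex map $y' \mapsto f'(x;y')$ is finite and continuous at $y$, so it agrees there with its own closure, which is the support function of $\partial f(x)$; thus $f'(x;y) = \max_{u \in \partial f(x)} \langle u, y\rangle$ (classical convex analysis, e.g.\ Rockafellar, \emph{Convex Analysis}, Thm.~23.4), completing the proof in this case.

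\textbf{The one genuinely delicate case} is that of a boundary direction: $x$ singular and $x + ty$ remaining on $\partial \psd^d$, so that $y$ lies on the boundary of the admissible cone. There one still has the trivial inequality $f'(x;y) \ge \max_{u \in \partial f(x)} \langle u, y \rangle$, and the reverse inequality — the one we actually need — calls for controlling the directional derivative of $f$ along the face of $\psd^d$ that contains $x + ty$. The natural route is to perturb $y$ to the admissible interior direction $y + \varepsilon I$ (interior since $x + t(y + \varepsilon I) \succeq t\varepsilon I \succ 0$), apply the interior case just established to get $f'(x; y + \varepsilon I) = \max_{u \in \partial f(x)} \langle u, y + \varepsilon I\rangle$, and then let $\varepsilon \downarrow 0$, using the semicontinuity of $\varepsilon \mapsto f'(x; y + \varepsilon I)$ and of $f$ along the relevant face. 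Apart from this last identification, the argument is entirely free once the monotone difference quotient is in hand; I expect the boundary analysis to be the only part requiring real care.
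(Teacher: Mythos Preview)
Your proposal is correct and follows essentially the same approach as the paper: both arguments express the directional derivative as the infimum of difference quotients, use pointwise convergence at each fixed $t$ to obtain $\limsup_n f_n'(x;y) \le f'(x;y)$, and then invoke the max formula $f'(x;y) = \max_{u \in \partial f(x)} \langle u, y\rangle$. The paper's proof is simply terser, citing the max formula directly (Bauschke--Combettes, Theorem~17.18) without separating out the boundary-direction case that you flag as delicate; your treatment of that case is more careful than the paper's, but the core idea is identical.
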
 
\begin{proof} The directional derivative of a convex function  satisfies $f'(x;y)  = \inf_{t >0} t^{-1}( f(x + t y)  - f(x))$  \cite[Proposition~17.2]{bauschke:2017}. 
For each $t > 0$, pointwise convergence implies that $\frac{1}{t} ( f_n(x + ty) - f_n(x)) \to \frac{1}{t} ( f(x + ty) - f(x)) $ and taking the infimum over both sides yields $ \limsup_{n \to \infty}  f'_n(x;y)  \le   f'(x;y)$. To obtain the desired expression we apply the max formula \cite[Theorem~17.18]{bauschke:2017}, which states that $ f'(x ; y) =  \max_{u \in \partial f(x)} \langle u, y \rangle  $.
\end{proof}

\section{Concentration of free energy}

\begin{prop}\label{prop:relative_entropy_variance}
Consider the assumptions of Theorem~\ref{thm:MI_bound}. For all  $(R, S) \in  \psd^d \times \psd^{d^2}$, 
\begin{align}
 V( P_{ \bY_{R , S}  , \bZ } \, \| \, P_{\bY_{0,0}, \bZ}  ) \le  n d \rho^2 \left( \|R\| + d \rho^2 \|S\| \right)   +   3 n d^2 \rho^4 \left( \|R \|  + 2 d \rho^2  \|S\|   \right)^2.
\end{align}
\end{prop}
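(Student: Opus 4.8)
The plan is to view the quantity to bound as the variance of a free energy and to split it into a noise fluctuation and a quenched fluctuation. Write $P=P_{\bY_{R,S},\bZ}$, $Q=P_{\bW,\bW'}\otimes P_{\bZ}$, and
\begin{align*}
F\coloneqq\log\frac{\dd P}{\dd Q}(\bY_{R,S},\bZ)=\log\int\exp\!\big(\mathcal H(\bx)\big)\,\dd P_{\bX\mid\bZ}(\bx\mid\bZ),
\end{align*}
with Hamiltonian
\begin{align*}
\mathcal H(\bx)=\langle \bx R^{1/2},\bY_R\rangle-\tfrac12\|\bx R^{1/2}\|_\fro^2+\big\langle\tfrac1{\sqrt n}\bx^{\otimes2}S^{1/2},\bY'_S\big\rangle-\tfrac1{2n}\|\bx^{\otimes2}S^{1/2}\|_\fro^2,
\end{align*}
so that $V(P_{\bY_{R,S},\bZ}\,\|\,P_{\bY_{0,0},\bZ})=\var_P(F)$. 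Under $P$ the randomness is $(\bX,\bZ,\bW,\bW')$, with the $n$ rows $(X_i,Z_i)$ i.i.d., $\|X_i\|\le\sqrt d\,\rho$ a.s., and $(\bW,\bW')$ independent standard Gaussian matrices. The starting identity is the law of total variance,
\begin{align*}
\var(F)=\ex{\var(F\mid\bX,\bZ)}+\var\!\big(\ex{F\mid\bX,\bZ}\big),
\end{align*}
and I expect the first (noise) term to produce the summand $nd\rho^2(\|R\|+d\rho^2\|S\|)$ and the second (quenched) term the summand $3nd^2\rho^4(\|R\|+2d\rho^2\|S\|)^2$.

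For the noise term, conditionally on $(\bX,\bZ)$ the map $(\bW,\bW')\mapsto F$ is a smooth function of standard Gaussians, so the Gaussian Poincar\'e inequality gives $\var(F\mid\bX,\bZ)\le \ex{\|\partial_{\bW}F\|_\fro^2+\|\partial_{\bW'}F\|_\fro^2\mid\bX,\bZ}$. Differentiating through the observations, the gradients are posterior means, $\partial_{\bW}F=\ex{\bX\mid\bY_{R,S},\bZ}\,R^{1/2}$ and $\partial_{\bW'}F=\tfrac1{\sqrt n}\ex{\bX^{\otimes2}\mid\bY_{R,S},\bZ}\,S^{1/2}$. Using $\|AB\|_\fro\le\|A\|_\fro\|B\|$, conditional Jensen, the identity $\|\bX^{\otimes2}\|_\fro=\|\bX\|_\fro^2$, and $\|\bX\|_\fro^2=\sum_i\|X_i\|^2\le nd\rho^2$ (the bound on $\|X_i\|$ passes to the posterior, which is supported on the prior's support), one gets $\|\partial_{\bW}F\|_\fro^2\le nd\rho^2\|R\|$ and $\|\partial_{\bW'}F\|_\fro^2\le\tfrac1n(nd\rho^2)^2\|S\|=nd^2\rho^4\|S\|$, hence $\ex{\var(F\mid\bX,\bZ)}\le nd\rho^2\|R\|+nd^2\rho^4\|S\|=nd\rho^2(\|R\|+d\rho^2\|S\|)$, which is exactly the first summand.

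For the quenched term, set $g(\bX,\bZ)\coloneqq\ex{F\mid\bX,\bZ}$. Since the rows $(X_i,Z_i)$ are independent, the Efron--Stein inequality gives $\var(g)\le\tfrac12\sum_{i=1}^n\ex{(g-g^{(i)})^2}$, where $g^{(i)}$ resamples the $i$-th row, so it suffices to prove $\ex{(g-g^{(i)})^2}\le 6\kappa^2$ with $\kappa\coloneqq d\rho^2(\|R\|+2d\rho^2\|S\|)$, since then $\tfrac12\sum_i\ex{(g-g^{(i)})^2}\le 3n\kappa^2=3nd^2\rho^4(\|R\|+2d\rho^2\|S\|)^2$. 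Integrating out the noise one writes $g=\mathbb{E}_N\!\left[\log\mathbb{E}_{\bX'\mid\bZ}\!\left[\exp(-\langle\Delta,N\rangle-\tfrac12\|\Delta\|_\fro^2)\right]\right]$, where $\Delta=\mathrm{sig}(\bX)-\mathrm{sig}(\bX')$ with $\mathrm{sig}(\bx)\coloneqq(\bx R^{1/2},\tfrac1{\sqrt n}\bx^{\otimes2}S^{1/2})$, $\bX'$ is a posterior replica, and $N$ is standard Gaussian. Resampling $(X_i,Z_i)$ affects $g$ in two ways. First, the planted signal shifts by $\delta\coloneqq\mathrm{sig}(\bX)-\mathrm{sig}(\bX^{(i)})$, supported on row $i$ of the linear block and the $2n-1$ rows of the tensor block containing index $i$; because the tensor block carries the factor $n^{-1/2}$, one has $\|\delta\|_\fro^2\le 4d\rho^2\|R\|+8d^2\rho^4\|S\|=O(\kappa)$, the inner products of $\delta$ with the part of $\Delta$ on those same rows are also $O(\kappa)$, and the term linear in $N$ integrates to zero by the Gaussian exponential-martingale identity, so this contributes $O(\kappa)$. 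Second, the row-$i$ factor $P_{X\mid Z}(\cdot\mid z_i)$ of the posterior is replaced, i.e.\ the law of $x_i$ is reweighted on the bounded set $\{\|x_i\|\le\sqrt d\,\rho\}$; isolating the $d+(2n-1)d^2$ observation coordinates touched by $x_i$ and using again the $n^{-1/2}$ scaling, the Gaussian piece in the resulting log-ratio of partition functions has variance $O(\kappa)$, so after $\mathbb{E}_N[\cdot]$ its contribution is $O(\kappa)$ as well. Collecting the two gives $|g-g^{(i)}|=O(\kappa)$, and tracking the constants yields $\var(g)\le 3nd^2\rho^4(\|R\|+2d\rho^2\|S\|)^2$; adding the noise term completes the proof.

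The step I expect to be the main obstacle is the second of the two changes above (resampling $Z_i$): since $Z_i$ is \emph{not} assumed bounded, the reweighting $P_{X\mid Z}(\cdot\mid z_i)\mapsto P_{X\mid Z}(\cdot\mid z_i')$ cannot be handled by a worst-case comparison of partition functions — such a bound would blow up like $\sqrt n$ because index $i$ appears in $\Theta(n)$ rows of $\bX^{\otimes2}$. The resolution is structural: $Z_i$ enters $F$ only through the row-$i$ posterior factor, evaluated on the bounded set $\{\|x_i\|\le\sqrt d\,\rho\}$, and the $n^{-1/2}$ normalization of the tensor block makes the Gaussian noise appearing in the relevant log-ratio have $O(1)$-in-$n$ variance; one must therefore integrate the noise out \emph{before} estimating, and keep the cancellations from the exponential-martingale identity precise enough to land the stated constant $3$. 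Everything else (the Poincar\'e and Efron--Stein inputs and the moment bounds) is routine.
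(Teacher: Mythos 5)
Your overall architecture coincides with the paper's: the noise contribution is handled by the Gaussian Poincar\'e inequality exactly as in the paper (and your bound $nd\rho^2(\|R\|+d\rho^2\|S\|)$ for that piece is correct), and the quenched contribution is handled by Efron--Stein. The one structural difference is that you merge what the paper treats as two separate terms --- it uses the three-way decomposition $\var(F)=\ex{(F-\ex{F\mid\bX,\bZ})^2}+\ex{(\ex{F\mid\bX,\bZ}-\ex{F\mid\bZ})^2}+\ex{(\ex{F\mid\bZ}-\ex{F})^2}$ and applies Efron--Stein separately to the $\bX$-fluctuation (resampling $X_i$ with $Z_i$ fixed) and to the $\bZ$-fluctuation (resampling $Z_i$) --- whereas you resample the pair $(X_i,Z_i)$ in one shot. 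That merge is legitimate in principle, but it forces you to control both effects inside a single squared increment.

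The genuine gap is exactly the step you flag as the main obstacle: the change of the row-$i$ prior factor from $P_{X\mid Z}(\cdot\mid Z_i)$ to $P_{X\mid Z}(\cdot\mid Z_i')$. Your resolution --- ``the Gaussian piece in the resulting log-ratio of partition functions has variance $O(\kappa)$, so after $\mathbb{E}_N[\cdot]$ its contribution is $O(\kappa)$'' --- is not an argument. The noise enters that log-ratio through Hamiltonian terms such as $\langle\bW',\tfrac{1}{\sqrt n}\bu^{\otimes2}S^{1/2}\rangle$, which depend on the integration variable $\bu$; a log-ratio of partition functions taken over two different priors is controlled by a supremum of the Hamiltonian difference over the support of the prior, not by the variance of any single Gaussian, and taking $\mathbb{E}_N$ of such a supremum is not $O(\sqrt{\mathrm{variance}})$ without a chaining argument that would bring in extra dimension-dependent factors. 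The paper sidesteps this entirely: it writes $\ex{F\mid\bZ=\bz_1}-\ex{F\mid\bZ=\bz_2}=D(P_{\bY_1}\,\|\,P_{\bW})-D(P_{\bY_2}\,\|\,P_{\bW})$, re-expresses the difference as a log-ratio with a \emph{common} prior plus a term $-D(P_{\bY_2}\,\|\,P_{\bY_1})\le 0$ that is simply dropped, thereby reducing the $Z_i$-resampling to the same planted-signal interpolation used for the $X_i$-resampling (this one-sidedness is also why the paper invokes the positive-part form of Efron--Stein for that term). Without this cancellation, or an equivalent device, your claimed bound $\ex{(g-g^{(i)})^2}\le 6\kappa^2$ is unproven; with it, your merged decomposition would go through, though the constant would still need checking, since combining the signal-shift and prior-reweighting pieces via $(a+b)^2\le 2a^2+2b^2$ must land at or below $6\kappa^2$ to recover the stated factor of $3$.
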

\begin{proof}
To simplify notation we define $\overline{\bW} = (\bW , \bW')$. The free energy is given by
\begin{align}
F &\coloneqq \log \frac{ \dd P_{\bY_{R,S} , \bZ}}{ \dd (P_{\bY_{0,0}, \bZ})} (\bY_{R,S} ,  \bZ) . 
\end{align}
Expressing $\bY$ as a function of the tuple $(\overline{\bW} , \bX)$ gives an alternative representation
\begin{align*}
F&  =   \log \int \exp\left\{  - \cH_{\bu}(\overline{\bW}, \bX) \right\}  \, \dd P_{\bX \mid \bZ}(\bu \mid \bZ), 
\end{align*} 
where the ``Hamiltonian'' is given by 
\begin{align*} 
\cH_{\bu}(\overline{\bW}, \bX) &\coloneqq  \frac{1}{2} \|\bu R^{1/2} \|_\fro^2  + \frac{1}{2} \|\bu^{\otimes 2}  ( \tfrac{1}{n} S)^{1/2}  \|_\fro^2   -  \langle \bW ,  \bu R^{1/2}  \rangle - \langle \bW'  , \bu^{\otimes 2} (\tfrac{1}{n}  S)^{1/2} \rangle  \notag \\ 
& \quad  -  \langle \bX,  \bu R  \rangle  -  \langle \bX^{\otimes 2} ,  \bu^{\otimes 2}  \tfrac{1}{n} S \rangle .
\end{align*} 
Note that the gradient of $F$ with respect to  $( \overline{\bW}, \bX)$ satisfies
\begin{align}
\nabla F  & =  -   \int \nabla  \cH_{\bu}(\overline{\bW}, \bX) \, \dd \mu(\bu)   \label{eq:Fgradient}
\end{align} 
where $\mu$ is the probability measure defined by 
\begin{align*}
\dd \mu(\bu) \propto \exp\{ -\cH_{\bu}( \overline{\bW}, \bX)\} \, \dd P_{\bX \mid \bZ}(\bu \mid \bZ).
\end{align*}

We use the following decomposition of the variance
\begin{align}
\var(F) & = \ex{ \left( F - \ex{ F \mid \bX, \bZ}\right)^2}  + \ex{ \left(  \ex{ F \mid \bX, \bZ} - \ex{ F \mid \bZ} \right)^2}  + \ex{ \left(  \ex{ F \mid \bZ} - \ex{ F} \right)^2}. \label{eq:VarF_decomp}
\end{align} 
The terms on the right-hand side correspond to the variation with respect to the noise $\overline{\bW}$, the signal $\bX$, and the side information $\bZ$, respectively.

To bound the first term in \eqref{eq:VarF_decomp} we use the Gaussian Poincar\'{e} inequality \cite[Theorem~3.20]{boucheron:2013}, which yields
\begin{align}
\ex{ \left(F - \ex{ F \mid \bX, \bZ }\right)^2} &  \le \ex{  \|\nabla_{\bW} F \|_\fro^2} +  \ex{  \|\nabla_{\bW'}   F    \|_\fro^2} \notag \\
& = \ex{  \left\| \int \bu  R^{1/2} \, \dd \mu(\tilde{\bx})  \right\|_\fro^2} + \frac{1}{n}   \ex{ \left \| \int \bu^{\otimes 2}  S^{1/2}  \, \dd \mu(\tilde{\bx})   \right  \|_\fro^2} \notag \\
& \le  \ex{ \int  \left\|  \bu  \right\|_\fro^2   \, \dd \mu(\bu) } \|R\| +  \frac{1}{n} \ex{ \int \left \|  \bu \right  \|_\fro^4   \, \dd \mu(\bu)  }  \|S\|  \notag \\
& \le  n d \rho^2 \|R\|  + n d^2 \rho^4 \|S\|. \label{eq:F_W_f}
\end{align}
The last step  follows from the assumption $\|X_i\| \le \sqrt{d}\, \rho$ almost surely, and thus every $\bu$ in the support of $\mu$ satisfies $\|\bu\|_F \le \sqrt{ n d} \rho$. 

For the second term in \eqref{eq:VarF_decomp} we use the Efron-Stein inequality~\cite[Theorem~3.1]{boucheron:2013}, which yields
\begin{align}
\ex{ \left(  \ex{ F \mid \bX, \bZ} - \ex{ F \mid \bZ} \right)^2} 
  & \le  \frac{1}{2} \sum_{i=1}^n \ex{ \left(  \ex{ F \mid \bX, \bZ} - \ex{ F^{(i)}  \mid \bX^{(i)}, \bZ } \right)^2}  \notag \\
& \le \frac{1}{2} \sum_{i=1}^n \ex{(  F- F^{(i)} )^2},  \label{eq:Fstein1}
\end{align}
where  $F^{(i)}$ and $\bX^{(i)}$ are obtained by replacing the $i$-th row of $X_i$ with an independent copy $X_i'$. For $t \in [0,1]$, let $\bX^{(i)}(t) = (1-t) \bX  + t \bX^{(i)}$ and define the interpolating free energy
\begin{align*}
F_t^{(i)}&  =   \log \int \exp\{ -\cH_{\tilde{\bx}}( \overline{\bW}, \bX^{(i)}(t)  ) \, \dd P_{\bX \mid \bZ}(\tilde{\bx} \mid \bZ),
\end{align*} 
such that $ F^{(i)}_1  = F$ and $F^{(i)}_0 = F^{(i)}$. Letting $\cT$ be the linear operator defined in Lemma~\ref{lem:lin_op}, and using the fact that $\|\cT(A^\top B)\| \le  \| A^\top B\|_* \|S\| \le \|A\|_\fro \|B\|_\fro \|S\|$ leads to 
\begin{align*}
\frac{ \dd }{ \dd t} F_t^{(i)} & =  \langle (X_i - X_i' ) u_i^\top ,   ( R +\tfrac{2}{n} \cT([ \bX^{(i)}(t)]^\top \bu) )  \rangle  \\
& \le   \| X_i - X_i' \| \|  u_i\|   \left[ \|R\|  +\tfrac{2}{n} \|\bX^{(i)}(t)\|_\fro \|\bu\|_\fro   \|S\|  \right]. 
\end{align*}
Since $\|u_i\| \le \sqrt{d} \rho$ and $\|\bX^{(i)}(t)\|_\fro,  \|\bu\|_\fro \le \sqrt{n d } \rho$ almost surely it follows that
\begin{align}
 F -  F^{(i)}  \le \|X_i - X_i' \|  \sqrt{d }\rho  \left( \|R\|  + 2 d \rho^2  \|S\|   \right). \label{eq:F_to_Fi}
 \end{align} 
Applying the same argument for $F^{(i)} - F$ and then squaring gives
\begin{align}
\ex{ ( F- F^{(i)} )^2}  \le 2   \left( d \rho^2 \|R\|  + 2 d^2 \rho^4  \|S\|   \right)^2, \label{eq:F_to_Fi_square}
\end{align}
where we have used the fact that $\ex{ \|X_i - X_i'\|^2} \le \ex{ \|X_i\|^2} \le \sqrt{d} \rho$. 

Now we consider the third term in \eqref{eq:VarF_decomp}. For $\bz_1, \bz_2 \in \cZ$ let  $(\bX_1, \bX_2)$ be drawn from a joint distribution  with marginals $P_{\bX \mid \bZ =\bz_1}$ and $P_{\bX \mid \bZ =\bz_2}$,  and let $\bY_k$ be the observations associated  with  tuple $(\overline{\bW}, \bX_k)$ for $k =1,2$. Then, 
\begin{align}
 \ex{ F \mid \bZ = \bz_1}  - \ex{ F \mid \bZ = \bz_2} 
&= D( P_{\bY_1} \, \| \, P_{\bW} )  - D( P_{\bY_2} \, \| \, P_{\bW} )  \notag \\
& =   \log \frac{ \int \exp\left\{  - \cH_{\bu}( \overline{\bW}, \bX_1) \right\}  \, \dd P_{\bX \mid \bZ = \bz_1}(\bu)}{  \int \exp\left\{  - \cH_{\bu}( \overline{\bW}, \bX_2) \right\}  \, \dd P_{\bX \mid \bZ = \bz_1}(\bu)}  \notag \\
& \quad - D( P_{\bY_2} \, \| \, P_{\bY _1} ). \label{eq:Fi_diff}
\end{align}
Using a slightly different version of the Efron-Stein inequality, we have
\begin{align*}
\ex{ \left(  \ex{ F \mid \bZ} - \ex{ F} \right)^2} & \le \sum_{i=1}^n  \ex{ \left(  \ex{ F  - F^{(i)} \mid \bZ, Z_i'  }   \right)_+^2}
\end{align*}
where $(x)_+ = \max(x,0)$ and  $F^{(i)}$ is obtained by replacing the $i$-th entry $Z_i$ with an independent copy $Z_i'$. Let $\bX$ be drawn according to the conditional distribution given $\bZ$ and let  $\bX^{(i)}$  be obtained by replacing the $i$-th row $X_i$ with an independent vector $X_i'$ drawn according to the conditional distribution given by $Z_i'$. From  \eqref{eq:Fi_diff}, it then follows that
\begin{align*}
\ex{ F  - F^{(i)} \mid \bZ, Z_i'  } 
&\le \ex{   \log \frac{ \int \exp\left\{  - \cH_{\bu}( \overline{\bW}, \bX) \right\}  \, \dd P_{\bX \mid \bZ }(\bu \mid \bZ)}{  \int \exp\left\{  - \cH_{\bu}( \overline{\bW}, \bX^{(i)} ) \right\}  \, \dd P_{\bX \mid \bZ }(\bu \mid \bZ)} \mid \bZ, Z_i'}  .
\end{align*} 
This  term can be bounded using the same steps as in \eqref{eq:F_to_Fi}, and this leads to
\begin{align*}
\ex{ \left(  \ex{ F \mid \bZ} - \ex{ F} \right)^2}  \le  2n  \left( d \rho^2 \|R\|  + 2 d^2 \rho^4 \|S\|\right)^2. 
\end{align*} 
Combining this bound with the  decomposition \eqref{eq:VarF_decomp} and the bounds in \eqref{eq:F_W_f},   \eqref{eq:Fstein1} and \eqref{eq:F_to_Fi_square} completes the proof. 
\end{proof}

\bibliographystyle{IEEEtran}
\bibliography{matrix_tensor_product_arxiv.bbl}

\begin{thebibliography}{10}
\providecommand{\url}[1]{#1}
\csname url@samestyle\endcsname
\providecommand{\newblock}{\relax}
\providecommand{\bibinfo}[2]{#2}
\providecommand{\BIBentrySTDinterwordspacing}{\spaceskip=0pt\relax}
\providecommand{\BIBentryALTinterwordstretchfactor}{4}
\providecommand{\BIBentryALTinterwordspacing}{\spaceskip=\fontdimen2\font plus
\BIBentryALTinterwordstretchfactor\fontdimen3\font minus
  \fontdimen4\font\relax}
\providecommand{\BIBforeignlanguage}[2]{{%
\expandafter\ifx\csname l@#1\endcsname\relax
\typeout{** WARNING: IEEEtran.bst: No hyphenation pattern has been}%
\typeout{** loaded for the language `#1'. Using the pattern for}%
\typeout{** the default language instead.}%
\else
\language=\csname l@#1\endcsname
\fi
#2}}
\providecommand{\BIBdecl}{\relax}
\BIBdecl

\bibitem{johnstone:2001a}
I.~M. Johnstone, ``On the distribution of the largest eigenvalue in principal
  components analysis,'' \emph{Annals of Statistics}, vol.~29, no.~2, pp.
  295--327, 2001.

\bibitem{johnstone:2009}
I.~M. Johnstone and A.~Y. Lu, ``On consistency and sparsity for principal
  components analysis in high dimensions,'' \emph{Journal of the American
  Statistical Association}, vol. 104, no. 486, pp. 682--703, 2009.

\bibitem{zou:2006}
H.~Zou, T.~Hastie, and R.~Tibshirani, ``Sparse principal component analysis,''
  \emph{Journal of Computational and Graphical Statistics}, vol.~15, no.~2, pp.
  265--286, 2006.

\bibitem{lesieur:2016}
T.~Lesieur, C.~de~Bacco, J.~Banks, F.~Krzakala, C.~Moore, and L.~Zdeborov\'a,
  ``Phase transitions and optimal algorithms in high-dimensional gaussian
  mixture clustering,'' in \emph{Proc.\ Annual Allerton Conf.\ on Commun.,
  Control, and Comp.}, Monticello, IL, Sep. 2016, pp. 601--608.

\bibitem{banks:2018}
J.~Banks, C.~Moore, N.~Verzelen, R.~Vershynin, and J.~Xu,
  ``Information-theoretic bounds and phase transitions in clustering, sparse
  {PCA}, and submatrix localization,'' \emph{IEEE Trans.\ Inform.\ Theory},
  vol.~64, no.~7, pp. 4872--4894, Jul. 2018.

\bibitem{decelle:2011a}
A.~Decelle, F.~Krzakala, C.~Moore, and L.~Zdeborov\'a, ``Asymptotic analysis of
  the stochastic block model for modular networks and its algorithmic
  applications,'' \emph{Physitcal Review Eq}, vol.~84, no.~6, Dec. 2011.

\bibitem{abbe:2018}
E.~Abbe, ``Community detection and stochastic block models: {R}ecent
  developments,'' \emph{Journal of Machine Learning Research}, vol.~18, no.
  177, pp. 1--86, 2018.

\bibitem{korada:2010}
S.~B. Korada and N.~Macris, ``Tight bounds on the capicty of binary input
  random {CDMA} systems,'' \emph{IEEE Trans.\ Inform.\ Theory}, vol.~56,
  no.~11, pp. 5590--5613, Nov. 2010.

\bibitem{lesieur:2015}
T.~Lesieur, F.~Krzakala, and L.~Zdeborov\'a, ``Phase transitions in sparse
  {PCA},'' in \emph{Proc.\ IEEE Int.\ Symp.\ Inform.\ Theory}, Hong Kong, Jun.
  2015, pp. 1635--1639.

\bibitem{reeves:2016a}
G.~Reeves and H.~D. Pfister, ``The replica-symmetric prediction for compressed
  sensing with {G}aussian matrices is exact,'' in \emph{Proc.\ IEEE Int.\
  Symp.\ Inform.\ Theory}, Barcelona, Spain, Jul. 2016, pp. 665 -- 669.

\bibitem{barbier:2016}
J.~Barbier, M.~Dia, N.~Macris, and F.~Krzakala, ``The mutual information in
  random linear estimation,'' in \emph{Proc.\ Annual Allerton Conf.\ on
  Commun., Control, and Comp.}, Monticello, IL, 2016.

\bibitem{reeves:2019c}
G.~Reeves and H.~D. Pfister, ``The replica-symmetric prediction for random
  linear estimation with {G}aussian matrices is exact,'' \emph{IEEE Trans.\
  Inform.\ Theory}, vol.~65, no.~4, pp. 2252--2283, Apr. 2019.

\bibitem{barbier:2019}
J.~Barbier, F.~Krzakala, N.~Macris, L.~Miolane, and L.~Zdeborov\'a, ``Optimal
  errors and phase transitions in high-dimensional generalized linear models,''
  \emph{Proceedings of the National Academy of Sciences}, vol. 116, no.~12, pp.
  5451--5460, Mar. 2019.

\bibitem{deshpande:2017}
Y.~Deshpande, E.~Abbe, and A.~Montanari, ``Asymptotic mutual information for
  the balanced binary stochastic block model,'' \emph{Information and
  Inference}, vol.~6, no.~2, pp. 125--170, Jun. 2017.

\bibitem{krzakala:2016}
F.~Krzakala, J.~Xu, and L.~Zdeborov\'{a}, ``Mutual information in rank-one
  matrix estimation,'' in \emph{Proc.\ IEEE Inform.\ Theory Workshop}, 2016.

\bibitem{barbier:2016a}
J.~Barbier, M.~Dia, N.~Macris, F.~Krzakala, T.~Lesieur, and L.~Zdeborov\'a,
  ``Mutual information for symmetric rank-one matrix estimation: {A} proof of
  the replica formula,'' in \emph{Advances in Neural Information Processing
  Systems (NIPS)}, vol.~29, Barcelona, Spain, 2016, pp. 424--432.

\bibitem{lelarge:2018}
M.~Lelarge and L.~Miolane, ``Fundamental limits of symmetric low-rank matrix
  estimation,'' \emph{Probability Theory and Related Fields}, vol. 173, no.~3,
  pp. 859--929, 2019.

\bibitem{miolane:2017a}
L.~Miolane, ``Fundamental limits of low-rank matrix estimation: the
  non-symmetric case,'' 2017, [Online]. {A}vailable
  \url{https://arxiv.org/abs/1702.00473}.

\bibitem{reeves:2017e}
G.~Reeves, ``Additivity of information in multilayer networks via additive
  {G}aussian noise transforms,'' in \emph{Proc.\ Annual Allerton Conf.\ on
  Commun., Control, and Comp.}, Monticello, IL, 2017, [Online]. {A}vailable
  \url{https://arxiv.org/abs/1710.04580}.

\bibitem{alaoui:2018}
A.~E. Alaoui and F.~Krzakala, ``Estimation in the spiked {W}igner model: A
  short proof of the replica formula,'' 2018, [Online]. {A}vailable
  \url{https://arxiv.org/abs/1801.01593}.

\bibitem{mourrat:2018}
J.-C. Mourrat, ``Hamilton-{J}acobi equations for mean-field disordered
  systems,'' 2018, [Online]. {A}vailable
  \url{https://arxiv.org/abs/1811.01432}.

\bibitem{mourrat:2020}
------, ``Hamilton-{J}acobi equations for finite-rank matrix inference,''
  \emph{The Annals of Applied Probability}, vol.~30, no.~2, pp. 2234--2260,
  2020.

\bibitem{reeves:2019d}
G.~Reeves and H.~D. Pfister, ``Understanding phase transitions via mutual
  information and mmse,'' 2019, [Online]. {A}vailable
  \url{https://arxiv.org/abs/1907.02095}.

\bibitem{reeves:2019ab}
G.~Reeves, V.~Mayya, and A.~Volfovsky, ``The geometry of community detection
  via the mmse matrix,'' 2019, [Online]. {A}vailable
  \url{https://arxiv.org/pdf/1907.02496.pdf}.

\bibitem{reeves:2019a}
------, ``The geometry of community detection via the mmse matrix,'' in
  \emph{Proc.\ IEEE Int.\ Symp.\ Inform.\ Theory}, Paris, France, Jul. 2019.

\bibitem{mayya:2019}
V.~Mayya and G.~Reeves, ``Mutual information in community detection with
  covariate information and correlated networks,'' in \emph{Proc.\ Annual
  Allerton Conf.\ on Commun., Control, and Comp.}, Monticello, IL, Sep. 2019,
  pp. 602--607.

\bibitem{barbier:2019c}
J.~Barbier and N.~Macris, ``{0-1} phase transitions in sparse spiked matrix
  estimation,'' 2019, [Online]. {A}vailable
  \url{https://arxiv.org/pdf/1911.05030.pdf}.

\bibitem{aubin:2019}
B.~Aubin, B.~Loureiro, A.~Maillard, F.~Krzakala, and L.~Zdeborov{\'a}, ``The
  spiked matrix model with generative priors,'' in \emph{Advances in Neural
  Information Processing Systems}, 2019.

\bibitem{barbier:2019e}
J.~Barbier, C.~Luneau, and N.~Macris, ``Mutual information for low-rank
  even-order symmetric tensor factorization,'' \emph{Information and Inference:
  A Journal of the IMA}, Sep. 2020, [Online]. {A}vailable
  \url{https://arxiv.org/abs/1904.04565}.

\bibitem{mezard:2009}
M.~M\'{e}zard and A.~Montanari, \emph{Information, physics, and
  computation}.\hskip 1em plus 0.5em minus 0.4em\relax Oxford University Press,
  2009.

\bibitem{talagrand:2011}
M.~Talagrand, \emph{Mean Field Models for Spin Glasses, Volumn I: Basic
  Examples}.\hskip 1em plus 0.5em minus 0.4em\relax Berlin, Heidelberg:
  Springer, 2011.

\bibitem{talagrand:2011a}
------, \emph{Mean Field Models for Spin Glasses, Volume II: Advanced
  Replica-Symmetry and Low Temperature}.\hskip 1em plus 0.5em minus 0.4em\relax
  Berlin, Heidelberg: Springer, 2011.

\bibitem{lesieur:2015b}
T.~Lesieur, F.~Krzakala, and L.~Zdeborov\'{a}, ``{MMSE} of probabilistic
  low-rank matrix estimation: {U}niversality with respect to the output
  channel,'' in \emph{Proc.\ Annual Allerton Conf.\ on Commun., Control, and
  Comp.}, 2015, pp. 680--687.

\bibitem{lesieur:2017}
T.~Lesieur, F.~Krzakala, and L.~Zdeborov\'a, ``Constrained low-rank matrix
  estimation: {P}hase transitions, approximate message passing and
  applications,'' \emph{Journal of Statistical Mechanics: Theory and
  Experiment}, vol. 2017, no.~7, Jul. 2017.

\bibitem{barbier:2018}
J.~Barbier and N.~Macris, ``The adaptive interpolation method: a simple scheme
  to prove replica formulas in bayesian inference,'' \emph{Probability Theory
  and Related Fields}, vol. 174, no.~3, pp. 1133--185, 2019.

\bibitem{barbier:2019f}
------, ``The adaptive interpolation method for proving replica formulas.
  applications to the curie{\textendash}weiss and wigner spike models,''
  \emph{Journal of Physics A: Mathematical and Theoretical}, vol.~52, no.~29,
  p. 294002, jun 2019.

\bibitem{guerra2002}
F.~Guerra and F.~Toninelli, ``The thermodynamic limit in mean field spin glass
  models,'' \emph{Communications in Mathematical Physics}, vol. 230, no.~1, pp.
  71--79, 2002.

\bibitem{guerra:2003a}
F.~Guerra, ``Broken replica symmetry bounds in the mean field spin glass
  model,'' \emph{Communications in Mathematical Physics}, vol. 233, pp. 1--12,
  2003.

\bibitem{weissman:2010}
T.~Weissman, ``The relationship between causal and noncausal mismatched
  estimation in continuous-time {AWGN} channels,'' \emph{IEEE Trans.\ Inform.\
  Theory}, vol.~56, no.~9, pp. 4256--4273, Sep. 2010.

\bibitem{venkat:2012}
K.~Venkat and T.~Weissman, ``Pointwise relations between information and
  estimation in {G}aussian noise,'' \emph{IEEE Trans.\ Inform.\ Theory},
  vol.~58, no.~10, pp. 6264--6281, Oct. 2012.

\bibitem{baik:2005}
J.~Baik, G.~B. Arous, and S.~P\'{e}ch\'{e}, ``Phase transition of the largest
  eigenvalue for nonnull complex sample covariance matrices,'' \emph{Annals of
  Probability}, vol.~33, pp. 1643--1697, 2005.

\bibitem{amini:2009}
A.~A. Amini and M.~J. Wainwright, ``High-dimensional analysis of semidefinite
  relaxations for sparse principal components,'' \emph{Annals of Statistics},
  vol.~37, no.~5, pp. 2877--2921, 2009.

\bibitem{fletcher:2012b}
A.~K. Fletcher and S.~Rangan, ``Iterative reconstruction of rank-one matrices
  in noise,'' in \emph{Proc.\ IEEE Int.\ Symp.\ Inform.\ Theory}, Cambridge,
  MA, Aug. 2012, pp. 1246--1250.

\bibitem{fletcher:2018}
A.~K. Fletcher and S.~R. and, ``Iterative reconstruction of rank-one matrices
  in noise,'' \emph{Information and Inference}, vol.~7, no.~3, pp. 531--562,
  Jan. 2018.

\bibitem{deshpande:2014}
Y.~Deshpande and A.~Montanari, ``Information-theoretically optimal sparse
  {PCA},'' in \emph{Proc.\ IEEE Int.\ Symp.\ Inform.\ Theory}, Honolulu, HI,
  2014, pp. 2197--2201.

\bibitem{deshpande:2014b}
------, ``Sparse {PCA} via covariance thresholding,'' in \emph{Advances in
  Neural Information Processing Systems}, 2014, pp. 334--432.

\bibitem{perry:2018}
A.~Perry, A.~S. Wein, A.~S. Bandeira, and A.~Moitra, ``Optimality and
  sub-optimality of {PCA} {I}: {S}piked random matrix models,'' \emph{Annals of
  Statistics}, vol.~46, no.~5, pp. 2416--2451, 2018.

\bibitem{barbier:2020a}
J.~Barbier and G.~Reeves, ``Information-theoretic limits of a multiview
  low-rank symmetric spiked matrix model,'' in \emph{Proc.\ IEEE Int.\ Symp.\
  Inform.\ Theory}, 2020, [Online]. {A}vailable
  \url{https://arxiv.org/abs/2005.08017}.

\bibitem{lamarca:2009}
M.~Lamarca, ``Linear precoding for mutual information maximization in {MIMO}
  systems,'' in \emph{Proceedings of the International Conference on Wireless
  Communication Systems}, Tuscany, Italy, Sep. 2009, pp. 26--30.

\bibitem{reeves:2018a}
G.~Reeves, H.~D. Pfister, and A.~Dytso, ``Mutual information as a function of
  matrix {SNR} for linear {G}aussian channels,'' in \emph{Proc.\ IEEE Int.\
  Symp.\ Inform.\ Theory}, Vail, CO, Jun. 2018.

\bibitem{guo:2005a}
D.~Guo, S.~Shamai, and S.~Verd\'{u}, ``Mutual information and minimum
  mean-square error in {G}aussian channels,'' \emph{IEEE Trans.\ Inform.\
  Theory}, vol.~51, no.~4, pp. 1261--1282, Apr. 2005.

\bibitem{palomar:2006}
D.~P. Palomar and S.~Verd\'{u}, ``Gradient of mutual information in linear
  vector {G}aussian channels,'' \emph{IEEE Trans.\ Inform.\ Theory}, vol.~52,
  no.~1, pp. 141--154, Jan. 2006.

\bibitem{payaro:2009}
M.~Payar\'{o} and D.~Palomar, ``Hessian and concavity of mutual information,
  differential entropy, and entropy power in linear vector {G}aussian
  channels,'' \emph{IEEE Trans.\ Inform.\ Theory}, vol.~55, no.~8, pp.
  3613--3628, 2009.

\bibitem{guo:2011}
D.~Guo, Y.~Wu, S.~Shamai, and S.~Verd\'{u}, ``Estimation in {G}aussian noise:
  {P}roperties of the minimum mean-square error,'' \emph{IEEE Trans.\ Inform.\
  Theory}, vol.~57, no.~4, pp. 2371--2385, Apr. 2011.

\bibitem{polyanskiy:2016}
Y.~Polyanskiy and Y.~Wu, ``Wasserstein continuity of entropy and outer bounds
  for interference channels,'' \emph{IEEE Trans.\ Inform.\ Theory}, vol.~62,
  no.~7, pp. 3992--4002, Jul. 2016.

\bibitem{barbier:2020c}
J.~Barbier, ``Overlap matrix concentration in optimal {B}ayesian inference,''
  \emph{Information and Inference: A Journal of the IMA}, May 2020, [Online].
  {A}vailable \url{https://arxiv.org/pdf/1904.02808.pdf}.

\bibitem{rioul:2011}
O.~Rioul, ``Information theoretic proofs of entropy power inequalities,''
  \emph{IEEE Trans.\ Inform.\ Theory}, vol.~57, no.~1, pp. 33--55, Jan. 2011.

\bibitem{rockafellar:1970}
R.~T. Rockafellar, \emph{Convex Analysis}.\hskip 1em plus 0.5em minus
  0.4em\relax Princeton University Press, 1970.

\bibitem{bauschke:2017}
H.~H. Bauschke and P.~L. Combettes, \emph{Convex Analysis and Monotone Operator
  Theory in {H}ilbert Spaces}, 2nd~ed.\hskip 1em plus 0.5em minus 0.4em\relax
  Springer, 2017.

\bibitem{magnus:2007}
J.~R. Magnus and H.~Neudecker, \emph{Matrix Differential Calculus with
  Applications in Statistics and Economitrics}, 3rd~ed.\hskip 1em plus 0.5em
  minus 0.4em\relax Wiley, 2007.

\bibitem{hartman:2002}
P.~Hartman, \emph{Ordinary Differential Equations}, 2nd~ed.\hskip 1em plus
  0.5em minus 0.4em\relax SIAM, 2002.

\bibitem{steele:2001}
J.~M. Steele, \emph{Stochastic Calculus and Financial Applications}.\hskip 1em
  plus 0.5em minus 0.4em\relax Springer, 2001.

\bibitem{lasota:1970}
A.~Lasota, A.~Strauss, and W.~Walter, ``Infinite systems of differential
  inequalities defined recursively,'' \emph{Journal of Differential Equations},
  vol.~9, no.~1, pp. 93--107, 1970.

\bibitem{milgrom:2002}
P.~Milgrom and I.~Segal, ``Envelope theorems for arbitrary choice sets,''
  \emph{Econometrica}, vol.~70, no.~2, pp. 583--601, Mar. 2002.

\bibitem{boucheron:2013}
S.~Boucheron, G.~Lugosi, and P.~Massart, \emph{Concentration Inequalities: A
  Nonasymptotic Theory of Independence}.\hskip 1em plus 0.5em minus 0.4em\relax
  Oxford University Press, 2013.

\end{thebibliography}

\end{document}